\documentclass[12pt]{article}

\usepackage[latin1]{inputenc}

\usepackage{amstext,amsmath,amssymb,amsfonts,bbm,amsmath,extarrows}
\usepackage{authblk}
\usepackage{caption}
\usepackage{xcolor} 
\usepackage{braket}  
\usepackage{amsthm}   
\usepackage{tikz}
\usepackage{todonotes}

\usetikzlibrary{arrows.meta}
\usetikzlibrary{shapes.misc}

\tikzset{cross/.style={cross out, draw=black, minimum size=2*(#1-\pgflinewidth), inner sep=0pt, outer sep=0pt},cross/.default={1pt}}

\usepackage{hyperref}
\hypersetup{
    colorlinks=false,
    menubordercolor=red,
    linkbordercolor=blue
}

\usepackage{cite}

\allowdisplaybreaks[4]

\usepackage{float}  
\usepackage{subfigure}
\usepackage{subcaption}

\usepackage[lmargin=40pt,rmargin=40pt,tmargin=70pt,bmargin=60pt,headheight=26pt,
]{geometry}

\definecolor{bluegray}{rgb}{0.4, 0.6, 0.8} 

\newcommand{\be}{\begin{equation}}
\newcommand{\ee}{\end{equation}}

\newcommand{\R}{\mathbb{R}}

\newcommand*\diff{\mathop{}\!\mathrm{d}}

\numberwithin{equation}{section}	

\newtheorem{remark}{Remark}

\newtheorem{lemma}{Lemma}
\newtheorem{theorem}{Theorem}

\theoremstyle{remark}

\setuptodonotes{backgroundcolor=white, textcolor=black, size=\footnotesize}

\begin{document}

\title{\bf From path integral quantization to stochastic quantization: a pedestrian's journey}

 \author[1]{Dario Benedetti}
 \author[2]{Ilya Chevyrev}
 \author[3]{Razvan Gurau}

\affil[1]{\normalsize\it 
CPHT, CNRS, \'Ecole polytechnique, Institut Polytechnique de Paris, 91120 Palaiseau, France
\authorcr email: dario.benedetti@polytechnique.edu \authorcr \hfill}
\affil[2]{\normalsize\it SISSA (International School for Advanced Studies), via Bonomea 265, 34136 Trieste, Italy
 \authorcr email: ichevyrev@gmail.com \authorcr \hfill }

 \affil[3]{\normalsize\it Heidelberg University, Institut f\"ur Theoretische Physik, Philosophenweg 19, 69120 Heidelberg, Germany
 \authorcr email: gurau@thphys.uni-heidelberg.de \authorcr \hfill}

\date{}

\maketitle

\hrule\bigskip

\begin{abstract}
We give two novel proofs that the path integral and stochastic quantizations of generic scalar Euclidean quantum field theories are equivalent. Our proofs rely on Taylor interpolations indexed by forests, in the fashion of constructive field theory. The first proof works at the level of individual terms in the Feynman expansion, with the forests appearing as spanning forests in Feynman graphs. The second one works at the level of the path integral and avoids the full expansion of the Feynman perturbation series.
\end{abstract}

\hrule\bigskip

\tableofcontents

 
\section{Introduction}
\label{sec:intro}

In a seminal paper \cite{Parisi:1980ys}, Parisi and Wu argued that the path integral formulation of Euclidean quantum field theory (EQFT) can be obtained as the equilibrium limit of a stochastic dynamic of Langevin type with respect to a fictitious time, in line with the interpretation of EQFT as equilibrium statistical mechanics.
This idea stimulated a great deal of activity in the physics community, mainly in the 1980s \cite{Damgaard_Huffel:87_SQ}, and has received renewed attention in the mathematics community over the last decade. 
In particular, with the advent of new tools in the study of stochastic partial differential equations, such as regularity structures \cite{Hairer14}, paracontrolled distributions \cite{GIP15},
and renormalization group \cite{Kupiainen_16,Duch_25}, the way was opened to a new approach to the rigorous construction and analysis of the EQFT measure for various models, such as $\Phi^4_3$~\cite{MW17Phi43, MoinatWeber20, AK20, GH21, Hairer_Steele_22}. The stochastic methods are complementary to the constructive field theory approach based on the path integral formulation \cite{GlimmJaffe,Rivasseau:1991ub} and, at a technical level,
these two rigorous formulations of EQFT have remained somewhat separated.
One of the aims of this paper is to bring these two approaches closer together.

The contributions of this paper are two novel proofs that the path integral and stochastic quantizations are equivalent.

The first proof works at the level of the Feynman graph expansion in the path integral formulation of a scalar EQFT \cite{Rivasseau:1991ub} and shows that it is equivalent to the tree expansion in stochastic quantization \cite{Parisi:1980ys}. 
Although this result is known in the literature, the currently available proofs are either not direct (using a Fokker--Planck equation) or somewhat sketchy (see \cite[Sec.~3.3]{Damgaard_Huffel:87_SQ} and references therein).
In particular, the classical proof of \cite{Huffel:1985ma} relies on a momentum representation of the covariance, momentum conservation and an induction. Besides the fact that the induction is not trivial, the proof in \cite{Huffel:1985ma} does not apply to theories in which the momentum is not conserved.
In contrast, our first proof proceeds by a Taylor expansion at the level of Feynman amplitudes,
leading to an explicit representation of each graph contribution as a sum over spanning forests weighted by stochastic amplitudes.

Our second proof of equivalence of the two quantization procedures relies on a novel strategy involving a tree-indexed Taylor interpolation directly at the level of the path integral without expressing it as a sum over Feynman graphs. Such Taylor interpolations are the backbone of constructive field theory \cite{Rivasseau:1991ub,salmhofer2007renormalization} and ours is close in spirit to the original Battle--Federbush formula\cite{Battle:1982pv,Battle:1984cj}.
In particular, tree-indexed series similar to those we consider are a first step towards non-perturbative constructions of the measure. 
However, contrary to the usual constructive field theory interpolations which compute the connected expectations, our interpolation computes directly the expectations (that is the full Schwinger functions) of the model.

Both proofs share a number of features:
\begin{itemize}
    \item[-] they work in an arbitrary theory with a positively defined covariance; in particular, contrary to the proof of \cite{Huffel:1985ma}, our arguments do not rely on momentum conservation and accommodate covariances such as $-\Delta + x^2$, as in the Gross--Wulkenhaar model \cite{Grosse:2004yu}, which has recently been treated in stochastic quantization \cite{Song:2025iyg};
    \item[-] they are directly generalizable to curved space;\footnote{The construction of EQFT on curved spaces is an important open problem that has recently seen some advances within the stochastic approach (see \cite{Bailleul:2023wyv} and references therein).}
    \item[-] they are constructive and avoid inductive arguments; in particular, both rely on explicit Taylor interpolations, which makes the link between Feynman-type contributions and forest/tree expansions transparent.
\end{itemize}

We point out that we do not consider renormalization in this paper: our statements should be understood in a formal sense or in settings where the relevant quantities are well-defined.
In theories with divergences, our results would need to be complemented by an appropriate analysis of regularization and subtractions.

\paragraph{Plan of the paper.} In Sec.~\ref{sec:pertQFT}, we begin by recalling the basic settings of EQFT in the path integral and stochastic formulations for a generic scalar field theory and we state our main theorem.
In Sec.~\ref{sec:prelim}, we take a step back and introduce a number of prerequisites on combinatorial structures, Gaussian measures, and diffusion equations.
In Sec.~\ref{sec:pert}, we give precise definitions of the perturbative expansions to all orders 
in the path integral and stochastic settings and we reformulate our
main theorem at the perturbative level in terms of graph amplitudes.
In Sec.~\ref{sec:first_proof}, we prove the perturbative formulation of our theorem. In Sec.~\ref{sec:pathint}, we prove the main theorem as formulated in  Sec.~\ref{sec:pertQFT} non-perturbatively, directly at the level of the path integral.
In the Appendices we provide explicit examples of the perturbative expansions and the Taylor interpolation at lowest orders.

\paragraph{Acknowledgments.}

I.C. acknowledges support from the European Research Council (ERC) via the Starting Grant SQGT 101116964.
R.G. acknowledges support from the Deutsche Forschungsgemeinschaft (DFG, German Research Foundation) under Germany's Excellence Strategy EXC 2181/1 - 390900948 (the Heidelberg STRUCTURES Excellence Cluster).
The authors are grateful to the Mathematisches Forschungsinstitut Oberwolfach (MFO) for its hospitality during the workshop ``Renormalisation and Randomness'' held in September 2025, at which this project was initiated.

\section{Path integral and stochastic quantization}
\label{sec:pertQFT}

We denote $d$ the dimension of the Euclidean space $\mathbb{R}^d$,
and for simplicity we only consider real scalar fields $\phi:\mathbb{R}^d\to \mathbb{R}$.
In order to simplify the notation, we sometimes denote arguments of functions as indices, for instance $\phi_x \equiv \phi(x)$, $C(x,y) = C_{xy}$ and $\delta_{xy} =\delta^d(x-y)$ denotes the Dirac delta in the appropriate dimension; we sometimes denote $ \int d^d x \equiv \int dx \equiv \int_x $ and use the shorthand notation $\phi A \phi \equiv  \int_{xy} \; \phi_x A_{xy} \phi_y$.

\paragraph{Euclidean quantum field theory.} 
EQFT is defined by a formal path integral measure of the form:
\[
     \diff \nu(\phi)  =  [\diff\phi]  \;\frac{ 1 }{ Z(0) } \;e^{-S(\phi)}  \;, \qquad Z(0) = \int [\diff\phi] \;e^{-S(\phi)} \; ,
\]
where the partition function $Z(0)$ is a normalization constant and the action $S(\phi)$ is some real functional. 
The aim of EQFT is to compute the moments ($n$-point functions or correlation functions) of the theory, which are related to physical observables.
In constructive field theory \cite{GlimmJaffe,Rivasseau:1991ub,salmhofer2007renormalization} one is interested in making sense of the above definition and check that the moments have some specific properties, for instance that they obey the Osterwalder--Schrader axioms \cite{Osterwalder:1973dx,Osterwalder:1974tc}.

A common choice is to separate the action into a free part, quadratic in the field, and an interaction:
\be \label{eq:action}
 S(\phi) = \frac{1}{2}\phi A \phi + 
 V(\phi) \;,
\ee
where the operator $A$ in the free part of the action is assumed to be a closed self-adjoint positive-definite operator, whose inverse is the free \emph{covariance} of the theory $C_{xy} = (A^{-1})_{xy}$.

The perturbation, or interaction, $V(\phi)$ is a possibly nonlocal functional of $\phi$:
\begin{equation}\label{eq:geint}
V(\phi) = \sum_{q\ge 1} \frac{1}{q+1} \int_{ x^1 , x^2, \ldots, x^{q+1} } \;  V (x^1, x^2 , \ldots , x^{q+1}) \;
   \phi_{x^1} \ldots \phi_{x^{q+1} } \; ,
\end{equation}
where by a slight abuse of notation we denote by $V$ both the full potential and its kernels; there shall be no confusion because they always appear with their specific arguments.
We assume that the \emph{vertex kernels} $  V (x^1, \dots , x^{q+1} )$ are cyclically symmetric in their arguments so that:\footnote{Of course for a single scalar the kernels can always be fully symmetrized in their arguments.
We also remind that for a generic functional $F(\phi)$, the functional derivative $\frac{\delta F[\phi]}{\delta\phi_x}$ is defined by the relation $\lim_{\epsilon\to 0}\frac{F(\phi+\epsilon f)-F(\phi)}{\epsilon}=\int_x \frac{\delta F[\phi]}{\delta\phi_x} f(x)$,
i.e. it is the gradient of $F$ at $\phi$ w.r.t. to the $L^2(\R^d)$ inner product,
and it satisfies  usual properties such as linearity, product and chain rules, and moreover $\frac{\delta\phi_x}{\delta\phi_y}=\delta_{xy}$.
} 
\begin{equation}\label{eq:geintderiv}
  \frac{\delta V}{ \delta \phi_{x} } = \sum_{q\ge 1}  \int_{ x^2, \ldots, x^{q+1}} \;  V (x, x^2, \dots , x^{q+1}) \; \phi_{x^2} \ldots \phi_{x^{q+1} }  \; .
\end{equation}
For convenience we have included in the interaction only monomials of degree at least 2
in the interaction, ensuring that $\phi=0$ is a solution of the classical equations of motion $(A\phi)_x+ \frac{\delta V}{ \delta \phi_x}  = 0$, but with minimal effort linear terms can also be taken into account.

For trivial vertex kernels $V (x^1, \dots , x^{q+1} )=0 $, we fall back on the free theory which is a normalized Gaussian measure with covariance $C$. A local monomial interaction is obtained when only one vertex kernel is nonvanishing and completely local:
\[
V (x^1, x^2\dots , x^{q+1}) = g \,  \prod_{i=2}^{q+1} \delta^d(x^1-x^i) \;, \qquad V(\phi) =\frac{g}{q+1} \int_x \phi_x^{q+1} \;,
\]
while a derivative coupling is obtained for instance for:
\[
V(x^1,x^2,x^3) = g \, \frac{ \Delta_{x^1} + \Delta_{x^2} + \Delta_{x^3} }{3} \, \delta^d(x^1-x^2)\delta^d(x^1-x^3)
 \;, \qquad V(\phi) = \frac{g}{3} \int_{x} \phi_x^2 \,  \Delta \phi_x \;,
\]
where $\Delta  = \partial_\mu \partial^\mu $ denotes the Laplacian. Note that we have normalized the interaction monomial of degree $q+1$ by a $1/(q+1)$ factor instead of $1/(q+1)!$, which is more common in the physics literature, as this simplifies the combinatorics later. The usual $\phi^4$ model with mass $m^2$ corresponds to:
\[
A= -\Delta+m^2  \;, \qquad V(x^1,x^2,x^3,x^4) = 
  g \, \delta^d(x^1-x^2)\delta^d(x^1-x^3)\delta^d(x^1-x^4) \;. 
\]

\paragraph{Path integral quantization.} In order to compute the correlation functions in the path integral quantization one considers the partition function with source:
\[
Z(J) = \int  [\diff  \phi] \; e^{-S(\phi) +J\phi } \; , \qquad J\phi  \equiv \int_x J_x \phi_x \;,
\]
such that $Z(J) / Z(0)$ is the generating function of the $n$-point  functions of the model:
\[
\Braket{\phi_{x^1} \dots \phi_{x^n}  }
 =\frac{1}{Z(0)} \; \frac{\delta^n Z}{ \delta J_{x^1} \dots \delta J_{x^n} } \Bigg{|}_{J=0} = 
\frac{1}{Z(0)} \int  [\diff \phi]  \;e^{ - S(\phi)}  \; \phi_{x^1} \dots \phi_{x^n} 
 \; .
\]
The aim of constructive field theory \cite{GlimmJaffe,Rivasseau:1991ub,salmhofer2007renormalization} is to trade such ill defined path integral expressions for convergent expansions and subsequently prove they respect the appropriate axioms. As one can make sense of free theories (Gaussian measures), one usually treats the interacting path integral measure as a perturbed Gaussian measure:
\begin{equation}\label{eq:nu}
  \diff \nu(\phi) =\diff \mu_{C} (\phi) \; \frac{1}{Z(0)} \; e^{-V(\phi)} \;, 
\end{equation}
where $\diff \mu_C$ denotes the normalized Gaussian measure with covariance $C = A^{-1}$ (see Sec.~\ref{sec:Gauss}),
and attempts to construct the $n$-point functions in some convergent expansion around the Gaussian case. 

\paragraph{Stochastic quantization.}
The stochastic quantization procedure computes correlation functions by a different method. 
The starting point of stochastic quantization is the stochastic gradient descent equation, which is a non-linear Langevin equation with respect to the fictitious time $t$:
\be \label{eq:Langevin}
\begin{split}
 & \partial_t \Phi_{(t,x)} = -\frac{\delta S}{ \delta \phi_x} \Big{|}_{ \phi_x =  \Phi_{(t,x)} } + \xi_{(t,x)}  \;,\qquad 
 \Phi_{(t_{\rm in},x)} = \Phi^{\rm in}_{x}  \;; \\
 & \braket{\xi_{(t,x)} \; \xi_{(t',x')}}_{\xi} =   2  \; \delta_{tt'} \delta_{xx'} \;,
\end{split}
\ee
where $\delta_{tt'}=\delta(t-t')$ and $\delta_{xx'}=\delta^d(x-x')$ are Dirac deltas in the appropriate dimension, and thus $\xi$ is a white noise with covariance $2$ times the identity. Note that we have set the initial condition of our equation to be $\Phi^{\rm in}_{x} $ at the initial time $t_{\rm in}$: below we will always take the large-time limit which has the effect of washing out this initial condition and leads to $\Phi^{\rm in}_{x}$-independent results. In practice, the large-time limit can be achieved either by taking $t\to \infty$ while keeping $t_{\rm in}$ fixed, or by 
keeping $t$ fixed and taking $t_{\rm in}\to -\infty$. 

The stochastic correlation functions are computed by taking expectation values of the stochastic field $\Phi_{(t,x)}$, solution of \eqref{eq:Langevin}, with respect to the white noise:
\[
 \Braket{\Phi_{(t^1, x^1)} \dots \Phi_{(t^n, x^n)} }_{\xi}  \; .
\]

\begin{remark}
While the Gaussian measure $\diff \mu_{C} (\phi)$ can directly be defined for non-invertible covariances $C$, a fact exploited for example when introducing field copies below  (see Sec.~\ref{sec:Gauss}), the same is not quite true
for the Langevin dynamic.
The standard way to define the dynamic for non-invertible $C$ is to quotient by its null-space and consider the Langevin dynamic in the quotient space.
Equivalently, if we let $C^{-1}_+$ denote the Moore--Penrose pseudoinverse of $C$ and let $P$ denote the orthogonal projection onto the row space (orthogonal complement of the null space) of $C$,
then the Langevin dynamic reads
\[
\partial_t \Phi_{(t,x)} = -C^{-1}_+ \Phi + P \frac{\delta V}{ \delta \phi_x} \Big{|}_{ \phi_x =  \Phi_{(t,x)} } + P \xi_{(t,x)}  \;.
\]
\end{remark}

\subsection{Main result}

The two quantization procedures, stochastic and \`a la path integral are supposed to be equivalent. Our main result is the following theorem. 

\begin{theorem}[Main theorem]\label{thm:main}
The quantum field theoretical $n$-point functions coincide with the stochastic $n$-point functions in the limit of large, coincident, fictitious times: 
\[
\Braket{\phi_{x^1} \dots \phi_{x^n}  } = \lim_{t\to \infty} \Braket{\Phi_{(t, x^1)} \dots \Phi_{(t, x^n)} }_{\xi} 
=\lim_{t_{\rm in}\to -\infty} \Braket{\Phi_{(t, x^1)} \dots \Phi_{(t, x^n)} }_{\xi} \; .
 \]

 Our first result is that this equality holds at all orders in a (formal) perturbative expansion in the vertex kernels.
 
Our second result is that this equality holds at the level of formal path integrals.
\end{theorem}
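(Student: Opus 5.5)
The plan is to treat the two claims of Theorem~\ref{thm:main} separately, since they use different objects.

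\emph{Perturbative claim.} I would expand both sides to all orders in the vertex kernels and match them term by term. On the stochastic side, write \eqref{eq:Langevin} in mild (Duhamel) form. The linear part is $\partial_t + A$, so its propagator is the heat-type kernel $G_{t-s} = \theta(t-s)\, e^{-(t-s)A}$. The free solution is $\Phi^0_{(t)} = G_{t-t_{\rm in}}\Phi^{\rm in} + \int_{t_{\rm in}}^t G_{t-s}\xi_s\,ds$. In the limit $t_{\rm in}\to-\infty$ it has covariance
\[
\int_{-\infty}^{\min(t,t')} e^{-(t-s)A}\, 2\, e^{-(t'-s)A}\, ds = C\, e^{-|t-t'|A},
\]
which equals $C$ at coincident times. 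Iterating the Duhamel formula expresses $\Phi$ as a sum over rooted plane trees. Each vertex of degree $q+1$ carries $-V(x,\dots)$ and an integrated time, each internal line is a retarded $G$, and each leaf is a $\Phi^0$. Taking the noise expectation by Wick's theorem then pairs leaves, producing ``stochastic graphs'': forests of retarded lines, rooted at the external points, together with free lines $C e^{-|t-t'|A}$. On the path-integral side, expanding $e^{-V}$ and applying Wick's theorem with $\diff\mu_C$ gives Feynman graphs whose amplitudes are built from $C$ alone, divided by $Z(0)$. The target identity is that the amplitude of each Feynman graph $G$ (with $n$ external legs) equals a sum over spanning forests $F$ of $G$, each tree rooted at an external vertex, of the corresponding stochastic amplitude.

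\emph{Key step for the perturbative claim.} To prove this identity I would attach a time $t_v$ to every vertex, with external vertices at the common time $t$. For every edge I would use the derivative identity $\partial_{s} \big(C e^{-|s-s'|A}\big) = -\operatorname{sign}(s-s')\, e^{-|s-s'|A}$ for the free covariance. The last expression is, up to the factor $2$ conventions, the retarded propagator $G$. Starting from the amplitude with all vertex times equal to $t$, I would then interpolate vertex times down to $-\infty$ one at a time, via a forest-indexed Taylor formula in the style of Brydges--Kennedy/Abdesselam--Rivasseau. Each derivative either hits a line, turning it into a retarded propagator and hence a tree edge, or the vertex time reaches $-\infty$. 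In that limit every line to it decays except along tree edges, and because the external times are pinned, vacuum components factor out and cancel against $Z(0)$. Combinatorially, the normalization $1/(q+1)$ and the cyclic symmetry should match the choice of which half-edge of a vertex is the ``output'' (root-ward) leg. I expect this to be the main obstacle: the signs, the $\theta$-ordering of times along each tree, and the exact counting of forests versus Duhamel trees all have to be reconciled. I would first verify the lowest orders, such as a single cubic vertex, before trusting the general weights.

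\emph{Path-integral claim.} The plan is to run the same interpolation without expanding into graphs. I would replace the field by time-indexed copies $\phi_{(t_v)}$ with Gaussian covariance $C e^{-|t_v-t_w|A}$; this covariance is degenerate at coincident times, which is why the Remark on non-invertible covariances is needed. One copy is introduced per interaction insertion, and a Battle--Federbush-type tree formula then differentiates with respect to the time separations. Gaussian integration by parts converts each derivative into a $\delta V/\delta\phi$ insertion attached by a retarded $G$. Summed over trees, this reproduces the Duhamel tree expansion of $\langle \Phi_{(t,x^1)}\cdots\Phi_{(t,x^n)}\rangle_\xi$, while the $e^{-V}$ weights on decoupled copies again cancel with $Z(0)$. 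Finally, equality of the $t\to\infty$ and $t_{\rm in}\to-\infty$ limits follows from time-translation invariance of the equation and the noise, since only $t-t_{\rm in}$ enters.
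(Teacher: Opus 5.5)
Your overall architecture is the paper's: a graph-by-graph identity ${\cal A}({\cal G})=\sum_{{\cal F}\prec{\cal G}}A({\cal G},{\cal F})$ from a fictitious-time Taylor interpolation, then the same interpolation on field copies, with decoupled copies reconstituting $Z(0)$. However, the key step of the perturbative part does not work as you describe it. If you lower vertex times one vertex at a time, $\partial_{t_v}$ hits every line at $v$, including lines to internal vertices that are still at time $t$ and not yet attached to a root. This produces a factor $e^{-(t_w-t_v)A}$ whose parent time $t_w$ is interpolated later. At that point $\partial_{t_w}$ gives $-Ae^{-(t_w-t_v)A}$ and the ordering $t_w\ge t_v$ is lost, and neither fits the stochastic rules.

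The missing idea is an interpolation across a cut. At step $p$, \emph{all} vertices outside the reached set $V_{p-1}$ share a single time $t_p$, lowered from $t_{p-1}$. Then $d/dt_p$ acts only on edges from $V_{p-1}$ to its complement, each of the form $e^{-(t_u-t_p)A}/A$.
\begin{itemize}
\item The hit edge becomes the tree edge $e_p=(v^h,v_p^1)$; its far half-edge is relabelled $v_p^1$ using the cyclic symmetry of the kernel.
\item The other edges between $v_p$ and $V_{p-1}$, and the tadpoles at $v_p$, become noise edges.
\item The remaining crossing edges are re-interpolated at the next step.
\item The boundary term at $t_p=-\infty$ simply vanishes, because every component of ${\cal G}$ contains an external vertex. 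At graph level there is no $Z(0)$ to cancel.
\end{itemize}
This produces totally ordered times $t\ge t_1\ge t_2\ge\cdots$. Summing, for a fixed forest, over all linear extensions of its tree partial order leaves $\theta$'s only on tree edges and $e^{-|t_v-t_w|A}/A$ on noise edges. With this mechanism the counting you flag as the main obstacle is automatic: each forest arises once per linear extension, and each unlabelled rooted map has coefficient $1$ on both sides.

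In the path-integral part, the sentence ``summed over trees, this reproduces the Duhamel tree expansion'' hides a second gap. After interpolation and the Hubbard--Stratonovich identification $\phi^i=\int_u\pmb{C}_{t_i,u}\xi_u$, you have recursive trees whose vertices are the unexpanded insertions $-V(\phi^i)$, acted on by $\delta_{\phi^i}e^{-(t_i-t_j)A}\delta_{\phi^j}$. These cannot be matched with the expansion of $\Phi$ around $\Phi=0$ (monomial vertices, $\xi$ at the leaves) without expanding $V$, which just falls back on the perturbative statement.

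The paper proceeds in three steps:
\begin{enumerate}
\item It expands the Langevin solution around the free solution, $\chi_t=\Phi_t-\phi_t=\int_s\pmb{C}_{t,s}(-\delta_\phi V)(\phi_s+\chi_s)$. This gives plane trees whose vertices are $\frac{1}{\ell!}\delta^{\ell+1}V$ evaluated at the free field.
\item It converts these plane trees into recursive trees using Lemma~\ref{lemma:trees}.
\item It proves, by induction on the number of factors, that $\Phi_{(t_0,x^1)}\cdots\Phi_{(t_0,x^n)}$, and hence $F(\Phi_{t_0})$, is a single recursive-tree sum identical to the interpolation output. The induction interlaces the times of each new branch and tracks the branches by injections into the root factors.
\end{enumerate}
This comparison is the heart of the second result, and it is missing from your plan.
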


We give a perturbative reformulation of the first result in Theorem \ref{thm:main_precise} in terms of Feynman diagrams,
which we prove in Sec.~\ref{sec:first_proof}.
As we explain in Sec.~\ref{sec:intro}, while several versions of this result are known \cite{Huffel:1985ma,Damgaard_Huffel:87_SQ},
our setting is, to the best of our knowledge, more general than previously considered, and our proof is new and explicit.

In Sec.~\ref{sec:pathint} we give a proof of the second result of Theorem \ref{thm:main} at the level of functional integrals. In detail, we re-express the field theory path integral defining a correlator as a forest-indexed sum over path integrals which we subsequently trade for a sole white noise path integral by Hubbard-Stratonovich transformations. We then regroup the integrands into solutions of the stochastic PDE thus establishing the equality between the two without expanding the full perturbation theory. 

\section{Preliminaries}
\label{sec:prelim}

We start by some prerequisites: we revisit some combinatorial structures which will be relevant later on, review some facts about Gaussian measures and discuss briefly generic diffusion equations.

\subsection{Combinatorial structures}\label{subsec:combi}

\paragraph{Graphs.}
An \emph{abstract graph}, or simply a graph, $G$ is a pair consisting in a finite set of \emph{vertices} $V(G)$ and a finite collection of \emph{edges}, that is unordered pairs of vertices $E(G) = \{ \{ v,w \} | v,w\in V(G) \}$.
Multiple edges are allowed, that is the same pair $\{ v,w\} $ can appear several times in $E(G)$, as well as self-loops (tadpoles) that is edges which start and end on the same vertex, $\{ v,v \}$.
The set of vertices of $G$ is partitioned into the subset of internal vertices $V^{\rm int}(G)$ and the subset of external ones $V^{\rm ext}(G)$, and all the external vertices are univalent, that is, they belong to exactly one edge.
In principle some of the internal vertices of $G$ can also be univalent, however (due to the fact that our interaction is assumed to not have linear terms) graphs with univalent internal vertices will not arise below.
We denote $d_G(v)$ the \emph{degree} of the vertex $v$ in the graph $G$, that is the number of edges incident to $v$ where a tadpole counts for two incidences.

A \emph{path} from the vertex $a$ to the vertex $b$ is a sequence of edges starting at $a$ and ending at $b$
such that any two consecutive edges share a vertex, $\{a,v_1\} , \{v_1,v_2\}, \dots , \{v_{n-1},v_n\}, \{v_n,b\}$. A graph is called \emph{connected} if any two vertices are connected by a path.
A \emph{cycle} in a graph is a path from a vertex to itself.
A \emph{forest} is a graph with no cycles and a \emph{tree} is a connected forest.
By Cayley's formula, there are $p^{p-2}$ trees over $p$ labeled vertices.

\paragraph{Combinatorial maps.}
A \emph{rooted combinatorial map} \cite{cori1992maps}\footnote{See also
\url{https://ncatlab.org/nlab/show/combinatorial+map} for a brief introduction.}  or rooted embedded graph is a quadruple ${\cal G} = ({\cal D},r,\sigma,\alpha)$ such that (see Fig.~\ref{fig:combimap}):
\begin{itemize}
    \item[-] ${\cal D}$ is a finite set, called the set of \emph{half-edges} (or darts) and $r\in{\cal D}$ is designated as the \emph{root} half-edge,
    \item[-] $\sigma: {\cal D}\to {\cal D}$ is a permutation (that is a bijection),
    \item[-] $\alpha: {\cal D}\to {\cal D}$ is an involution $\alpha^{-1}=\alpha$ with no fixed points, $\alpha(h)\neq h, \forall h\in {\cal D}$,
    \item[-] the group generated by $\sigma$ and $\alpha$ acts transitively on ${\cal D}$, that is for any two half-edges $h_1$ and $h_2$ there exists a finite word $\eta= \prod_{i=1}^q \eta_i$ with $\eta_i \in \{\sigma,\sigma^{-1}, \alpha, \alpha^{-1}\}$ such that $h_2 = \eta(h_1)$.
\end{itemize}

\begin{figure}[ht!]
\begin{center}
\begin{tikzpicture}[scale=0.4]

\begin{scope}[every node/.style={circle, thick,draw,font=\tiny,minimum height=1em}]

     \node (V1) at (0,0) {};
 
     \node (V2) at (12,0) {};

     \node (V3) at (6,4) {};
         
\end{scope}

\begin{scope}[>={Stealth[black]},
              every node/.style={fill=white,circle},
              every edge/.style={draw=black,thick}]
              
    \draw[black, thick,|-] (-8,0) -- (-6,0) node[below] {$r$};
    \draw[black, thick] (V1) -- (0,-2) node[below ] {$v^2$};
    \draw[black, thick] (V1) -- (0,2) node[above] {$v^4$}; 
    \draw[black, thick] (V1) -- (2,0) node[above] {$v^3$}; 
    \draw[black, thick] (V1) -- (-2,0) node[below] {$v^1$};
    \draw[black, thick] (V2) -- (12,2) node[left] {$w^3$}; 
    \draw[black, thick] (V2) -- (12,-2) node[left] {$w^1$}; 
    \draw[black, thick] (V2) -- (14,0) node[right] {$w^2$}; 
    \draw[black, thick]  (V3) -- (6,2) node[left] {$u^1$};   
    \draw[black, thick]  (V3) -- (8,4) node[above] {$u^2$};   

    \draw[dashed,thick] (-6,0) -- (-2,0); 
    \draw[dashed,thick] (0,-2) to[out=-60,in=-90] (12,-2);
    \draw[dashed,thick] (2,0) to[out=0,in=-90] (6,2);
    \draw[dashed,thick] (0,2) to[out=60,in=180] (8,7);
    \draw[dashed,thick] (8,7) to[out=0,in=90] (12,2);
    \draw[dashed,thick] (8,4) to[out=0,in=0] (14,0);
\end{scope}
\end{tikzpicture}

    \end{center}
    \caption{A combinatorial map with half-edges ${\cal D} =\{r,v^1,v^2,v^3,v^4,w^1,w^2,w^3,u^1,u^2\}$, root $r$,
    and permutations, written in cycle notation, $\sigma = (r) (v^1v^2v^3v^4)(w^1w^2w^3)(u^1u^2)$ and $\alpha=(rv^1) (v^2w^1) (v^3u^1)(v^4w^3) (u^2w^2) $. The root $r$ is a fixed point $(r)$ of the  permutation $\sigma$.
    }
    \label{fig:combimap}
\end{figure}
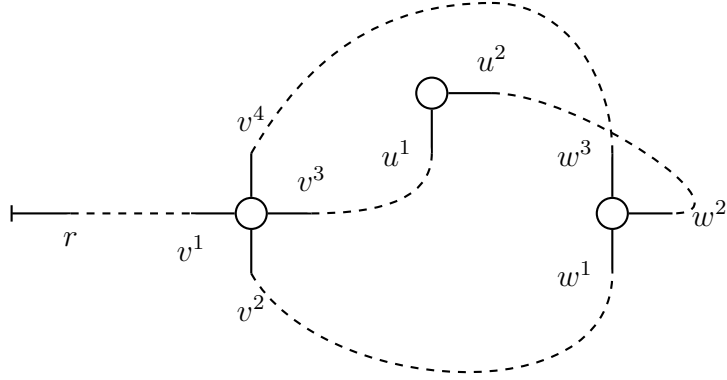

The cycles of the permutation $\sigma$, denoted $(h,\sigma(h),\sigma^2(h), \dots)$, are the \emph{vertices} of the combinatorial map ${\cal G}$, that is 
the permutation $\sigma$ encodes the successor: the half-edge $\sigma(h)$ is the successor of the half-edge $h$ when turning counterclockwise. 
The cycles $(h,\alpha(h))$ of the involution $\alpha$ are the \emph{edges} of ${\cal G}$\footnote{The cycles of $\sigma \alpha$ are called the \emph{faces} of the map and play no role in this paper.}. The transitivity of the group generated by $\sigma$ and $\alpha$ means that one can go from any half-edge to any other one by turning around vertices and stepping from one vertex to another along the edges, that is the map is \emph{connected}.

We will only consider maps such that the permutation $\sigma$ has fixed points and we will designate all such fixed points as external vertices. For maps with external vertices, the root half-edge $r$ is always chosen as a fixed point of $\sigma$.
In this case, we call $(r)$ the \emph{root vertex} and henceforth identify $(r)$ with $r$.

We denote $d_{\cal G}(v)$ the degree of the vertex $v$ in ${\cal G}$, that is the number of half-edges that belong to the cycle $v$. The set of vertices of the map is 
denoted  $V({\cal G})$ and splits into the internal vertices  $V^{\rm int}({\cal G})$ with degree at least two and the univalent external vertices $V^{\rm ext}({\cal G})$. Each internal vertex $v$ has $d_{\cal G}(v)$ ordered half-edges $v=(v^1\dots v^{d_{\cal G}(v)} ) $.
We denote $E({\cal G})$ the set of edges of ${\cal G}$, that is the set of unordered pairs $\{h,\alpha(h)\}$ of half-edges. For the map depicted in Fig.~\ref{fig:combimap}
we have:
\[
\begin{split}
V({\cal G}) &= \Big\{ (r), (v^1v^2v^3v^4), (w^1w^2w^3), (u^1u^2)\Big\} \;, 
\crcr
E({\cal G}) & = \Big\{ \{ r, v^1 \} , \{ v^2, w^1\} \{ v^3, u^1 \} , \{ v^4, w^3\} ,\{ u^2, w^2\}  \Big\} \;.
\end{split}
\]

There is a many to one mapping between combinatorial maps and abstract graphs consisting in ``forgetting the half-edge''. To every combinatorial map ${\cal G}$ one associates the abstract graph $G$ obtained by associating a vertex $v\in V(G)$ to every vertex $(v^1\dots v^{d_{\cal G}(v)} ) \in V({\cal G})$ (and the external vertices of ${\cal G}$ become external vertices of $G$) and an edge $\{v,w\} \in E(G)$ to every edge $\{v^p,w^q\} \in E({\cal G})$. 
As the maps are rooted, one of the external vertices of $G$ is marked as root.
The graph associated to the map in Fig.~\ref{fig:combimap} is:
\[
V(G) = \Big\{ r, v, w, u \Big\} \;, \;\;
E(G) = \Big\{ \{ r, v\} , \{v, w\} , \{ v,u\} , \{v, w \} , \{u,w\}  \Big\} \; .
\]
Note that the graph $G$ is always connected, as by definition the combinatorial maps are connected.

Below we will also encounter disconnected combinatorial maps.
A \emph{possibly disconnected combinatorial map} ${\cal G}$ is the disjoint union of connected rooted combinatorial maps ${\cal G} = {\cal G}^1 \sqcup {\cal G}^2 \ldots $ with internal vertices $V^{\rm int}({\cal G} ) = V^{\rm int}({\cal G}^1) \sqcup  V^{\rm int}({\cal G}^2) \sqcup \ldots$ which are at least bivalent and having univalent external vertices $V^{\rm ext}({\cal G}^1) \sqcup  V^{\rm ext}({\cal G}^2) \sqcup \ldots$ such that each component ${\cal G}^\nu$ contains at least one external vertex, and one of these external vertices is designated as root of ${\cal G}^\nu$.
In other words, ${\cal G}$ is a multiset (unordered collection) of combinatorial maps.
We denote $\pmb{\cal G}^{(r^1) \dots (r^n) }$ the set of possibly disconnected combinatorial maps with $n$ external vertices $r^1, \ldots , r^n$.

\paragraph{Unlabeled combinatorial maps.}

We have so far discussed labeled rooted combinatorial maps. 
A rooted map morphism between ${\cal G} = ({\cal D},r,\sigma,\alpha)$ and ${\cal G}' =({\cal D}',r',\sigma',\alpha')$ is a bijection $\tau:{\cal D} \to {\cal D}' $
such that $\tau(r) = r'$, $\tau\sigma = \sigma' \tau$ and $\tau\alpha = \alpha'\tau$
and we say that two maps are equivalent ${\cal G}\sim {\cal G}'$
if there exists a rooted map morphism between them. 
In particular, for every relabeling permutation $\tau:{\cal D} \to {\cal D}$ with $\tau(r)=r$, the maps $({\cal D},r,\sigma,\alpha)$ and $({\cal D},r,\sigma',\alpha')$
with $\sigma' = \tau \sigma \tau^{-1}$ and $\alpha' = \tau\alpha\tau^{-1}$ are equivalent. Taking the combinatorial map in Fig.~\ref{fig:combimap} and the transposition $\tau=(v^4w^2)$ we obtain the equivalent combinatorial map $({\cal D},r,\sigma',\alpha')$ with:
\[
\sigma' = (r) (v^1v^2v^3w^2)(w^1v^4w^3)(u^1u^2) \;, \qquad
\alpha' = (rv^1) (v^2w^1) (v^3u^1)(w^2w^3) (u^2v^4)  \; .
\]
As any element in ${\cal D}$ can be written as $\eta(r)$ for some element in the group generated by $\sigma$ and $\alpha$, rooted combinatorial maps have trivial automorphism group, that is if $\tau:{\cal D} \to {\cal D}$ is a bijection such that $\tau(r) = r$, $\sigma \tau = \tau \sigma$ and $\alpha \tau = \tau \alpha $ then $\tau$ is the identity. 

An \emph{unlabeled} rooted combinatorial map is an equivalence class $[{\cal G}]$ of combinatorial maps under morphisms.
Unlabeled rooted combinatorial maps can be canonically embedded in two dimensional surfaces: one embeds any of the labeled combinatorial maps in the equivalence class $[{\cal G}]$ and then forgets the labels of the half edges, except the root. The unlabeled combinatorial map corresponding to the labeled combinatorial map in Fig.~\ref{fig:combimap} is represented in Fig.~\ref{fig:combimapunlab}. 

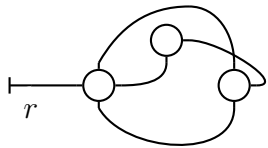
\begin{figure}[ht!]
\begin{center}
\begin{tikzpicture}[scale=0.15]

\begin{scope}[every node/.style={circle, thick,draw,font=\tiny,minimum height=1em}]

     \node (V1) at (0,0) {};
 
     \node (V2) at (12,0) {};

     \node (V3) at (6,4) {};
         
\end{scope}

\begin{scope}[>={Stealth[black]},
              every node/.style={fill=white,circle},
              every edge/.style={draw=black,thick}]
              
    \draw[black, thick,|-] (-8,0) -- (-6,0) node[below] {$r$};
    \draw[black, thick] (V1) -- (0,-2);
    \draw[black, thick] (V1) -- (0,2); 
    \draw[black, thick] (V1) -- (2,0); 
    \draw[black, thick] (V1) -- (-2,0);
    \draw[black, thick] (V2) -- (12,2); 
    \draw[black, thick] (V2) -- (12,-2); 
    \draw[black, thick] (V2) -- (14,0); 
    \draw[black, thick]  (V3) -- (6,2);   
    \draw[black, thick]  (V3) -- (8,4);   

    \draw[black,thick] (-6,0) -- (-2,0); 
    \draw[black,thick] (0,-2) to[out=-60,in=-90] (12,-2);
    \draw[black,thick] (2,0) to[out=0,in=-90] (6,2);
    \draw[black,thick] (0,2) to[out=60,in=180] (8,7);
    \draw[black,thick] (8,7) to[out=0,in=90] (12,2);
    \draw[black,thick] (8,4) to[out=0,in=0] (14,0);
\end{scope}
\end{tikzpicture}

    \end{center}
    \caption{Unlabeled rooted map corresponding to the map in Fig.~\ref{fig:combimap}.}
    \label{fig:combimapunlab}
\end{figure}

Unlabeled combinatorial maps are associated to unlabeled graphs (that is equivalence classes of graphs under graph morphisms). There are as many unlabeled combinatorial maps that lead to the same unlabeled abstract graph $G$ as there are distinct embeddings of $G$ seen as a one-complex into two dimensional surfaces.\footnote{In order to properly embed $G$, that is with no intersections of the edges, the underlying surface must have a high enough genus.} 
Some examples of unlabeled rooted abstract graphs are given in Fig.~\ref{fig:examplegraphs}, where the coefficients displayed count how many unlabeled rooted  combinatorial maps correspond to the same abstract graph.

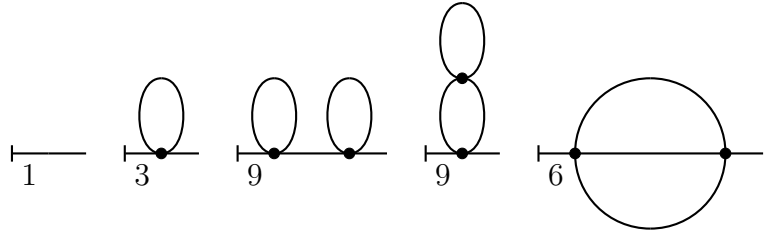
\begin{figure}[ht!]
\begin{center}
\begin{tikzpicture}[scale=0.5]

\draw[black, thick,|-] (-3,0) -- (-2,0) node[below left] {1};
\draw[black, thick] (-2,0) -- (-1,0);

\draw[black, thick,|-] (0,0) -- (1,0) node[below left] {3};
\filldraw[black] (1,0) circle (4pt);
\draw[black, thick] (1,0) -- (2,0);
\draw[black, thick] (1,0) to[out=0,in=0] (1,2);
\draw[black, thick] (1,0) to[out=180,in=180] (1,2);

\draw[black, thick,|-] (3,0) -- (4,0) node[below left] {9};
\filldraw[black] (4,0) circle (4pt);
\draw[black, thick] (4,0) to[out=0,in=0] (4,2);
\draw[black, thick] (4,0) to[out=180,in=180] (4,2);
\draw[black, thick] (4,0) -- (6,0);
\filldraw[black] (6,0) circle (4pt);
\draw[black, thick] (6,0) to[out=0,in=0] (6,2);
\draw[black, thick] (6,0) to[out=180,in=180] (6,2);
\draw[black, thick] (6,0) -- (7,0);

\draw[black, thick,|-] (8,0) -- (9,0) node[below left] {9};
\filldraw[black] (9,0) circle (4pt);
\draw[black, thick] (9,0) to[out=0,in=0] (9,2);
\draw[black, thick] (9,0) to[out=180,in=180] (9,2);
\draw[black, thick] (9,0) -- (10,0);
\filldraw[black] (9,2) circle (4pt);
\draw[black, thick] (9,2) to[out=0,in=0] (9,4);
\draw[black, thick] (9,2) to[out=180,in=180] (9,4);

\draw[black, thick,|-] (11,0) -- (12,0) node[below left] {6};
\filldraw[black] (12,0) circle (4pt);
\draw[black, thick] (12,0) to[out=90,in=180] (14,2);
\draw[black, thick] (14,2) to[out=0,in=90] (16,0);
\draw[black, thick] (12,0) to[out=-90,in=180] (14,-2);
\draw[black, thick] (14,-2) to[out=0,in=-90] (16,0);
\draw[black, thick] (12,0) -- (16,0);
\filldraw[black] (16,0) circle (4pt);
\draw[black, thick] (16,0) -- (17,0);

\end{tikzpicture}
\end{center}
    \caption{Examples of unlabeled abstract graphs rooted at the external vertex on the left. For each graph we have displayed the number of distinct possible embedding in the plane, that is the number of unlabeled rooted combinatorial maps associated to it.}
    \label{fig:examplegraphs}
\end{figure}

In order to simplify the notation, when no confusion can arise, we refer to unlabeled combinatorial maps simply as combinatorial maps and denote them as ${\cal G}$ instead of $[{\cal G}]$. In order to identify the vertices and edges in an unlabeled map ${\cal G}$, one picks any of the labeled maps in the equivalence class.

\subsubsection{Tree picking}
\label{sec:tree_picking}
One can always pick a spanning tree in an unlabeled  rooted combinatorial map by a first available successor (``keep to the right'') algorithm:
starting from the root, at each step one picks the edge connecting the current vertex with its oldest available successor (that is, the rightmost available edge) and steps to this successor vertex.
Here ``available edge'' means that the edge has not already been chosen at a previous step and that adding it will not create cycles.
In case the current vertex has no available successors, we move back to the parent of the current vertex and continue until we exhibit a spanning tree.
Examples of this are presented in Fig.~\ref{fig:trevalentwithtree}.

\begin{figure}[ht!]
\begin{center}
\begin{tikzpicture}[scale=0.3]

\begin{scope}[every node/.style={circle, thick,draw,font=\tiny,minimum height=1em}]

     \node (V11) at (0,3) {};
 
     \node (V21) at (6,3) {};
     \node (V22) at (4,6) {};
     \node (V23) at (8,6) {};

    \node (V31) at (12,3) {};
     \node (V32) at (10,6) {};
     \node (V33) at (14,6) {};

    \node (V3n1) at (18,3) {};
     \node (V3n2) at (16,6) {};
     \node (V3n3) at (20,6) {};

     \node (V41) at (24,3) {};
     \node (V42) at (24,6) {};
     \node (V43) at (24,9) {};

     \node (V51) at (30,3) {};
     \node (V52) at (30,6) {};
     \node (V53) at (30,9) {};
        
\end{scope}

\begin{scope}[>={Stealth[black]},
              every node/.style={fill=white,circle},
              every edge/.style={draw=black,thick}]
              
    \draw[black, thick,-|] (V11) -- (0,0) node[below] {$r$};
    \draw[dashed,thick]  (V11)  to[out=60,in=0] (0,5);
    \draw[dashed,thick] (0,5)  to[out=180,in=120] (V11);

   \draw[black, thick,-|] (V21) -- (6,0) node[below] {$r$};
   \draw[black, dashed] (V21) -- (V22);
   \draw[black, thick] (V21) -- (V23);
   \draw[black, thick] (V22) edge [bend left=20]  (V23);
   \draw[black, dashed] (V23) edge [bend left=20]  (V22);

   \draw[black, thick,-|] (V31) -- (12,0) node[below] {$r$};
   \draw[black, dashed] (V31) -- (V32);
   \draw[black, thick] (V31) -- (V33);
   \draw[black, dashed] (V32) to[out=90,in=180]  (V33);   
   \draw[black, thick] (V32) to[out=0,in=90]  (V33);

   \draw[black, thick,-|] (V3n1) -- (18,0) node[below] {$r$};
   \draw[black, thick] (V3n1) -- (V3n2);
   \draw[black, thick] (V3n1) -- (V3n3);
    \draw[dashed,thick]  (V3n2)  to[out=60,in=0] (16,8);
    \draw[dashed,thick] (16,8)  to[out=180,in=120] (V3n2);
       \draw[dashed,thick]  (V3n3)  to[out=60,in=0] (20,8);
    \draw[dashed,thick] (20,8)  to[out=180,in=120] (V3n3);

   \draw[black, thick,-|] (V41) -- (24,0) node[below] {$r$};
   \draw[black, thick] (V41) to[out=60,in=-60] (V42);
   \draw[black, dashed] (V41) to[out=120,in=-120] (V42);
   \draw[black, thick] (V42) -- (V43);
    \draw[dashed,thick]  (V43)  to[out=60,in=0] (24,11);
    \draw[dashed,thick] (24,11)  to[out=180,in=120] (V43);

  \draw[black, thick,-|] (V51) -- (30,0) node[below] {$r$};
   \draw[black, thick] (V51) to[out=0,in=0] (V53);
   \draw[black, dashed] (V51) to[out=180,in=-180] (V53);
   \draw[black, thick] (V53) -- (V52);
   \draw[dashed,thick]  (V52)  to[out=-60,in=0] (30,4);
  \draw[dashed,thick] (30,4)  to[out=180,in=-120] (V52);
       
\end{scope}
\end{tikzpicture}

    \end{center}
    \caption{Unlabeled rooted maps with up to three 3-valent vertices and the distinguished ``keep to the right'' tree.}
    \label{fig:trevalentwithtree}
\end{figure}
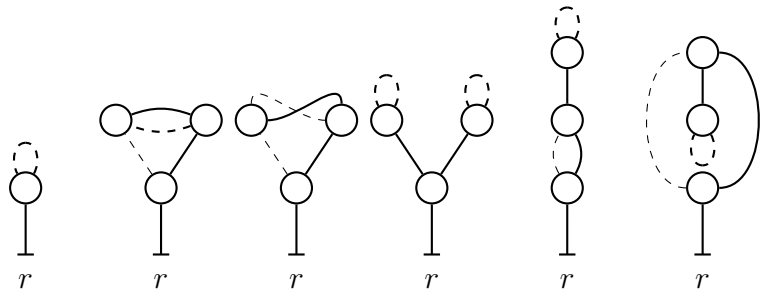

\subsection{Trees and ordinary differential equations}
\label{sec:trees_ODEs}

There are several well-known tree expansions of solutions of ordinary differential equations.
We summarize below some of these expansions which will play a role later, highlighting links to the above combinatorial structures.

\paragraph{Recursive trees.}  

A \emph{recursive tree} with $p$ vertices is a tree over $p$ vertices having $p$ strictly ordered labels $1<2<\dots <p$  such that if the vertex $k$ belongs to the path from the vertex  $l$ to the vertex with minimal label $1$, then $k<l$. We add to a recursive tree an additional univalent root vertex $r$, which also carries the label $0$, hooked to the vertex with label $1$.\footnote{We follow these conventions throughout the paper,
except in Sec.~\ref{sec:pathint} where we allow the root to have degree different from $1$.}

We denote $V^{\rm int}(T)$ the set of non-root vertices of $T$.
We order the edges $(i,j)\in E(T)$ with $i<j$.
Building recursive trees by adding the vertices one by one in increasing order
we observe that a vertex can be hooked to any of the vertices already present hence the number of recursive trees with $p$ non-root vertices is $ (p-1)!$. 
In Fig.~\ref{fig:rectrees} we depicted the recursive trees with up to $4$ non-root vertices.

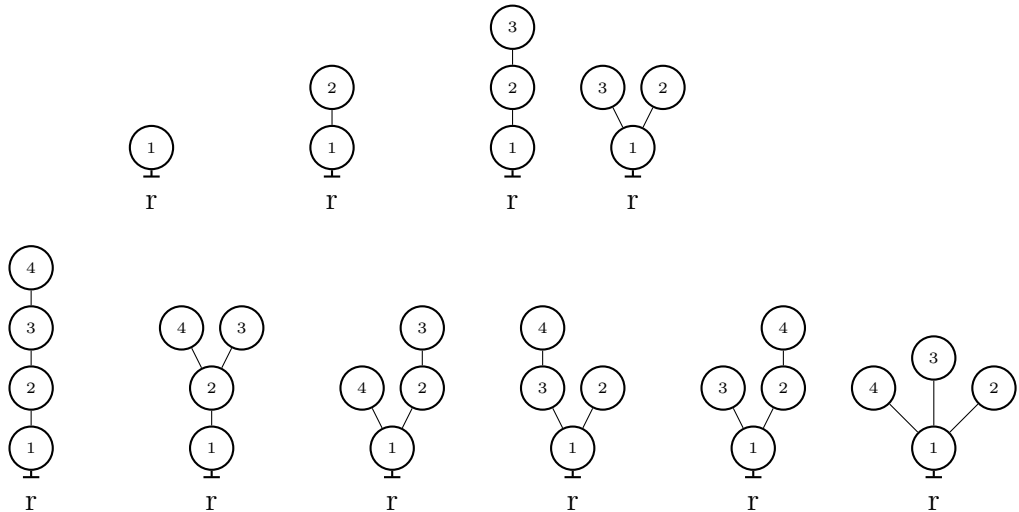
\begin{figure}[ht!]
\begin{center}
\begin{tikzpicture}[scale=0.4]

\begin{scope}[every node/.style={circle, thick,draw,font=\tiny,minimum height=1em}]
      
     \node (R1) at (0,2) {1};
 
     \node (R2) at (6,2) {1};
     \node (V21) at (6,4) {2};

     \node (R3) at (12,2) {1};
     \node (V31) at (12,4) {2};
     \node (V32) at (12,6) {3};

     \node (R4) at (16,2) {1};
     \node (V41) at (17,4) {2};
     \node (V42) at (15,4) {3};

     \node (R5) at (-4,-8) {1};
     \node (V51) at (-4,-6) {2};
     \node (V52) at (-4,-4) {3};
    \node (V53) at (-4,-2) {4} ;

     \node (R6) at (2,-8) {1};
     \node (V61) at (2,-6) {2};
     \node (V62) at (3,-4) {3};
    \node (V63) at (1,-4) {4} ;

     \node (R7) at (8,-8) {1};
     \node (V71) at (9,-6) {2};
     \node (V72) at (9,-4) {3};
    \node (V73) at (7,-6) {4} ;

    \node (R8) at (14,-8) {1};
     \node (V81) at (15,-6) {2};
     \node (V82) at (13,-6) {3};
    \node (V83) at (13,-4) {4} ;

    \node (R9) at (20,-8) {1};
     \node (V91) at (21,-6) {2};
     \node (V92) at (19,-6) {3};
    \node (V93) at (21,-4) {4} ;

     \node (R10) at (26,-8) {1};
     \node (V101) at (24,-6) {4};
     \node (V102) at (26,-5) {3};
    \node (V103) at (28,-6) {2} ;
    
\end{scope}

\begin{scope}[>={Stealth[black]},
              every node/.style={fill=white,circle},
              every edge/.style={draw=black,thick}]

   \draw (R2) -- (V21);

   \draw (R3) -- (V31);    \draw (V31) -- (V32);

   \draw (R4) -- (V41);    \draw (R4) -- (V42);

    \draw (R5) -- (V51); \draw (V51)--(V52); \draw (V52)--(V53);

    \draw (R6) -- (V61); \draw (V61)--(V62); \draw (V61)--(V63);

    \draw (R7) -- (V71); \draw (V71)--(V72); \draw (R7)--(V73);

    \draw (R8) -- (V81); \draw (R8)--(V82); \draw (V82)--(V83);

    \draw (R9) -- (V91); \draw (R9)--(V92); \draw (V91)--(V93);

    \draw (R10) -- (V101); \draw (R10)--(V102); \draw (R10)--(V103);

    \draw[black, thick,-|] (R1) -- (0,1) node[below] {r};
    \draw[black, thick,-|] (R2) -- (6,1) node[below] {r}; 
    \draw[black, thick,-|] (R3) -- (12,1) node[below] {r};
    \draw[black, thick,-|] (R4) -- (16,1) node[below] {r}; 
    \draw[black, thick,-|] (R5) -- (-4,-9) node[below] {r}; 
    \draw[black, thick,-|] (R6) -- (2,-9) node[below] {r}; 
    \draw[black, thick,-|] (R7) -- (8,-9) node[below] {r}; 
    \draw[black, thick,-|] (R8) --  (14,-9)  node[below] {r}; 
    \draw[black, thick,-|] (R9) --  (20,-9)  node[below] {r}; 
    \draw[black, thick,-|] (R10) --  (26,-9)  node[below] {r}; 
       
\end{scope}
\end{tikzpicture}

    \end{center}
    \caption{Recursive trees with up to four non-root vertices, where we have added a root $r$. Notice that these are non-embedded trees, meaning that the order of children at each vertex does not matter.}
    \label{fig:rectrees}
\end{figure}

Recursive trees arise naturally when one solves the differential equation:
\begin{equation}\label{eq:elemeq}
 \frac{d\phi}{dt} = f(\phi) \;, \qquad \phi(0) = 0 \; ,
\end{equation}
where for convenience we choose the initial condition as $0$ without loss of generality.

Let us use the labels $1,\dots p$ for the vertices. We denote $d_T(i)$ the degree of the vertex $i$ in the tree $T$, that is the number of edges of $T$ incident to $i$, and $d_T(1)$ counts also the edge connecting the vertex $1$ to the root vertex $r$. 
We denote the set of recursive trees by $\pmb{T}$ and for $p\ge 1$ one has by induction that:
\[
\frac{d^{p}\phi}{dt^p}= \sum_{\substack{  { T\in \pmb{T} , } \\ { V^{\rm int}(T) = p}  } } \prod_{i \in V^{\rm int}(T)} f^{(d_T(i)-1)}(\phi) \;,
\]
which is the Fa\`a di Bruno formula. We stress that the product runs over the non-root vertices of $T$ with labels $i=1,\dots,p$.
Therefore, the solution of the differential equation~\eqref{eq:elemeq} for $t\ge 0$ can be written in a Taylor series as: 
\begin{align}\label{eq:solrectree}
 \phi(t)  & =  \sum_{p\ge 1} \frac{t^p}{p!} \sum_{\substack{  { T\in \pmb{T} , } \\ { V^{\rm int}(T) = p}  } } \prod_{i \in V^{\rm int}(T)}  f^{(d_T(i)-1)}  \big{|}_0  
\\
&  =    t f \big{|}_{0} + \frac{t^2}{2}( f' f )\big{|}_{0} + \frac{t^3}{3!}( f'' ff + f' f' f )\big{|}_{0}   + \frac{t^4}{4!} ( f''' fff +   f' f''  ff + 3  f'' f' ff + f' f' f' f )\big{|}_{0}  + \dots \; . \nonumber
\end{align}
We also observe that for $p\ge 1$ we have:
\begin{equation}\label{eq:simplex_int}
\frac{t^p}{p!} = \int_{0}^{t} dt_1 \int_{0}^{t_1} \dots \int^{t_{p-1}}_0 dt_{p} \;,
\end{equation}
which is an integral over a time parameter $t_i$ per non-root vertex $i=1,\dots p$ in the tree, constrained to respect the total order of the vertices $t_i > t_j$ if $i<j$ and the last time parameter $t$, that is the upper limit of the integral over $t_1$, is a fixed (not integrated) external time. We associate the external time $t$ to the root vertex.

\paragraph{Rooted combinatorial trees.} Rooted combinatorial trees $(\mathfrak{t},r)$ are trees $\mathfrak{t}$ having an univalent root vertex $r$. We order the edges as $(v,w)\in E(\mathfrak{t})$ where $v$ lies on the path from $w$ to the root (that is $v$ is the parent of $w$), and we denote $V^{\rm int}(\mathfrak{t})= V(\mathfrak{t}) \setminus \{r\}$ the set of internal non-root, vertices of $\mathfrak{t}$.

A morphism between two rooted combinatorial trees $(\mathfrak{t}_1,r_1)$ and $(\mathfrak{t}_2,r_2)$ is a bijection $f:V(\mathfrak{t}_1) \to V(\mathfrak{t}_2)$ such that $f(r_1) =r_2$ and $ (v_1,w_1) \in E(\mathfrak{t}_1) \Leftrightarrow (f(v_1) ,f(w_1)) \in E(\mathfrak{t}_2)$. The rooted combinatorial trees have non trivial automorphism groups and 
unlabeled rooted combinatorial trees, depicted in Fig.~\ref{fig:rootunlabtree},
are the automorphism classes of rooted combinatorial trees.

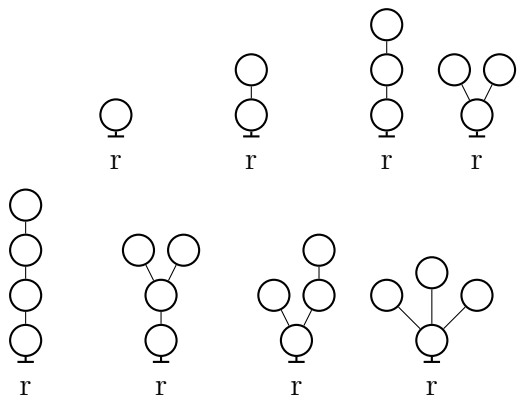
\begin{figure}[ht!]
\begin{center}
\begin{tikzpicture}[scale=0.3]

\begin{scope}[every node/.style={circle, thick,draw,font=\tiny,minimum height=1em}]
     \node (R1) at (0,2) {};
 
     \node (R2) at (6,2) {};
     \node (V21) at (6,4) {};

     \node (R3) at (12,2) {};
     \node (V31) at (12,4) {};
     \node (V32) at (12,6) {};

     \node (R4) at (16,2) {};
     \node (V41) at (17,4) {};
     \node (V42) at (15,4) {};

     \node (R5) at (-4,-8) {};
     \node (V51) at (-4,-6) {};
     \node (V52) at (-4,-4) {};
    \node (V53) at (-4,-2) {} ;

     \node (R6) at (2,-8) {};
     \node (V61) at (2,-6) {};
     \node (V62) at (3,-4) {};
    \node (V63) at (1,-4) {} ;

     \node (R7) at (8,-8) {};
     \node (V71) at (9,-6) {};
     \node (V72) at (9,-4) {};
    \node (V73) at (7,-6) {} ;

     \node (R10) at (14,-8) {};
     \node (V101) at (12,-6) {};
     \node (V102) at (14,-5) {};
    \node (V103) at (16,-6) {} ;
    
\end{scope}

\begin{scope}[>={Stealth[black]},
              every node/.style={fill=white,circle},
              every edge/.style={draw=black,thick}]

    \draw (R2) -- (V21);

   \draw (R3) -- (V31);    \draw (V31) -- (V32);

   \draw (R4) -- (V41);    \draw (R4) -- (V42);

    \draw (R5) -- (V51); \draw (V51)--(V52); \draw (V52)--(V53);

    \draw (R6) -- (V61); \draw (V61)--(V62); \draw (V61)--(V63);

    \draw (R7) -- (V71); \draw (V71)--(V72); \draw (R7)--(V73);

    \draw (R10) -- (V101); \draw (R10)--(V102); \draw (R10)--(V103);

    \draw[black, thick,-|] (R1) -- (0,1) node[below] {r};
    \draw[black, thick,-|] (R2) -- (6,1) node[below] {r}; 
    \draw[black, thick,-|] (R3) -- (12,1) node[below] {r};
    \draw[black, thick,-|] (R4) -- (16,1) node[below] {r}; 
    \draw[black, thick,-|] (R5) -- (-4,-9) node[below] {r}; 
    \draw[black, thick,-|] (R6) -- (2,-9) node[below] {r}; 
    \draw[black, thick,-|] (R7) -- (8,-9) node[below] {r}; 
    \draw[black, thick,-|] (R10) --  (14,-9)  node[below] {r};
    
\end{scope}
\end{tikzpicture}

    \end{center}
    \caption{Unlabeled rooted combinatoiral trees with up to four non-root vertices.}
    \label{fig:rootunlabtree}
\end{figure}

We denote $\mathfrak{T}$ the set of unlabeled rooted trees with at least one non-root vertex.  Taking into account that every recursive tree belongs to one equivalence class of rooted trees, we conclude that there is a many to one mapping from recursive trees to rooted unlabeled trees. We denote $\alpha(\mathfrak{t})$ the number of recursive trees which correspond to the rooted unlabeled tree $\mathfrak{t}$: for the examples in Fig.~\ref{fig:rootunlabtree}
for instance, $\alpha(\mathfrak{t}) $ takes values from left to right and top to bottom $1,1,1,1,1,1,3,1$, as can be seen by comparing with Fig.~\ref{fig:rectrees}. 
The recursive tree expansion in Eq.~\eqref{eq:solrectree} can be reorganized in terms of rooted unlabeled trees:
\begin{equation}\label{eq:solabstree}
\phi(t) = \sum_{\mathfrak{t} \in \mathfrak{T}} \frac{t^{|V^{\rm int}(\mathfrak{t})|} }{
|V^{\rm int}(\mathfrak{t})|!} \; \alpha(\mathfrak{t}) \; \prod_{v\in V^{\rm int}(\mathfrak{t}) } f^{(d_{ \mathfrak{t}}(v)-1)} \Big{|}_{0}\; ,
\end{equation}
which can be further written in terms of Butcher elementary differentials
\cite{Butcher_2016}.

\paragraph{Rooted plane trees.} We now consider again the same differential equation \eqref{eq:elemeq}:
\[
\frac{d\phi }{dt}(t) = f (\phi(t) ) \;,\qquad \phi(0) = 0 \; , \qquad
f(\phi) = f_0 + \sum_{q\ge 1} f_q \; \phi^q  \; , \qquad 
\]
where $f_q =  \frac{1}{q!} f^{(q)}\big{|}_{0} $, but this time we solve it by integrating:
\[
\phi(t) = \int_{0}^{t} dt_1 \; f(\phi(t_1)) = t f_0 + \sum_{q\ge 1} f_q  
\int_{0}^{t}   dt_1 \; \phi(t_1)^q \;,
\]
and using repeated substitutions. The solution writes as a sum over \emph{rooted plane trees}, that is rooted combinatorial maps having no cycle of edges, as depicted in Fig.~\ref{fig:rootplantree}.

\begin{figure}[ht!]
\begin{center}
\begin{tikzpicture}[scale=0.3]

\begin{scope}[every node/.style={circle, thick,draw,font=\tiny,minimum height=1em}]
     \node (R1) at (0,2) {};
 
     \node (R2) at (6,2) {};
     \node (V21) at (6,4) {};

     \node (R3) at (12,2) {};
     \node (V31) at (12,4) {};
     \node (V32) at (12,6) {};

     \node (R4) at (16,2) {};
     \node (V41) at (17,4) {};
     \node (V42) at (15,4) {};

     \node (R5) at (-4,-8) {};
     \node (V51) at (-4,-6) {};
     \node (V52) at (-4,-4) {};
    \node (V53) at (-4,-2) {} ;

     \node (R6) at (2,-8) {};
     \node (V61) at (2,-6) {};
     \node (V62) at (3,-4) {};
    \node (V63) at (1,-4) {} ;

     \node (R7) at (8,-8) {};
     \node (V71) at (9,-6) {};
     \node (V72) at (9,-4) {};
    \node (V73) at (7,-6) {} ;

    \node (R8) at (14,-8) {};
     \node (V81) at (15,-6) {};
     \node (V82) at (13,-6) {};
    \node (V83) at (13,-4) {} ;

     \node (R10) at (20,-8) {};
     \node (V101) at (18,-6) {};
     \node (V102) at (20,-5) {};
    \node (V103) at (22,-6) {} ;
    
\end{scope}

\begin{scope}[>={Stealth[black]},
              every node/.style={fill=white,circle},
              every edge/.style={draw=black,thick}]

    \draw (R2) -- (V21);

   \draw (R3) -- (V31);    \draw (V31) -- (V32);

   \draw (R4) -- (V41);    \draw (R4) -- (V42);

    \draw (R5) -- (V51); \draw (V51)--(V52); \draw (V52)--(V53);

    \draw (R6) -- (V61); \draw (V61)--(V62); \draw (V61)--(V63);

    \draw (R7) -- (V71); \draw (V71)--(V72); \draw (R7)--(V73);

    \draw (R8) -- (V81); \draw (R8)--(V82); \draw (V82)--(V83);

    \draw (R10) -- (V101); \draw (R10)--(V102); \draw (R10)--(V103);

    \draw[black, thick,-|] (R1) -- (0,1) node[below] {r};
    \draw[black, thick,-|] (R2) -- (6,1) node[below] {r}; 
    \draw[black, thick,-|] (R3) -- (12,1) node[below] {r};
    \draw[black, thick,-|] (R4) -- (16,1) node[below] {r}; 
    \draw[black, thick,-|] (R5) -- (-4,-9) node[below] {r}; 
    \draw[black, thick,-|] (R6) -- (2,-9) node[below] {r}; 
    \draw[black, thick,-|] (R7) -- (8,-9) node[below] {r}; 
    \draw[black, thick,-|] (R8) --  (14,-9)  node[below] {r}; 
    \draw[black, thick,-|] (R10) --  (20,-9)  node[below] {r};
    
\end{scope}
\end{tikzpicture}

    \end{center}
    \caption{Rooted plane trees with up to four non-root vertices.}
    \label{fig:rootplantree}
\end{figure}
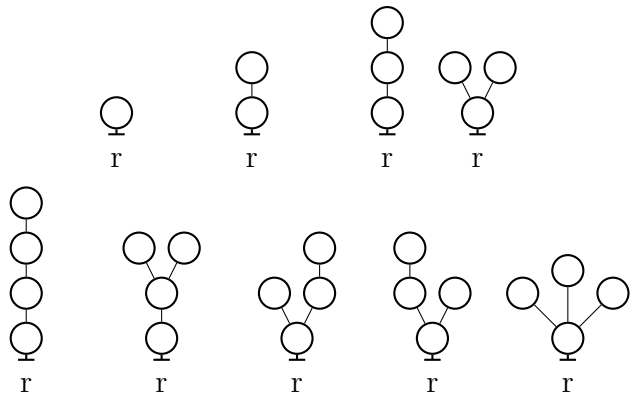

For every vertex $v=(v^1,\dots, v^{d_{\cal T}(v)} )$ in the plane tree, we denote $v^1$ the half-edge pointing towards the root $r$. The children of the vertex are then ordered, say counterclockwise, as the first child hooked to $v^2$, the second child hooked to $v^3$ and so on up to the last child hooked to $ v^{d_{\cal T}(v)} $. We order the edges of the plane tree as $(v^q,w^1)\in E(\cal T)$  where $v$ is the parent of $w$. 
Denoting the set of rooted plane trees\footnote{The number of rooted plane trees with $E$ edges is the Catalan number $C_{E} = \frac{1}{E+1} \binom{2E}{E}$ and can be recursively computed by cutting the first edge one encounters when walking counterclockwise around the tree starting from the root.}  with root $r$ and time parameter $t_r=t$ associated to the root by $\pmb{\cal T}^{(r|t)}$ we have:
\begin{equation}\label{eq:solplantree}
\begin{split}
\phi(t) =  \sum_{ {\cal T} \in \pmb{\cal T}^{(r|t)} } 
  \left( \prod_{v\in V^{\rm int}( {\cal T} ) } f_{d_{\cal T} (v) -1 } \right)
  \int_{0}^{\infty} \prod_{v\in   V^{\rm int}( {\cal T} )  }  dt_v
  \prod_{(v^q,w^1) \in E( {\cal T} ) } \theta(t_v-t_w)  \; ,
\end{split}
\end{equation}
where
\[
\theta(t) = \mathbf{1}_{t\geq 0} \;,
\]
is the Heaviside function 
and $ E( {\cal T} )$ includes the edge of ${\cal T}$ hooked to the root, that is $(r,w^1)$ for some $w^1$, such that the product of Heaviside functions constrains all the time parameters to be lower than the time of the root $t_r=t$.
Here and below we use the shorthand
\[
\int_{a}^{b} \prod_{i\in I} dt_i
= \int_{(a,b)^{|I|}} \prod_{i\in I} dt_i \;,
\]
for a finite set $I$.

 The time integral:
\begin{equation}\label{eq:timeintegral}
  \int_{0}^{\infty} \prod_{v\in   V^{\rm int}( {\cal T} )  }  dt_v
  \prod_{(v^q,w^1) \in E( {\cal T} ) } \theta(t_v-t_w)  \; ,
\end{equation}
can be understood as follows. The tree ${\cal T}$ induces a strict partial order among the vertices $v < w$ if $v$ belongs to the path from $w$ to the root $r$. We integrate one time parameter per non-root vertex, where the time parameters are constraint to respect the tree partial order, $t_v \ge t_w$ if $v < w$, as imposed by the product of Heaviside functions.
The product over the edges includes the edge connected to the root vertex $r$ with time parameter $t_r=t$, which is not integrated and is a maximal element in the partial order $t \ge t_v,\forall v$. We note that the time integral is insensitive to the embedding of ${\cal T}$.

The time integrals associated to the trees displayed in Fig.~\ref{fig:rootunlabtree} are (ordered left to right and top to bottom):
\[
\begin{split}
 \int_0^{t} dt_1  = t \;, & \qquad \int_0^{t} dt_1 \int_{0}^{t_1} dt_2  = \frac{t^2}{2!} \;, \crcr
 \int_0^{t} dt_1 \int_0^{t_1} dt_2 \int_0^{t_2} dt_3 = \frac{t^3}{3!} \;, & \qquad  \int_0^{t} dt_1 \int_0^{t_1} dt_2 \int_0^{t_1} dt_3  = \frac{t^3}{3} \;, \qquad \crcr
  \int_0^{t} dt_1\int_0^{t_1}dt_2 \int_{0}^{t_2}dt_3 \int_0^{t_3} dt_4  = \frac{t^4}{4!} \;, & \qquad
\int_0^{t} dt_1 \int_0^{t_1} dt_2 \int_{0}^{t_2} dt_3 \int_0^{t_2}dt_4  = \frac{t^4}{4 \cdot 3 } \;, \crcr
\int_0^{t} dt_1 \int_0^{t_1} dt_2 \int_0^{t_1} dt_3 \int_0^{t_3}dt_4  = \frac{t^4}{4\cdot 2} \;, & 
\qquad 
\int_0^{t} dt_1 \int_0^{t_1} dt_2  \int_0^{t_1} dt_3  \int_0^{t_1} dt_4  =\frac{t^4}{4} \; ,
\end{split}
\]
where two of the plane trees with four vertices are identical up to re-embedding hence have the same time integral. Eq.~\eqref{eq:solplantree} becomes at first orders (recall that $f_0=f$, $f_1=f'$, $f_2 = \frac{1}{2!}f''$ and $f_3 = \frac{1}{3!}f'''$):
\[
\begin{split}
  \phi(t) 
 = & t \, f + \frac{t^2}{2!} \, f'f + \frac{t^3}{3!} \,  f'f'f + \frac{t^3}{3}\;\frac{1}{2!}f'' f^2 \crcr
& + \frac{t^4}{4!} \, f'f'f'f + \frac{t^4}{4 \cdot 3} \, f' \frac{1}{2!} f'' ff + 
  \frac{t^4}{4 \cdot 2} \,
2 \frac{1}{2!} f'' f' ff + \frac{t^4}{4} \,  \frac{1}{3!}f'''fff  + \dots\;,
\end{split}
\]
reproducing the series in \eqref{eq:solrectree}. 
One can directly obtain \eqref{eq:solplantree}  from \eqref{eq:solrectree} by:
\begin{itemize}
    \item[-] replacing $f^{(q)}$ by $q!f_q$ for every vertex and counting one $f_q$ for each of the $q!$ distinct ways to embed the $q$ children of that vertex, passing thus from a sum over recursive trees to a sum over rooted plane trees with a total order among the time parameters of the vertices;
    \item[-] rewriting the factor $\frac{t^p}{p!}$ in \eqref{eq:solrectree}
    as a time ordered integral in \eqref{eq:simplex_int}
    and adding up together the terms corresponding to all the plane trees with total orders among the time parameters corresponding to the same plane tree with only the tree partial order among the time parameters.
\end{itemize}

Comparing \eqref{eq:solabstree} with \eqref{eq:solplantree} we conclude that:
\[
\frac{1}{| V^{\rm int}(\mathfrak{t} ) | !}\alpha(\mathfrak{t}) = 
\frac{ N(\mathfrak{t} ) }{\prod_{v\in V^{\rm int}( \mathfrak{t}) } (d_{\mathfrak{t}}(v)- 1 )! }  \;
\int_{0}^{\infty} \prod_{v\in   V^{\rm int}( \mathfrak{t} )  }  dt_v
  \prod_{(v,w) \in E( \mathfrak{t} ) } \theta(t_v-t_w) \Big{|}_{t_r=1}  \;,
\]
where $ N(\mathfrak{t} )$ denotes the number of rooted plane trees corresponding to the unlabeled rooted combinatorial tree $\mathfrak{t} $.

The example discussed above is a special case of a more general result that we will need later on:
\begin{lemma} \label{lemma:trees}
Let $\mathfrak{t}$ denote an unlabeled rooted combinatorial tree, $\pmb{\cal T}( \mathfrak{t})$ the set of rooted plane trees corresponding to it and $\pmb{T}(\mathfrak{t})$ the set of recursive trees corresponding to it (where we assign to the univalent root the label $0$). For any function $F$ depending only on $\mathfrak{t}$ and on edge time differences but not on the embedding or on the recursive labeling we have (with $t_r = t_0$):
\[
\begin{split}
&  \sum_{ {\cal T} \in \pmb{\cal T}( \mathfrak{t}) }
 \prod_{v\in V^{\rm int}( {\cal T} ) } \frac{1}{ ( d_{\cal T}(v)-1) !}   \int_{-\infty}^{\infty} \prod_{v\in   V^{\rm int}( {\cal T} )  }  dt_v
  \prod_{(v^q,w^1) \in E( {\cal T} ) } \theta(t_v-t_w) \;\; F(\mathfrak{t} ; \{ t_v-t_w \}_{ (v,w)\in E({\cal T}) } )
  \crcr
& = \sum_{ T\in \pmb{T}(\mathfrak{t}) }
 \int_{-\infty}^{t_0} dt_1 \dots  \int_{-\infty}^{t_{|  V^{\rm int}( T ) |-1}} dt_{|  V^{\rm int}( T ) | }  \; \; F(\mathfrak{t}  ,\{t_i-t_j\}_{(i,j)\in E(T)}) \; .
\end{split}
\]
\end{lemma}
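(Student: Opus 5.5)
The identity follows by rewriting both sides as integrals over the same set of time configurations, counted with the same multiplicities. Since $F$ depends only on $\mathfrak{t}$ and on the edge time differences, fix one labeled representative of $\mathfrak{t}$. Every plane tree ${\cal T}\in\pmb{\cal T}(\mathfrak{t})$ and every recursive tree $T\in\pmb{T}(\mathfrak{t})$ can then be identified with this representative via a rooted tree isomorphism. Under this identification the integrand becomes $F(\mathfrak{t};\{t_v-t_w\}_{(v,w)\in E(\mathfrak{t})})$. This is well defined because $F$ is invariant under relabeling and re-embedding.

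\textbf{Left-hand side.} The time integral with the Heaviside constraints does not depend on the embedding, so every ${\cal T}$ contributes the same term. The left-hand side therefore equals
\[
\frac{N(\mathfrak{t})}{\prod_{v\in V^{\rm int}(\mathfrak{t})}(d_{\mathfrak{t}}(v)-1)!}\, I, \qquad I=\int_{-\infty}^{\infty}\prod_v dt_v \prod_{(v,w)\in E(\mathfrak{t})}\theta(t_v-t_w)\,F .
\]
Let $c(v)=d_{\mathfrak{t}}(v)-1$ be the number of children of $v$, and let $\mathrm{Aut}(\mathfrak{t})$ be the group of root-preserving automorphisms. Plane embeddings are orderings of the children at each vertex, taken modulo automorphisms. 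The orbit–stabilizer theorem applies here because automorphisms act freely on the set of labeled orderings, and it gives
\[
N(\mathfrak{t})=\frac{\prod_v c(v)!}{|\mathrm{Aut}(\mathfrak{t})|}.
\]
Hence the left-hand side equals $I/|\mathrm{Aut}(\mathfrak{t})|$.

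\textbf{Right-hand side.} Split $I$ according to the relative order of the times $t_v$. Ties have measure zero and may be ignored. On each region where the $t_v$ are strictly ordered, read off the labels $1,\dots,p$ by decreasing time, so that $t_0>t_1>\dots>t_p$. The constraint $t_v>t_w$ for every parent $v$ of $w$ says exactly that this labeling is recursive. The regions of $I$ are therefore in bijection with the recursive labelings of the fixed representative, that is, with the increasing bijections $V^{\rm int}(\mathfrak{t})\to\{1,\dots,p\}$. Each region, after renaming variables, is exactly one term $\int_{-\infty}^{t_0}dt_1\cdots\int_{-\infty}^{t_{p-1}}dt_p\,F$ of the right-hand side.

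Two labelings yield the same recursive tree $T$ exactly when they differ by an automorphism of $\mathfrak{t}$. The automorphism group acts freely on labelings, since the labels are distinct. Consequently
\[
I=|\mathrm{Aut}(\mathfrak{t})|\sum_{T\in\pmb{T}(\mathfrak{t})}\int_{-\infty}^{t_0}dt_1\cdots\int_{-\infty}^{t_{p-1}}dt_p\,F(\mathfrak{t};\{t_i-t_j\}_{(i,j)\in E(T)}).
\]
Dividing by $|\mathrm{Aut}(\mathfrak{t})|$ gives the lemma. In particular, $\alpha(\mathfrak{t})$ equals the number of increasing labelings divided by $|\mathrm{Aut}(\mathfrak{t})|$.

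\textbf{Main obstacle.} The delicate step is the bookkeeping of the two free group actions, on child orderings and on labelings, and checking that both produce the same factor $|\mathrm{Aut}(\mathfrak{t})|$. Convergence is not an issue. The integrals over $(-\infty,t_0)$ should be understood formally, or for $F$ decaying in the time differences, and the regrouping by time orders is valid because each region is integrated with the same integrand.

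As a cross-check, with $F=1$ on $[0,t]$ this reproduces the identity displayed just before the lemma. There $I|_{t=1}$ equals the number of increasing labelings divided by $p!$, and the factor $\prod_v(d_{\mathfrak{t}}(v)-1)!$ cancels as shown above.
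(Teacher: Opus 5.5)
Your proof is correct and follows the same route as the paper. Both reduce each side to the single integral $I$ over the tree partial order divided by $|\mathrm{Aut}(\mathfrak t)|$ (the paper's $S(\mathfrak t)$), and your orbit--stabilizer count $N(\mathfrak t)=\prod_v (d_{\mathfrak t}(v)-1)!/|\mathrm{Aut}(\mathfrak t)|$ is exactly the paper's observation that $S(\mathfrak t)N(\mathfrak t)$ and $\prod_v (d_{\mathfrak t}(v)-1)!$ both count plane embeddings of a labeled representative. Your decomposition of $I$ into linear extensions, with $\mathrm{Aut}(\mathfrak t)$ acting freely on increasing labelings, is a more explicit version of the paper's remark that summing over recursive trees gives the tree order supplemented by an ordering among isomorphic branches.
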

\begin{proof}
The two sides of the equality are proportional to a time integral corresponding to $\mathfrak{t}$. 
On the one hand, as the integral is independent of the embedding we have: 
\[
\begin{split}
&  \sum_{ {\cal T} \in \pmb{\cal T}( \mathfrak{t}) }
 \prod_{v\in V^{\rm int}( {\cal T} ) } \frac{1}{ ( d_{\cal T}(v)-1) !}   \int_{-\infty}^{\infty} \prod_{v\in   V^{\rm int}( {\cal T} )  }  dt_v
  \prod_{(v^q,w^1) \in E( {\cal T} ) } \theta(t_v-t_w) \;\; F(\mathfrak{t} ; \{ t_v-t_w \}_{ (v,w)\in E({\cal T}) } )
  \crcr
& = \frac{ N(\mathfrak{t} ) }{\prod_{v\in V^{\rm int}( \mathfrak{t}) } (d_{\mathfrak{t}}(v)- 1 )! }   \int_{-\infty}^{\infty} \prod_{v\in   V^{\rm int}( \mathfrak{t} )   }  dt_v
  \prod_{(v^q,w^1) \in E( \mathfrak{t} ) } \theta(t_v-t_w) \;\; F(\mathfrak{t} ; \{ t_v-t_w \}_{ (v,w)\in E(\mathfrak{t} ) } ) \;,
  \end{split}
\]
where $t_r = t_0$ and $ N(\mathfrak{t} ) = |  \pmb{\cal T}( \mathfrak{t})  | $. On the other hand, by adding all the time integrals of the recursive trees corresponding to $\mathfrak{t}$ one obtains an integral respecting the tree partial order but supplemented by a total order of the times of the lowest vertices in isomorphic branches (see some examples below), hence: 
\[
\begin{split}
& \sum_{ T\in \pmb{T}(\mathfrak{t}) }
 \int_{-\infty}^{t_0} dt_1 \dots  \int_{-\infty}^{t_{|  V^{\rm int}( T ) |-1}} dt_{|  V^{\rm int}( T ) | }  \; \; F(\mathfrak{t}  ,\{t_i-t_j\}_{(i,j)\in E(T)})  \crcr
& =    \frac{1}{  S(\mathfrak{t} ) }
 \int_{-\infty}^{\infty} \prod_{v\in   V^{\rm int}( \mathfrak{t} )   }  dt_v
  \prod_{(v^q,w^1) \in E( \mathfrak{t} ) } \theta(t_v-t_w) \;\; F(\mathfrak{t} ; \{ t_v-t_w \}_{ (v,w)\in E(\mathfrak{t} ) } )  
 \; ,
\end{split}
\]
with $S(\mathfrak{t})$ the order of the automorphisms group of $\mathfrak{t}$. The latter can be computed recursively as
$S(\mathfrak{t}) = \prod_{\rho} n_\rho ! S(\mathfrak{t}_\rho)^{n_\rho} $, where $n_\rho$ is the number of times the same (up to isomorphisms) branch $\mathfrak{t}_\rho$ is obtained when cutting $\mathfrak{t}$ at the successor of the root. On the other hand, 
\[
 \frac{1}{  S(\mathfrak{t} ) } =    \frac{ N(\mathfrak{t} ) }{\prod_{v\in V^{\rm int}( \mathfrak{t}) } (d_{\mathfrak{t}}(v)- 1 )! }    \; ,
\]
as both $ \prod_{v\in V^{\rm int}( \mathfrak{t}) } (d_{\mathfrak{t}}(v)- 1 )!  $ and $  S(\mathfrak{t} )  N(\mathfrak{t} )$ count the number of plane embeddings of a labeled combinatorial tree corresponding to $\mathfrak{t}$, and thus we conclude the proof.
\end{proof}

For example, the rooted unlabeled trees with one, two, three and four internal vertices are depicted in Fig.~\ref{fig:rootunlabtree}, and the corresponding recursive and plane trees in  Fig.~\ref{fig:rectrees} and Fig.~\ref{fig:rootplantree}, respectively. Labeling the trees in Fig.~\ref{fig:rootunlabtree} left to right and top to bottom as $(1)$, $(2)$, $(3,1)$, $(3,2)$, $(4,1)$, $(4,2)$, $(4,3)$, $(4,4)$, we have
trivially:
\[
\begin{split}
&   \int_{t_v \le t_0} F^{(1)}  = \int_{t_1 \le t_0} F^{(1)} \;, \qquad \int_{t_{v'} \le t_v \le t_0} F^{(2)}  =  \int_{t_2\le t_1\le t_0} F^{(2)} \;, \crcr
& \int_{t_{v''} \le t_{v'} \le t_{v} \le t_0} F^{(3,1)}  = 
 \int_{t_3\le t_2\le t_1\le t_0} F^{(3,1)} \;, \qquad
 \int_{ t_{v'''} \le  t_{v''} \le  t_{v'} \le t_v\le t_0} F^{(4,1)}  = \int_{ t_{v'''} \le  t_{v''} \le  t_{v'} \le t_v\le t_0} F^{(4,1)} \;,
\end{split}
\]
and slightly less trivially,
\[
\begin{split}
&      \frac{1}{2!} \int_{t_{v'} , t_{v''} \le t_v \le t_0} F^{(3,2)}  =  \int_{t_3\le t_2\le t_1\le t_0} F^{(3,2)} \; , \;\; 
 \frac{1}{2!}
\int_{ t_{v'''}  , t_{v''} \le  t_{v'} \le t_v\le t_0}  
 F^{(4,2)}  = \int_{t_4\le t_3\le t_2 \le t_1\le t_0} F^{(4,2)} \;, \crcr
& \frac{1}{3!} 
 \int_{t_{v'},t_{v''},t_{v'''} \le t_v \le t_0}
 F^{(4,4)} = \int_{t_4\le t_3\le t_2 \le t_1\le t_0} F^{(4,4)} \;.
\end{split}
\]
Note that these examples have a non trivial order of the automorphism group (2! and 3! respectively), which is reproduced by the ordered time integrals of the corresponding recursive trees. For the tree $(4,3)$, reinstating the edge time differences, we get after careful relabeling of dummy variables:
\[
\begin{split}
  \frac{1}{2!} 2
 & \int_{ t_{v''} \le t_{v'} \le t_v ,\, t_{v'''} \le t_v  ,\, t_v \le t_0 }
 F^{(4,3) } (t_0-t_v,t_v-t_{v'},t_{v'}-t_{v''},t_v-t_{v'''}) \crcr
 & =   \int_{t_4\le t_3\le t_2 \le t_1\le t_0} F^{(4,3)}(t_0-t_1,t_1-t_2,t_2-t_3,t_1-t_4)  \crcr
 & \;\; + \int_{t_4\le t_3\le t_2 \le t_1\le t_0} F^{(4,3)}(t_0-t_1,t_1-t_3,t_3-t_4,t_1-t_2)  \crcr
 & \;\; +  \int_{t_4\le t_3\le t_2 \le t_1\le t_0} F^{(4,3)}(t_0-t_1,t_1-t_2,t_2-t_4,t_1-t_3)  
 \; .
\end{split}
\]

\paragraph{$q$-ary trees.} A special case of rooted plane trees are the $q$-ary trees, that is the rooted plane trees such that every non-root vertex either has no descendants (in which case it is called a leaf) or it has exactly $q$ descendants. The root vertex has always one descendant. Such trees are obtained when solving the differential equation for $f(\phi) = f_0 + f_q \phi^q$ with $q$ fixed. We denote the set of $q$-ary trees rooted at $r$ by $\pmb{\cal T}^{(r)}_{q-{\rm ary}}$ and this set includes the tree consisting in the root decorated by a leaf. The generating function of $q$-ary trees $F(g) = \sum_{k\ge 0} g^k N_k$, with $N_k$ the number of $q$-ary trees with $k$ vertices with $q$ descendants respects the Fuss-Catalan equation:
\[
F(g) = 1 + gF(g)^q  \; , \qquad N_k =\frac{1}{qk+1} \binom{qk+1}{k } = \frac{ ( qk) !}{k!  (qk-k+1 )! }\;,
\]
where $1$ counts the tree consisting in the root decorated by a leaf, hence with no internal $(q+1)$--valent vertex. At low orders we have
$N_0 = N_1 = 1$, $N_2 =  q$, $N_3  = \frac{q(3q-1)}{2}$
which for ternary trees $q=3$ become $N_0= N_1=1$, $N_2  = 3$ and $N_3 = 12$, as displayed in the Fig.~\ref{fig:tertree}.

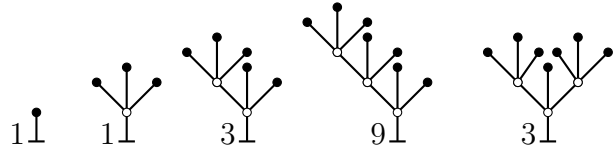
\begin{figure}[ht!]
\begin{center}

\begin{tikzpicture}[scale=0.4]
\draw[black, thick,|-] (0,0) -- (0,1)
node[below left] {1};

\filldraw[black] (0,1) circle (4pt);

\draw[black, thick,|-] (3,0) -- (3,1)
node[below left] {1};

\draw[black, thick] (3,1) -- (2,2);
\filldraw[black] (2,2) circle (4pt);
\draw[black, thick] (3,1) -- (3,2.5);
\filldraw[black] (3,2.5) circle (4pt);
\draw[black, thick] (3,1) -- (4,2);
\filldraw[black] (4,2) circle (4pt);

\filldraw[fill=white] (3,1) circle (4pt);

\draw[black, thick,|-] (7,0) -- (7,1)
node[below left] {3};

\draw[black, thick] (7,1) -- (6,2);
\filldraw[black] (6,2) circle (4pt);
\draw[black, thick] (7,1) -- (7,2.5);
\filldraw[black] (7,2.5) circle (4pt);
\draw[black, thick] (7,1) -- (8,2);
\filldraw[black] (8,2) circle (4pt);

\draw[black, thick] (6,2) -- (5,3);
\filldraw[black] (5,3) circle (4pt);
\draw[black, thick] (6,2) -- (6,3.5);
\filldraw[black] (6,3.5) circle (4pt);
\draw[black, thick] (6,2) -- (7,3);
\filldraw[black] (7,3) circle (4pt);

\filldraw[fill=white] (7,1) circle (4pt);
\filldraw[fill=white] (6,2) circle (4pt);

\draw[black, thick,|-] (12,0) -- (12,1)
node[below left] {9};

\draw[black, thick] (12,1) -- (11,2);
\filldraw[black] (11,2) circle (4pt);
\draw[black, thick] (12,1) -- (12,2.5);
\filldraw[black] (12,2.5) circle (4pt);
\draw[black, thick] (12,1) -- (13,2);
\filldraw[black] (13,2) circle (4pt);

\draw[black, thick] (11,2) -- (10,3);
\filldraw[black] (10,3) circle (4pt);
\draw[black, thick] (11,2) -- (11,3.5);
\filldraw[black] (11,3.5) circle (4pt);
\draw[black, thick] (11,2) -- (12,3);
\filldraw[black] (12,3) circle (4pt);

\draw[black, thick] (10,3) -- (9,4);
\filldraw[black] (9,4) circle (4pt);
\draw[black, thick] (10,3) -- (10,4.5);
\filldraw[black] (10,4.5) circle (4pt);
\draw[black, thick] (10,3) -- (11,4);
\filldraw[black] (11,4) circle (4pt);

\filldraw[fill=white] (12,1) circle (4pt);
\filldraw[fill=white] (11,2) circle (4pt);
\filldraw[fill=white] (10,3) circle (4pt);

\draw[black, thick,|-] (17,0) -- (17,1)
node[below left] {3};

\draw[black, thick] (17,1) -- (16,2);
\filldraw[black] (16,2) circle (4pt);
\draw[black, thick] (17,1) -- (17,2.5);
\filldraw[black] (17,2.5) circle (4pt);
\draw[black, thick] (17,1) -- (18,2);
\filldraw[black] (18,2) circle (4pt);

\draw[black, thick] (16,2) -- (15,3);
\filldraw[black] (15,3) circle (4pt);
\draw[black, thick] (16,2) -- (16,3.5);
\filldraw[black] (16,3.5) circle (4pt);
\draw[black, thick] (16,2) -- (16.7,3);
\filldraw[black] (16.7,3) circle (4pt);

\draw[black, thick] (18,2) -- (17.3,3);
\filldraw[black] (17.3,3) circle (4pt);
\draw[black, thick] (18,2) -- (18,3.5);
\filldraw[black] (18,3.5) circle (4pt);
\draw[black, thick] (18,2) -- (19,3);
\filldraw[black] (19,3) circle (4pt);

\filldraw[fill=white] (17,1) circle (4pt);
\filldraw[fill=white] (16,2) circle (4pt);
\filldraw[fill=white] (18,2) circle (4pt);

\end{tikzpicture}
\end{center}
\caption{Ternary trees. We have grouped together the ternary trees that differ only by the embedding and we have chosen a representative embedded tree per class.}\label{fig:tertree}
\end{figure}

\subsection{Gaussian measures}\label{sec:Gauss}

A normalized centered Gaussian measure $\diff \mu_{C} (\phi)$ with covariance $C$ can be defined by its moments, also known in QFT as $n$-point functions:
\begin{equation}\label{eq:Wickthm}
\begin{split}
  \Braket{\phi_{x^1} \dots \phi_{x^n}  } & =
  \int  \diff \mu_{C} (\phi) \;  \phi_{x^1} \dots \phi_{x^n} 
= \begin{cases}
      \sum_{P \in P_n } \prod_{\{i,j\}\in P} C_{x^i x^j} \quad  &(n \text{ even}),\\
      0 & (n \text{ odd}),   
      \end{cases} \; ,
\end{split}
\end{equation}
where, for $n$ even, $P_n$ is the set of
pairings of $n$ elements, that is partitions of $\{1,\ldots,n\}$ into 2-element subsets.
Formula \eqref{eq:Wickthm} is known as the Wick theorem.
Starting from the moments one can compute the generating function:
\begin{equation}\label{eq:ZJ}
    Z(J) \equiv \int \diff \mu_C(\phi) \; e^{J\phi} = e^{\frac{1}{2} J C J} \; .
\end{equation}
Conversely, if the bilinear form $Q(J_1,J_2)=J_1 C J_2$ is continuous and positive semi-definite, then $Z(i J)$ satisfies the hypothesis of the Bochner-Minlos theorem, hence it defines a unique cylindrical measure on the space of tempered distributions (e.g.\ \cite{GlimmJaffe,Chandra_phdthesis,Lodhia:2016fractional}).

The common way to write the Gaussian measure in the physics literature is 
\[
\diff \mu_C(\phi) =  [\diff \phi] \; \frac{1}{\mathcal{N} }  \; e^{ - \frac{1}{2} \phi C^{-1} \phi } \;.
\]
However, this expression is rather formal, because $[\diff \phi]$ is not a well defined measure, the normalization factor $\mathcal{N}$ usually involves the determinant of an unbounded operator and $C$ might not be invertible. The last point is especially salient.

A standard technique in constructive field theory \cite{BKAR,salmhofer2007renormalization} consists in representing a normalized Gaussian integral as a differential operator:
\[
\int \diff \mu_C(\phi) \; F(\phi) =  \left[ e^{\frac{1}{2} \delta_\phi C \delta_\phi}  F(\phi)\right]_{\phi=0} \; ,
\]
where we used the shorthand notation $ \delta_\phi C \delta_\phi \equiv \int_{xy} \delta_{\phi_x} C_{xy} \delta_{\phi_y} $. Indeed, denoting $J\phi = \int_x J_x \phi_x$ we have:
\[
\left[ e^{\frac{1}{2} \delta_\phi C \delta_\phi}  e^{J\phi}\right]_{\phi=0} 
 = \sum_{n\ge 0}\frac{1}{n!} \left( \frac{1}{2} \delta_\phi C \delta_\phi \right)^{2n}
  \frac{1}{(2n)!} (J\phi)^{2n} = \sum_{n\ge 0}\frac{1}{n!}  \;\frac{1}{2^n}
     (JCJ)^n = e^{\frac{1}{2}JCJ} \;,
\]
reproducing the generating function of the moments of the Gaussian. For arbitrary observables we have:
\[
\begin{split}
\left[ e^{\frac{1}{2} \delta_\phi C \delta_\phi} F (\phi)\right]_{\phi=0}  
& = \left[  F(\frac{\delta}{\delta J}) \left[ e^{\frac{1}{2} \delta_\phi C \delta_\phi} 
e^{J\phi } \right]_{\phi=0}  \right]_{J=0} = \left[ F(\frac{\delta}{\delta J}) e^{\frac{1}{2} JCJ} \right]_{J=0}
\crcr
& = \left[ F(\frac{\delta}{\delta J}) \int \diff\mu_C(\phi) \; e^{ J\phi}   \right]_{J=0} = \int \diff\mu_C(\phi) \;    F(\phi) \;,
\end{split}
\]
where we interpret $F(\frac{\delta}{\delta J})$ using a power series expansion of $F$.

\paragraph{Copies.} A normalized Gaussian integral can be computed using copies of the field \cite{BKAR}:
 \be\label{eq:copies}
 \left[ e^{\frac{1}{2} \delta_\phi C \delta_\phi}  F^{(1)}(\phi) F^{(2)}(\phi)\right]_{\phi=0}  = 
  \left[ e^{\frac{1}{2} (\delta_{\phi^1} +  \delta_{\phi^2} ) C (\delta_{\phi^1} +  \delta_{\phi^2} ) }  F^{(1)}(\phi^1)F^{(2)}(\phi^2)\right]_{\phi^1 = \phi^2=0} 
 \; ,
 \ee
where the field $\phi$ is replaced by two copies $\phi^1$ and $\phi^2$ which can be distributed $\phi\to \phi^1$ or $\phi\to \phi^2$ freely in the  insertions while, at the same time, in the differential operator we substitute $\delta_\phi \to \delta_{\phi^1} +  \delta_{\phi^2}$. 
In fact, for any differential operator $D(\delta_{\phi}) = \sum_p \int_{x_i} D_{x_1\dots x_p} \prod_{i=1}^p  \frac{\delta}{\delta \phi_{x_i}}$
we have:
\[
D(\delta_{\phi}) F^{(1)}(\phi) F^{(2)}(\phi) =  D(\delta_{\phi^1} + \delta_{\phi^2}) F^{(1)}(\phi^1) F^{(2)}(\phi^2) \Big{|}_{\phi^1 = \phi^2 = \phi}  \;,
\]
as
\[
\left( \prod_{i\in \{1,\dots q \} } \frac{\delta}{\delta \phi_{x_i} } \right) F^{(1)}(\phi) F^{(2)}(\phi)
= \sum_{I\subset \{1,\dots q \}} \left[ \left( \prod_{i\in I } \frac{\delta}{\delta \phi_{x_i} } \right)  F^{(1)}(\phi) \right] \left[ \left( \prod_{i\notin I } \frac{\delta}{\delta \phi_{x_i} } \right)  F^{(2)}(\phi) \right]  \;,
\]
while 
\[
\begin{split}
\left( \prod_{i\in \{1,\dots q \} } \left( \frac{\delta}{\delta \phi^1_{x_i} }  + \frac{\delta}{\delta \phi^2_{x_i} }   \right) \right) F^{(1)}(\phi^1) F^{(2)}(\phi^2)
& = \left( \sum_{I\subset \{1,\dots q \}} \left( \prod_{i\in I } \frac{\delta}{\delta \phi^1_{x_i} } \right)   \left( \prod_{i\notin I } \frac{\delta}{\delta \phi^2_{x_i} } \right)  \right)  F^{(1)}(\phi^1) F^{(2)}(\phi^2) \crcr
& = \sum_{I\subset \{1,\dots q \}} \left[ \left( \prod_{i\in I } \frac{\delta}{\delta \phi^1_{x_i} } \right)  F^{(1)}(\phi^1) \right] \left[ \left( \prod_{i\notin I } \frac{\delta}{\delta \phi^2_{x_i} } \right)  F^{(2)}(\phi^2) \right] 
\;.
\end{split}
\]
Notice that in the $(\phi^1,\phi^2)$ space, the Gaussian measure in \eqref{eq:copies} has a noninvertible covariance $C^{ij}=C$, for $i,j=1,2$, thus showing one advantage of working with a definition of Gaussian measure that does not rely on the invertibility of the covariance.

\subsection{The diffusion equation} \label{sec:green}

The operator $A$ in the quadratic part of the action \eqref{eq:action} will be kept generic. In practical applications we have in mind, it can be minus the Laplacian, $A=-\Delta$, it can be the Euclidean Klein-Gordon operator with nonvanishing mass, $A=-\Delta+m^2$, it can contain counterterms or it can be something else. 
We will assume that the covariance of the theory $C=A^{-1}$ has a convergent integral representation in terms of an appropriate ``heat kernel'' associated to the operator $A$, on an appropriate space (e.g.\ $L^2(\mathbb{R}^d)$ if $A=-\Delta+m^2$):
\[
  C_{xy} =  \left( \frac{1}{A}  \right)_{xy}  = \int_0^{\infty} dt  \left( e^{-tA} \right)_{xy}    \; , \qquad
  \qquad (\partial_t + A) \; e^{-tA } = 0 \;, \;\;
   \left( e^{-tA} \right)_{xy}\Big{|}_{t=0} = \delta_{xy}\; .
\]
\begin{remark}
The heat kernel representation exists under fairly weak assumptions, see for example \cite{ABHN_2011, Davies_89, Engel_Nagel_2000}.
Suppose $A$ is a  closed, densely defined, self-adjoint, non-negative operator on a Hilbert space $\cal H$
(e.g.\ ${\cal H}=L^2(\R^d)$).
Then, by the spectral theorem (or more generally the Hille--Yosida theorem), $A$ generates a strongly continuous contraction semigroup
$\{e^{-tA}\}_{t\ge 0}$, and one can define the Green's function $A^{-1}$ via functional calculus.
If $A$ is \emph{strictly} positive, i.e.\ $A\geq c >0$, then $A^{-1}$ is bounded by $c^{-1}$ and the integral of the heat kernel representation converges absolutely in operator norm on $\cal H$.
In general however, $0$ may lie in the spectrum of $A$, in which case
$A^{-1}$ is an unbounded operator and may only exist as an inverse on a subspace of $\cal H$.
The heat-kernel formula still makes sense, but only on a suitable domain.

On a compact manifold for instance (e.g. a torus) with $A=-\Delta$ the Laplace--Beltrami operator, $\ker(A)$ consists of constants and one
recovers the usual Green's function by restricting to mean-zero functions.
On $\R^d$ with $A=-\Delta$, one has $\ker(A)=\{0\}$ in $L^2(\R^d)$ but $0$ belongs to the continuous
spectrum, so the heat kernel integral defines an unbounded operator with domain
\[
\mathrm{Dom}(A^{-1})
=
\Big\{ f \in L^2(\R^d)\, :\, \int d\xi \; |\xi|^{-4} |\hat f(\xi)|^2 < \infty\Big\}
= \dot H^{-2}(\R^d) \cap L^2(\R^d)
\]
where $\hat f$ is the Fourier transform of $f$ and $\dot H^{-2}(\R^d)$ is the homogeneous Sobolev space.

For $f\in\mathrm{Dom}(A^{-1})$ the heat-kernel representation
$
A^{-1}f=\int_0^\infty e^{-tA}f\,dt
$
holds as a Bochner integral.
As $\ker(A)$ is orthogonal to $\mathrm{Dom}(A^{-1})$, if for instance $A$ has a discrete spectrum, then $A^{-1}$ defined by the heat-kernel representation and extended as $0$ to $\ker(A)$
is the Moore--Penrose pseudo-inverse of $A$.
\end{remark}

In the case $ A =  - \Delta $ we have the usual heat kernel:
\[
 \left( e^{- u A} \right)_{xy}
 = \frac{1}{ ( 4 \pi u )^{d/2}} e^{-\frac{|x-y|^2}{4 u}} \;, 
\]
while for the Laplacian on the torus we get additional contributions from the winding modes. If instead $A = -\Delta  + \Omega^2 x_\mu x^\mu$ then
\[
 \left( e^{- u A} \right)_{xy}= \frac{\Omega^{d/2} }{(4\pi \sinh (\Omega u) )^{d/2}}
  e^{- \frac{\Omega \cosh ( \Omega u) }{4 \sinh(\Omega u)} (x^2+y^2) -\frac{\Omega}{2 \sinh (\Omega u) } xy }\;,
\]
is the Mehler kernel.

Our treatment below does not depend on the details of the operator $A$, beyond the existence of a heat kernel representation of its inverse.

\paragraph{Heat diffusion and Green's function.} Associated to the operator $A$ we have a generalized inhomogeneous heat diffusion equation:
\[
 \partial_t H_{(t,x)}  = - \int_y A_{xy} H_{(t,y)}  + f(t,x) \;, \qquad 
 H_{(t_{\rm in},x)} = H_{\rm in}(x) \; ,
\]
which is solved in terms of the Green function of the diffusion operator:
\begin{equation}\label{eq:freeC}
\left(  \frac{1}{\partial_t + A } \right)_{(t,x) (t',x')} = \theta(t-t') \left( e^{- (t-t') A} \right)_{xx'} \equiv \pmb{C}_{(t,x)(t',x')}  \; .
\end{equation}
We note that, due to the presence of the Heaviside function, the Green function propagates forward in fictitious time. The solution of the heat diffusion equation writes:
\[
 H_{(t,x)} = \int_{t_{\rm in}}^t dt' \int_{x'} (e^{- (t-t') A} )_{xx'} \; f_{(t',x')}  +\int_{x'} (e^{- (t-t_{\rm in}) A} )_{xx'} H_{\rm in}(x')  \; ,
\]
and the second term drops out if we impose the initial condition at past infinity $t_{\rm in}\to -\infty$. 

\section{Perturbative expansions and their equivalence}
\label{sec:pert}

In both approaches, path integral and stochastic, the $n$-point functions can be computed as formal power series in the vertex kernels. These series are usually divergent and one needs to use non trivial resummation techniques in order to make sense of them rigorously. Here we will implicitly assume that a proper regularization is in place, and we will prove 
the order by order equality of the two formal perturbative expansions.

\subsection{The Feynman expansion of the path integral}

In the path integral approach the $n$-point functions are computed in the Feynman formal perturbative series in vertex kernels.\footnote{Usually one introduces coupling constants in front of the vertex kernels and speaks about an expansion in the coupling constants.}
We briefly review this here in the spirit of \cite{Rivasseau:1991ub,Gurau:2014vwa}.

\paragraph{The free models.} The $n$-point functions of a free model $S(\phi) =\frac{1}{2} \phi A \phi$ are just the moments of the Gaussian measure (see Sec.~\ref{sec:Gauss}) with covariance $C=A^{-1}$, and using the differential representation we can write:
\begin{equation}
\begin{split}
  \Braket{\phi_{x^1} \dots \phi_{x^n}  } & =
  \int  \diff \mu_{C} (\phi) \;  \phi_{x^1} \dots \phi_{x^n}   = \left[ e^{\frac{1}{2} \delta_\phi C \delta_\phi}  \prod_{i=1}^{n} \phi_{x^i} \right]_{\phi=0}  \; ,
\end{split}
\end{equation}
reproducing the Wick theorem \eqref{eq:Wickthm}.

\paragraph{The interacting models.} In the interacting case, the $n$-point functions are modified by the presence in the action \eqref{eq:action} of a generic interaction that can be written as in Eq.~\eqref{eq:geint}:
\begin{equation}
\begin{split}
  \Braket{\phi_{x^1} \dots \phi_{x^n}  } & =
  \frac{1}{Z(0)}\int  \diff \mu_{C} (\phi) \; e^{-V(\phi)} \;  \phi_{x^1} \dots \phi_{x^n}  
  = \frac{1}{Z(0)}\left[ e^{\frac{1}{2} \delta_\phi C \delta_\phi}  \; e^{-V(\phi)} \; \prod_{i=1}^{n} \phi_{x^i} \right]_{\phi=0}  \; .
\end{split}
\end{equation}
In order to deal with the exponential, one Taylor expands it:
\[
\Braket{\phi_{x^1} \dots \phi_{x^n}  } = \frac{ \int  \diff \mu_{C} (\phi) \; \sum_{p\ge 0}\frac{1}{p!} \Big(-V(\phi)  \Big)^p \;\phi_{x^1} \dots \phi_{x^n} }{\int  \diff \mu_{C} (\phi) \; \sum_{p\ge 0}\frac{1}{p!} \Big(-V(\phi)  \Big)^p}\;,
\]
and then one commutes (possibly illegally) the Taylor sum with the Gaussian integral. Computing the latter term by term using the Wick theorem in Eq.~\eqref{eq:Wickthm} we obtain the perturbative Feynman series.  

In this paper we treat this expansion somewhat formally, and make statements that only hold order by order in the power series, tacitly assuming that any ultraviolet divergence is properly regularized.\footnote{See Remark \ref{rem:renorm} in App.~\ref{app:loworderFeynm} for some comments on the troubles one typically encounters in this perturbative expansion.} 
We will now outline how the calculations proceed and how the structures reviewed in Sec.~\ref{sec:prelim} emerge in this context.

\paragraph{Rooted combinatorial maps and the Feynman expansion.}
For simplicity, we detail the contributions to the one point function in a cubic model:
\[
\Braket{\phi_{x^1}} = \frac{1}{Z(0)}\sum_{p\ge 0} \frac{1}{p! 3^p}\int d\mu_C(\phi) \; \phi_{x^1} \; \prod_{v=1}^p \int_{x_v^1 \, x_v^2 \, x_v^3} \; - V(x_v^1,x_v^2,x_v^3) \; \phi_{x^1_v} \phi_{x^2_v} \phi_{x^3_v}  \;.
\]
The Gaussian integral is evaluated as a sum over pairings (or  Wick contractions). The factor $Z(0)^{-1}$ 
has as the effect of canceling all ``vacuum" contributions not connected to the external point $x^1$.

Writing
\[
{\cal D} = \{ r\} \bigcup_{v=1}^p \{ v^1,v^2,v^3\} \;,
\]
and letting $P_{\cal D}$ denote the set of pairings of $\cal D$,
it follows that each pairing $P\in P_{\cal D}$
leads to a labeled combinatorial map, as introduced in Sec.~\ref{subsec:combi},
with half-edge set $\cal D$ and permutation and involution given respectively by
\[
  \sigma= (r) \prod_{v=1}^p (v^1v^2v^3) \;, \qquad \alpha = \prod_{\{v^h,w^f\}\in P}(v^h,w^f) \; .
\]
In particular, the root is $r$ and there are $p$ three valent vertices.

We group together all the labeled combinatorial maps which correspond to the same unlabeled combinatorial map, which at first orders are represented in Fig.~\ref{fig:trevalentwithtree}. In order to count how many times the same unlabeled map is obtained, we consider the unlabeled map and pick a tree by ``keeping to the right'' as described in Sec.~\ref{sec:tree_picking}. Proceeding from the root, we see that the first edge in the tree connects the root with a half edge on a first vertex and there are $p$ possible choices for the label $v$ of the vertex and $3$ possibilities for the choice of the half edge $v^1,v^2$ or $v^3$, hence  $3p$ admissible pairs $(r,v^h)$. Continuing along the tree, the starting half-edge of the next tree edges, $v^{h'}$, is fixed as it is the successor of $v^h$ and there are $3(p-1)$ distinct choices for the end half-edge $v_1^f$ of the second tree edge $(v^{h'},v_1^f)$. Iterating, we conclude that each unlabeled map is obtained exactly $p!3^p$ times hence:
\[
 \braket{\phi_{x^1}} = \sum_{ {\cal G} \in \pmb{\cal G}^{(r|x^1)}_{3} }  \int \left( \prod_{v\in V^{\rm int}({\cal G} )}
   \prod_{h=1}^{3} dx^h_v  \right)
   \prod_{ \{ v^h,w^{h'} \} \in E({\cal G})}C_{x^h_vx^{h'}_w} \;  \prod_{v\in V^{\rm int}({\cal G} )} - V(x_v^1,  x_v^2,x_v^3)  \; ,
\]
where $\pmb{\cal G}^{(r|x^1)}_{3}$ denotes the set of unlabeled rooted maps with three valent internal vertices and, in order to write the integrand, we have chosen a particular labeled representative of the unlabeled combinatorial map. The point of the normalization by $1/(q+1)$ of the interaction kernels \eqref{eq:geint}
is that each unlabeled map comes with coefficient exactly 1 in this sum.

The case above is easily generalized to arbitrary interactions and general $n$-point functions.
The perturbative series is indexed by possibly disconnected (unlabeled) combinatorial maps ${\cal G}$, each counted with coefficient 1.
Every half-edge $v^h$ on an internal vertex $v$ has a position $x^h_{v}$,
and the half-edges of the external vertices inherit the external positions $x_{r^{\nu}} = x^\nu $;
every internal vertex contributes a vertex kernel;
every edge connects two half-edges and contributes a covariance. 
Denoting $\pmb{\cal G}^{(r^1 | x^1) \dots (r^n | x^n)}$ the set of possibly disconnected combinatorial maps with $n$ external vertices $r^1$ with position $x^1$, $r^2$ with position $x^2$ and so on, we have:
\begin{equation}\label{eq:Feynmanampli}
\begin{split}
 \Braket{\phi_{x^1} \dots \phi_{x^n} } & = \sum_{ {\cal G} \in \pmb{\cal G}^{(r^1|x^1) \dots (r^n | x^n)} } {\cal A}({\cal G}) \; , \crcr
 {\cal A}({\cal G}) & = \int \left( \prod_{v\in V^{\rm int}({\cal G} )}
   \prod_{h=1}^{d_{\cal G}(v)} dx^h_v  \right)
   \prod_{ \{ v^h,w^{h'} \} \in E({\cal G})}C_{x^h_vx^{h'}_w} \;  \prod_{v\in V^{\rm int}({\cal G} )} - V(x_v^1,  \dots , x_v^{d_{\cal G}(v ) } )  \;.
\end{split}
\end{equation}
The quantity ${\cal A}({\cal G})$ is called the \emph{amplitude} of ${\cal G}$. 

The lowest orders of such expansion are made explicit in App.~\ref{app:loworderFeynm}, for the case of a local $\phi^4$ interaction.

\subsection{The tree expansion in stochastic quantization}
\label{sec:stoch-exp}

We now turn our attention to the perturbative expansion in terms of vertex kernels in the stochastic quantization.

\paragraph{The free model.} We first consider the free model with action $S(\phi) =  \frac{1}{2} \phi A \phi $, whose $n$-point functions in Eq.~\ref{eq:Wickthm} are the moments of the Gaussian distribution with covariance $C = A^{-1}$. The stochastic gradient descent equation for the free model,
\[
 \partial_t \Phi_{(t,x)} = -\frac{\delta S}{ \delta \phi_x} \Big{|}_{ \phi_x =  \Phi_{(t,x)} } + \xi_{(t,x)}  =   -( A \Phi)_{(t,x)} + \xi_{(t,x)}  \;,\qquad \braket{\xi_{(t,x)} \; \xi_{(t',x')}}_{\xi} =   2  \; \delta_{tt'} \delta_{xx'} \;,
\]
 is solved in terms of the Green function discussed in Sec.~\ref{sec:green}:
\[
\Phi_{(t,x)} = \int_{t_1,x_1} \pmb{C}_{(t,x)(t_1,x_1)} \xi_{(t_1,x_1)} =\int_{-\infty}^t dt_1 \int_{x_1} (e^{- (t-t_1) A} )_{xx_1}  \; \xi_{(t_1,x_1)}\; ,
\]
where we imposed the initial condition at past infinity.

The stochastic correlation functions are computed by taking expectations over the Gaussian noise $\xi$. For instance the stochastic $2$-point function is:\footnote{This computation justifies the normalization of the white noise covariance to $2$, as this offsets the fact that the intermediate time appears as $-2t_1$ after integration over the intermediate position.}
\begin{equation}\label{eq:Phi_free_SQ}
\begin{split}
 \Braket{ \Phi_{(t,x)} \Phi_{(s,y)} }_{\xi}
&  = \int_{-\infty}^t dt_1 \int_{x_1} (e^{- (t-t_1) A} )_{xx_1}
 \int_{-\infty}^s dt_2 \int_{x_2} (e^{- (s-t_2) A } )_{yx_2} \; 2 \; \delta_{t_1t_2} \delta_{x_1x_2} \crcr
& = \int_{-\infty}^{\min(t,s)}  2  \; dt_1 \;
  \left( e^{-(t+s-2t_1) A} \right)_{xy}
\xlongequal{u=t+s-2t_1}
\int_{|t-s|}^{\infty} du  \left( e^{-u A} \right)_{xy}
  = \left( \frac{e^{- |t-s| A }}{ A}  \right)_{xy}
\; ,
\end{split}
\end{equation}
which at equal fictitious times becomes $\Braket{ \Phi_{(t,x)} \Phi_{(t,y)} }_{\xi}
  = \left( 1/A  \right)_{xy} $
reproducing the free $2$-point function in the path integral quantization $\Braket{\phi_x \phi_y}  =  C_{xy}$.
Once the two point function is proven to coincide in the two approaches, the equality of the equal times stochastic expectations:
\[
\Braket{\Phi_{(t, x^1)} \dots \Phi_{(t, x^n)} }_{\xi}  =
 \begin{cases}
      \sum_{P \in P_n } \prod_{(i,j)\in P}
       \Braket{\Phi_{(t, x^i )}\Phi_{ (t, x^j) }}_{\xi} = 
       \sum_{P \in P_n } \prod_{(i,j)\in P}
      \left( A^{-1} \right)_{x^i x^j} \quad & (n \text{ even}),\\
      0  & (n \text{ odd}),
      \end{cases} \; ,
\]
and path integral expectations in Eq.~\eqref{eq:Wickthm} follows.

We observe that the stochastic 2-point function at equal times is independent of the time $t$. This is true for all the stochastic correlation functions and it is 
due to the fact that we have chosen to solve the gradient descent equation with initial condition at past infinity $t_{\rm initial}=-\infty$, hence we are already in the infinite-time limit. If instead we solve the equation with initial condition $\Phi_{(0,x)}=0$ at $t_{\rm initial}=0$, we find:
\[
\Braket{ \Phi_{(t,x)} \Phi_{(t,y)} }_{\xi} 
= \int_0^{2t} du  \left( e^{-u A} \right)_{xy} \;,
\]
and in order to recover the quantum field theory propagator we need to take the late time limit $t\to+\infty$.

\paragraph{The interacting model.}
We now switch to the interacting case $S(\phi) = \frac{1}{2}  \phi A \phi + V(\phi)$ which leads to a non-linear stochastic gradient descent equation:
\be \label{eq:SPDE}
  ( \partial_t \Phi + A \Phi)_{(t,x)} = - \frac{\delta V} {\delta \phi_x } \Big{|}_{\phi_x = \Phi_{(t,x)} } +\xi_{(t,x)} \;,\qquad \braket{\xi_{(t,x)} \; \xi_{(t',x')}}_{\xi} =  2  \; \delta_{tt'} \delta_{xx'} \; ,
\ee
whose solution can be written as a sum over trees. In fact, following Sec.~\ref{sec:trees_ODEs}, there are several such formulas.
Using \eqref{eq:geintderiv} we write \eqref{eq:SPDE} as:
\[
  ( \partial_t\Phi + A \Phi)_{(t,x)} = - \sum_{q\ge 1}  \int_{ x^1 \dots x^q} \;  V (x, x^1,\dots , x^q) \; \Phi_{(t,x^1)} \dots \Phi_{(t,x^q)}  +\xi_{(t,x)} \;,\;\;  \braket{\xi_{(t,x)} \; \xi_{(t',x')}}_{\xi} =  2  \; \delta_{tt'} \delta_{xx'} \; ,
\]
and we solve it by repeated substitutions in terms of rooted plane trees. 
We denote $\pmb{\cal T}^{(r | t,x)}$ the set of rooted plane trees with root vertex (and half-edge) $r$ with position $x_r=x$ and time $t_r=t$, 
and for every other vertex we label the half-edge pointing towards the root as $1$ to get:
\begin{equation}
\begin{split}\label{eq:SPDTsoltree}
\Phi_{(t,x)} = & \sum_{{\cal T}\in \pmb{\cal T}^{(r | t,x)} }
  \int_{-\infty}^\infty \prod_{v\in   V^{\rm int}( {\cal T} )  }  dt_v
   \int  \left( \prod_{v\in   V^{\rm int}( {\cal T} )  } \prod_{h=1}^{d_{\cal T} (v) } dx^h_v \right)  \;
   \prod_{(v^h,w^1) \in E( {\cal T} ) } \theta(t_v-t_w)  (e^{-(t_v-t_w)A})_{x^h_vx^1_w}  \crcr
 & \qquad \qquad \qquad \times \prod_{ \substack{ { v\in V^{\rm int}( {\cal T} ) } \\ { d_{\cal T} (v ) \ge 2} } }  
    - V(x_v^1,\dots x_v^{d_{\cal T}(v) } )
\prod_{ \substack{  { v\in V^{\rm int}( {\cal T} ) } \\ { d_{\cal T}(v ) = 1 } } } \xi_{( t_v, x_v^1)}    \; ,
\end{split}
\end{equation}
where $E({\cal T})$ includes the edge hooked to the root $(r,w^1)$ for some $w$, hence the root time $t$ is the maximal time. One can check directly that \eqref{eq:SPDTsoltree} is a solution of the stochastic gradient descent equation by using $ ( \partial_{t} + A ) \pmb{C}_{(t,x) (t',x')} = \delta_{tt'}\delta_{xx'} $ (see Eq.~\eqref{eq:freeC}) and observing that by cutting the vertex $w$ hooked to the root $r$, a rooted plane tree splits into one rooted plane tree for each branch hooked to $w$. 

We remark that we have passed the term $A\Phi$ on the left hand side of the stochastic gradient descent equation. We can very well decide to leave it on the right hand side, in which case it would lead to time-ordered integrals of arbitrary long chains of $(-A)$ bivalent vertices, which sum up to:
\[
\sum_{l\ge 0} \int^{t}_{t'} dt_1 \int_{t'}^{t_1}dt_2 \dots \int^{t_{l-1}}_{t'} dt_l  \; (-A)^l
 = e^{-(t-t')A} \;,
\]
reconstructing the heat kernel.

In the case of a local $\phi^{q+1}$ interaction $V(x^1,\dots x^{q+1})=g\prod_{i=2}^{q+1} \delta_{x^1x^i}$, all but one of the half edge positions $x_v^h$ can be integrated out on every vertex and one gets: 
\begin{equation}\label{eq:SPDTq-1arytrees}
\begin{split}
\Phi(t,x) =   & \sum_{{\cal T}  \in \pmb{\cal T}^{(r|t,x)}_{q-{\rm ary}} } 
\int_{-\infty}^{+\infty} \prod_{v \in V({\cal T})} dt_v \int \prod_{v \in V^{\rm int}({\cal T})}  d x_v \;  \prod_{(v,w) \in E( {\cal T} ) } \theta(t_v-t_w)  (e^{-(t_v-t_w)A})_{x_vx_w} 
\crcr
 &  \qquad  \qquad \times
  \prod_{ \substack{ { v\in V^{\rm int}( {\cal T} ) }\\ { d_{\cal T} (v) = q+1 } } }  
    (-g)
\prod_{ \substack{ {  v\in V^{\rm int}( {\cal T} ) } \\ { d_{\cal T} (v) = 1 } } } \xi_{( t_v, x_v)}    \;.
\end{split}
\end{equation}
In this sum we have one term per embedded tree, but the value of the integral is not sensitive to the embedding. In App.~\ref{app:loworderstoc} we compute explicitly some contributions to the stochastic expectations.

\paragraph{Noise edges and the stochastic rules.}\label{sec:stochRules}

In order to compute stochastic expectations:
\[ \Braket{\Phi_{(t^1, x^1)} \dots \Phi_{(t^n, x^n)} }_{\xi}  \;, \] one inserts the tree expansion for the stochastic field $ \Phi_{(t^\nu, x^\nu)}$
and takes the expectation with respect to $\xi$. This amounts to mating leaves by $\xi$ contractions. Each $\xi$ contraction acts by gluing two leaves (either within the same tree or between two distinct trees) and forms a new edge which we call a \emph{noise edge},
that contributes: 
\[
\begin{split}
& \int_{t',t'',z_1,z_2}\pmb{C}_{(t,x) (t',z_1) }
   \pmb{C}_{ (s,y) (t'' , z_2) } \;\; \braket{\xi_{ (t',z_1) } \; \xi_{(t'', z_2) } }_{\xi}
= \left( \frac{e^{- |t-s| A  }}{A}  \right)_{xy} \;,
\end{split}
\]
to the integrand, where in order to evaluate the integral we observe that the left-hand side is exactly the 2-point function of the free theory from \eqref{eq:Phi_free_SQ}.
After performing the $\xi$ average one obtains the stochastic correlation functions as sums over plane trees decorated by embedded noise edges, which we call stochastic diagrams. 

Let us denote $ \pmb{\cal G}^{(r^1| t^1 ,x^1) \dots (r^n| t^n , x^n) }$ the set of possibly disconnected combinatorial maps with $n$ external vertices: $r^1$ with time $t^1$ and position $x^1$, $r^2$ with time $t^2$ and position $x^2$ and so on. The stochastic $n$-point function is a sum over stochastic diagrams with $n$ external vertices, which consist of a possibly
disconnected combinatorial map ${\cal G} \in  \pmb{\cal G}^{(r^1| t^1 ,x^1) \dots (r^n| t^n ,x^n) }$, together with a partition of the edges of ${\cal G}$ into:
\begin{itemize}
    \item[-] \emph{tree edges} which form a forest $ {\cal F} = ({\cal T}^1,\ldots,{\cal T}^n)$ consisting of $n$ disjoint rooted plane trees ${\cal T}^\nu \in \pmb{\cal T}^{(r^\nu|t^\nu x^\nu)}$ for $\nu = 1,\dots n$, that is the tree ${\cal T}^{\nu}$ is rooted at the external vertex $r^\nu$ which has time $t^\nu$ and position $x^\nu$, such that ${\cal F}$ is a spanning forest in ${\cal G}$ (that is every vertex of ${\cal G}$ belongs to a tree ${\cal T}^\nu$), which we denote ${\cal F} \prec {\cal G}$;
    we denote by $\pmb{\cal F}^{(r^1| t^1 ,x^1) \dots (r^n| t^n , x^n) }$ the set of all such forests;

    \item[-] \emph{noise edges} $E({\cal G}) \setminus E( {\cal F} )$. 
\end{itemize}

We order the edges in the forest as $(v^h,w^1) \in {\cal F}$ , where the vertex $v$ is parent of $w$ and we always connect the tree edge to the half edge $w^1$ on $w$. Each stochastic diagram brings a contribution obtained by applying the following diagrammatic rules: 
\begin{itemize}
    \item[-] every stochastic diagram contributes a \emph{combinatorial factor} $1$,
  \item[-] every \emph{external vertex} $r^\nu$ has a fixed time coordinate $t_{r^\nu} = t^\nu$
    and a fixed position $x_{r^\nu}=x^{\nu}$,
    \item[-] every \emph{internal vertex} $v$ has a time coordinate $t_v$ and $ d_{\cal G}(v)$
    positions variables $x_v^{h}$ for $h=1, \dots d_{\cal G}(v)$, one for each half-edge, which are integrated,
    \item[-] every internal vertex $v$ contributes a vertex kernel
         $-V(x_v^1,\dots x_v^{d_{\cal G}(v)})$,
    \item[-] every tree edge $(v^h,w^1)\in E({\cal F})$ contributes a \emph{tree propagator}
    $  \theta(t_v - t_w) (e^{ -(t_v -t_w) A })_{x_v^h x_w^1}$, 
    \item[-] every noise edge $(v^h, w^{h'}) \in E({\cal G}) \setminus E( {\cal F} )$ contributes a \emph{noise-edge propagator}
         $ \left( \frac{e^{- |t_v-t_w| A}}{A} \right)_{x_v^h  x_{w}^{h'} }$.
\end{itemize}
Gathering everything we obtain:
\begin{align}\label{eq:Stochampli}
&  \Braket{\Phi_{(t^1, x^1)} \dots \Phi_{(t^n, x^n)} }_{\xi} =
   \sum_{ {\cal F} \in \pmb{\cal F}^{(r^1| t^1 ,x^1) \dots (r^n| t^n , x^n) } }  \;
   \sum_{ \substack{ { {\cal G} \in  \pmb{\cal G}^{(r^1| t^1 ,x^1) \dots (r^n| t^n , x^n) } } 
   \\ { {\cal F} \prec {\cal G}  } } }  A({\cal G}, {\cal F} ) \; , \\
&  A({\cal G}, {\cal F} )  =   \int_{-\infty}^{+\infty} \prod_{v\in   V^{\rm int}( {\cal G} )  }  dt_v
   \int  \left( \prod_{v\in   V^{\rm int}( {\cal G} )  } \prod_{h=1}^{d_{\cal G} (v) } dx^h_v \right)  \;
   \prod_{(v^h,w^1) \in E(  {\cal F} ) } \theta(t_v-t_w)  (e^{-(t_v-t_w)A})_{x^h_vx^1_w}  \nonumber \\
 & \qquad \qquad \qquad \qquad
  \prod_{ \{ v^h, w^{h'} \}  \in  E({\cal G}) \setminus E( {\cal F} ) }           \left( \frac{e^{- |t_v-t_w| A}}{A} \right)_{x_v^h  x_{w}^{h'} }   \; 
 \prod_{v\in V^{\rm int}( {\cal G} ) }  
    - V(x_v^1,\dots x_v^{d_ {\cal G} (v) } ) \; . \nonumber
\end{align}
Observe that, while the tree edges are oriented $ (v^h,w^1) \in E( {\cal F}) $, with $v$ the parent of $w$, the noise edges $\{ v^h, w^{h'} \}  \in  E({\cal G}) \setminus E( {\cal F} ) $ are not oriented.

The lowest orders of such expansion are made explicit in App.~\ref{app:loworderstoc}, for the case of a local $\phi^4$ interaction.

\subsection{Main result}

We now give the precise reformulation of our main Theorem \ref{thm:main} in the sense perturbation series.

\begin{theorem}\label{thm:main_precise}
Consider ${\cal G} \in \pmb{\cal G}^{(r^1|x^1) \dots (r^n | x^n)} $.
Then
\begin{equation}\label{eq:sumstoch}
 {\cal A}({\cal G}) =  
   \sum_{   \substack{     { {\cal F} \in \pmb{\cal F}^{(r^1| t ,x^1) \dots (r^n| t , x^n) } } \\    { {\cal F} \prec {\cal G}   } } }
    A({\cal G}, {\cal F} )  \;. 
\end{equation}
\end{theorem}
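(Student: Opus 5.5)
The plan is to reduce \eqref{eq:sumstoch} to an identity among one-dimensional edge parameters, with the spatial integrand held fixed. I will write every propagator on both sides through the heat kernel.

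\textbf{Propagators as heat-kernel integrals.} On the Feynman side, each edge $e=\{v^h,w^{h'}\}$ carries
\[
C=\int_0^\infty du_e\, e^{-u_eA}.
\]
On the stochastic side, a tree edge carries $\theta(t_v-t_w)e^{-(t_v-t_w)A}$. This equals $\int_0^\infty du_e\,\delta(u_e-(t_v-t_w))\,e^{-u_eA}$. A noise edge carries
\[
\frac{e^{-|t_v-t_w|A}}{A}=\int_0^\infty du_e\,\theta\big(u_e-|t_v-t_w|\big)\,e^{-u_eA}.
\]
Fix the positions $x_v^h$ and the edge parameters $u=\{u_e\}_{e\in E({\cal G})}$. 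Let $F(u)$ be the product of the heat kernels $(e^{-u_eA})$ and the vertex kernels $-V$. Then ${\cal A}({\cal G})=\int_{(0,\infty)^{E}}du\,F(u)$ (with the position integrals understood). The right-hand side of \eqref{eq:sumstoch} is $\int du\,F(u)\,W(u)$, where
\[
W(u)=\sum_{{\cal F}\prec{\cal G}}\int\prod_{v\in V^{\rm int}}dt_v\prod_{(v^h,w^1)\in E({\cal F})}\delta\big(u_e-(t_v-t_w)\big)\prod_{e\notin E({\cal F})}\theta\big(u_e-|t_v-t_w|\big),
\]
with all roots at time $t$.

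\textbf{Half-edge bookkeeping.} The stochastic rules attach a tree edge to the half-edge $w^1$ of the child. For a given map ${\cal G}$ and forest ${\cal F}$, I choose the labeled representative so that the tree edge entering $w$ lands on $w^1$. This only rotates the half-edge labels at $w$. Since the vertex kernels are cyclically symmetric and the amplitude depends only on the unlabeled map, $F(u)$ is unchanged. The theorem therefore reduces to showing $W(u)=1$ for Lebesgue-a.e.\ $u$.

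\textbf{Key step: $W(u)=1$.} I view $u_e$ as the length of edge $e$ and let $d(v)$ be the weighted graph distance from $v$ to the set of roots. Each connected component of ${\cal G}$ contains an external vertex, so $d$ is finite. The noise constraints and the tree constraints together say that $v\mapsto t_v$ satisfies $|t_v-t_w|\le u_e$ on every edge, i.e.\ it is $1$-Lipschitz along edges. Since the roots sit at time $t$, this gives $t_v\ge t-d(v)$. Following the forest path from $v$ back to its root, the tree edges are saturated, so $t_v=t-\ell_{\cal F}(v)$, where $\ell_{\cal F}(v)$ is the length of that path; this gives $t_v\le t-d(v)$. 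Hence $t_v=t-d(v)$, and every forest path is a geodesic to the root set. For $u$ off a null set, geodesics are unique and no non-tree edge saturates $|t_v-t_w|=u_e$. So exactly one forest contributes: the shortest-path forest. Self-loops are automatically noise edges with $\theta(u_e)=1$, and the root edges are always tree edges, consistent with the definition of $\pmb{\cal F}$.

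\textbf{Evaluating the surviving integral.} For the shortest-path forest, the $\delta$'s fix all the $t_v$. Ordering the vertices by depth in ${\cal F}$, the map from $\{t_v\}$ to $\{u_e\}_{e\in{\cal F}}$ is triangular with unit diagonal, so the Jacobian is $1$. The noise $\theta$'s then equal $1$, so $W(u)=1$ a.e., which proves \eqref{eq:sumstoch}.

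The main obstacle is making this change of variables and the interchange of the $u$- and $t$-integrals with the sum over forests rigorous under the formal, regularized assumptions, together with the measure-zero tie sets. I would handle both by Fubini on absolutely integrable regularized integrands, and by noting that the ties are contained in finitely many hyperplanes $\{\sum_{e\in P}u_e=\sum_{e\in P'}u_e\}$.
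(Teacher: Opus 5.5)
Your argument is correct, and it takes a genuinely different route from the paper's.

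\textbf{How the two proofs differ.} The paper's proof is dynamic. At step $p$ it gives all not-yet-reached vertices a common running time $t_p\le t_{p-1}$. It writes the product of propagators on all candidate edges as $\int_{-\infty}^{t_{p-1}}dt_p\,\frac{d}{dt_p}[\cdots]$, and the Leibniz rule then selects one new tree edge. At the end, it resums the resulting time-ordered (recursive) forests over all total orders compatible with each forest's partial order, which produces $A({\cal G},{\cal F})$.

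Your proof is static and lives in Schwinger parameters. For fixed edge lengths $u$, you show that the forest sum is a partition of unity. The unique surviving forest is the shortest-path forest to the root set, with $t_v=t-d(v)$.

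\textbf{Why they are the same mechanism.} In your variables, a candidate edge carries $\int_0^\infty du_e\,\theta\big(u_e-(t_w-t_p)\big)e^{-u_eA}$. The derivative $\frac{d}{dt_p}$ picks out the first candidate edge to saturate as $t_p$ decreases. So the paper's interpolation is a continuous-time Dijkstra exploration, and its output is exactly your geodesic forest.

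\textbf{What each route buys.} Your route gives:
\begin{itemize}
\item a pointwise identity in $u$, i.e.\ a Hepp-sector-like decomposition of Schwinger-parameter space;
\item a clean reading of the fictitious time as minus the proper-time distance to the external points;
\item no final resummation over linear extensions.
\end{itemize}
The price is a genericity argument on the tie hyperplanes and a Fubini exchange with one Schwinger parameter per edge. The paper's route needs neither. More importantly for the paper, its form lifts directly to the path-integral interpolation of Sec.~\ref{sec:pathint}, where no graph is expanded and so no individual edge parameter is available.

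\textbf{One correction.} Your remark that ``the root edges are always tree edges'' is false and should be dropped.
\begin{itemize}
\item In the one-edge graph, the only edge joins $r^1$ to $r^2$, so it is a noise edge.
\item In the tadpole graph, exactly one of the two root edges is a tree edge and the other is noise.
\end{itemize}
Your argument only needs every spanning forest with exactly one root per component to be admissible. This is what the paper's $\pmb{\cal F}$ allows: roots may have forest-degree $0$, as in the first interpolation step, where non-selected root edges become noise edges, and as in Fig.~\ref{fig:taylortadpole}. The shortest-path forest is always such a forest, so $W(u)=1$ a.e.\ stands. Under the restrictive reading you mention, the right-hand side of \eqref{eq:sumstoch} would vanish for the one-edge graph.
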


In other words, every term in the Feynman expansion in Eq.~\eqref{eq:Feynmanampli} is the sum of the amplitudes of all the stochastic diagrams  in \eqref{eq:Stochampli} based on the same combinatorial map ${\cal G}$ and equal external times $t^\nu  = t $.

By summing over ${\cal G} \in \pmb{\cal G}^{(r^1|x^1) \dots (r^n | x^n)} $ we recover the statement in Theorem \ref{thm:main} in the sense of formal power series in the vertex kernels.
 
\section{Proof of Theorem~\ref{thm:main_precise} at the level of individual graphs}
\label{sec:first_proof}

The strategy of our proof consists in starting from a Feynman amplitude ${\cal A}({\cal G})$ and rewrite it as a sum over forests ${\cal F} \prec {\cal G}$ using an appropriate Taylor interpolation. Starting from a combinatorial map ${\cal G}$ with $n$ external vertices $r^{\nu},\nu=1,\dots n$, one can list all the forests 
${\cal F} \prec {\cal G}$ proceeding recursively as follows:
\begin{itemize}
    \item[-] we declare at step $0$ the set of vertices $ V_0 = \{ r^{\nu} | \nu = 1,\dots n \} $ consisting in the external vertices of ${\cal G}$, the set of tree edges ${\cal F}_0 = \emptyset$, and the set of noise edges ${\cal L}_0 = \{ \{ r^\nu,r^{\nu'} \}\in E({\cal G})\} $ consisting in the edges connecting directly external vertices of ${\cal G}$ (if they exist);
    \item[-] for every step $p = 1,\dots , |V^{\rm int}({\cal G})|$ we pick any of the edges $e_p = ( v^h ,v_p^{h'} ) $ connecting one of the vertices $v\in V_{p-1}$ to a vertex 
    $v_p\in V^{\rm int}({\cal G}) \setminus V_{p-1}$; we orient $e_p$ from $v$ to $v_p$ and relabel $v_p^{h'} \equiv v_p^1$;
    \item[-] we update:
      \begin{itemize}
      \item the set of vertices to $V_p = V_{p-1} \cup \{v_p\}$,
      \item the set of tree edges to ${\cal F}_p = {\cal F}_{p-1} \cup \{e_p\}$,
      \item the set of noise edges to:
       \[
       {\cal L}_p = {\cal L}_{p-1}\cup \bigg\{ \{  u^{h}, v_p^{h'} \} \in E({\cal G}) \, \Big{|} \, u \in V_{p} \, , \, \,\{  u^{h}, v_p^{h'} \}  \neq e_p \bigg\} \; ,\]
       that is, we declare as noise edges all the non-tree edges that connect the vertex $v_p$ to vertices in $V_{p-1}$, as well as the tadpole edges on $v_p$;
      \end{itemize}
    \item[-] the set of tree edges after $ |V^{\rm int}({\cal G})|$ iteration steps is a spanning forest in ${\cal G}$, and listing all the allowed choices of tree edges at all the steps, exhausts the list of all such forest.
\end{itemize}

This iterative construction tells us how to combinatorially partition ${\cal G}$ intro spanning forests glued by noise edges, but it does not tell us how to relate the amplitude ${\cal A}({\cal G})$ in Eq.~\eqref{eq:Feynmanampli} to the amplitudes of the stochastic diagrams  in \eqref{eq:Stochampli}, and in particular how the fictitious time and heat kernel arise.
This part of the proof is achieved
by a judicious use of the fundamental theorem of calculus, or lowest-order Taylor expansion with integral rest:
\[
f(t) = f(a) + \int_a^t dt' \;\frac{d}{dt'} f(t') \;,
\]
which, remembering that $\lim_{u\to \infty} (e^{-uA} )_{x_1y_1} =0$, we use to write the following fundamental formula:
\[
 \prod_i \left(  \frac{e^{ -(s_i - t_{p-1}) A}}{A}\right)_{x_iy_i}
 =\int_{-\infty}^{t_{p-1}} dt_p \;\frac{d}{dt_p}
  \left[  \prod_i \left(  \frac{e^{ -(s_i - t_p) A}}{A}\right)_{x_iy_i} \right] \;.
\]
This formula holds also in the case that $s_i=t_{p-1}$ for some (or all) $i$, in which case the corresponding heat kernels on the left-hand side are replaced by a delta function, because $\lim_{u\to 0} (e^{-uA} )_{x_1y_1} = \delta_{x_1y_1}$.
We have used subscripts $p-1$ and $p$ in anticipation of the iterative procedure to be described below.

At each step one needs to select candidate tree edges and apply the interpolation formula. At the first step $p=1$ the candidate tree edges are all the edges that connect vertices in $V_0= V^{\rm ext}({\cal G})$ with vertices in $V({\cal G}) \setminus V_0 = V^{\rm int}({\cal G})$, that is the edges of ${\cal G}$ which connect external with internal vertices. 
We assign to all the external vertices $r^{\nu}$ the same fictitious time $t_{r^\nu} = t$, and to all the internal vertices connected to the external ones the same fictitious time $t_1 \le t$. We interpolate\footnote{In case two external vertices are directly connected by an edge, this edge is declared a noise edge and it is not interpolated.} as:
\[
\begin{split}
& \prod_{\substack{ e = (r^\nu, w^{h} ) \in E({\cal G}) \\ r^{\nu}\in V_0 ,  w \in V({\cal G}) \setminus V_0  } } \left( \frac{1}{A}\right)_{x^\nu x_w^h }
 =  \int_{-\infty}^{t} dt_1
\frac{d}{dt_1} \left[ \prod_{\substack{ e = (r^\nu, w^h ) \in E({\cal G}) \\ r^{\nu}\in V_0 ,  w \in V({\cal G}) \setminus V_0  } } \left( \frac{e^{-(t-t_1)A}}{A}\right)_{x^\nu x_w^h } \right] \crcr
& \qquad  =\sum_{\substack{e_1= (r^{\nu} ,v_1^1 ) \in E({\cal G}) \\ r^{\nu} \in V_0 ,  v_1^1\in V( { \cal G } ) \setminus V_0 } } \int_{-\infty}^t
    dt_1 \; (e^{-(t-t_1) A})_{x^{\nu} x^1_{v}  }  
 \prod_{\substack{e = (r^{\mu},w^{h} ) \in E({\cal G})  \setminus \{e_1\} \\ r^{\mu} \in V_0 , w\in V( { \cal G } ) \setminus V_0 } }  
    \left( \frac{e^{-(t-t_1) A}}{A} \right)_{x^{\mu} x^h_w } \;.
\end{split}
\]
It follows that at this first step we obtain a sum over all the possible ways to chose a first tree edge $e_1=(r^{\nu}, v_1^{h})$ connecting one vertex in $V_0$ with an internal vertex,
and we relabel the half edge $v_1^{h}$ as $v_1^1$. We set
$V_1 = V_0 \cup\{ v_1\}$, ${\cal F}_1 = \{e_1\}$ and, once $e_1$ is chosen,  if there are any other edges connecting vertices in $V_0$ (that is external vertices) with $v_1$, we declare them noise edges and we also declare as noise edges any tadpoles on $v_1$.

The candidate tree edges at step $p=2$ depend on the first tree edge and its end vertex.
For each term in the sum, we declare as candidate tree edges all the edges $(v^h,v_2^{h'})$ which connect a vertex $v \in V_1$ with a vertex $v_2\in V({\cal G})\setminus V_1 $. 
For $v\in V_0 = \{ r^1,\ldots, r^n \}$, these edges have an interpolated propagator $ e^{-(t-t_1) A }/A  $, 
while for $v =v_1 \in V_1\setminus V_0$ they have a propagator $1/A =  e^{-(t_1-t_1)A}/A $.
We take a second interpolation on all these candidate tree edges $(v^h,v_2^{h'} )$ by assigning a fictitious time $t_2 \le t_1$ to the vertices $v_2\in V({\cal G}) \setminus V_1$. The interpolation at step $p$ writes: 
\[
\begin{split}
& \prod_{\substack{ e = (v^h, w^{h'} ) \in E({\cal G}) \\ v\in V_{p-1} ,  w \in V({\cal G}) \setminus  V_{p-1} } } \left( \frac{e^{-(t_v-t_{p-1} ) A } }{A}\right)_{x_v^h x_w^{h'} }
 =  \int_{-\infty}^{t_{p-1}} dt_p
\frac{d}{dt_p} \left[\prod_{\substack{ e = (v^h, w^{h'} ) \in E({\cal G}) \\ v\in V^{p-1} ,  w \in V({\cal G}) \setminus  V_{p-1} } } \left( \frac{e^{-(t_v-t_{p} ) A } }{A}\right)_{x_v^h x_w^{h'}  } \right] \crcr
&  =\sum_{\substack{e_p = ( v^h ,v_p^1 ) \in E({\cal G}) \\  v \in V_{p-1} ,  v_p \in V( { \cal G } ) \setminus V_{p-1} } } \int_{-\infty}^{t_{p-1}}
    dt_p \; (e^{-(t_v - t_p ) A})_{x^h_{v} x^1_{v_p}  }  
 \prod_{\substack{e = \{ u^\ell,w^{h'}  \}  \in E({\cal G})  \setminus \{e_p\} \\ u \in V_{p-1} , w\in V( { \cal G } ) \setminus V_{p-1} } }  
    \left( \frac{e^{-(t_u-t_p ) A}}{A} \right)_{x^\ell_u x^{h'}_w } \;,
\end{split}
\]
where we note that the vertex $v$ belongs to $V_{p-1}$, hence has a time which we denote $t_v$ in this formula. This yields a sum over all the possible choices for the tree edge $e_p$ at step $p$.
Note that $t_v=t_{p-1}$ for the vertex $v=v_{p-1}$.

We iterate the interpolations on candidate tree edges until we exhaust the vertices. By the end of the iteration we have chosen a forest ${\cal F} $ of edges in ${\cal G}$ consisting in $n$ trees $ {\cal T}^\nu\in \pmb{\cal T}^{(r^\nu|t, x^\nu)}$, one for each root $r^\nu \in V_0$, such that every vertex in $V({\cal G})$ is either an external vertex (hence has an associated time $t$) or is the vertex $v_p$ added at step $p$ for some $p$, hence has time $t_p$. The times are totally ordered $t \ge t_1 \ge \dots \ge t_{|V({\cal G})|}$ and each term contributes (we order all the edges $(v_q^h,v_p^{h'})$ with $t_q>t_p$):
\[
\begin{split}
\int_{-\infty}^{t} dt_1 \dots \int_{-\infty}^{t_{|V({\cal G})| -1}} dt_{ |V({\cal G})| }
   \int  &\Big\{
   \prod_{v\in   V^{\rm int}( {\cal G} )  } \prod_{h=1}^{d_{\cal G} (v) } dx^h_v
   \Big\} 
   \Big\{
   \prod_{(v_q^h,v_p^1) \in E( {\cal F}    ) }  (e^{-(t_q-t_p)A})_{x^h_{v_q} x^1_{v_p}} 
   \Big\}
   \crcr
   & \Big\{
   \prod_{ ( v_q^h, v_p^{h'} ) \in  E({\cal G}) \setminus E( {\cal F}   )}   \left( \frac{e^{- ( t_q-t_p) A}}{A} \right)_{x^h_{v_q} x^{h'}_{v_p}}
   \Big\}
   \; 
 \Big\{
 \prod_{v_p \in V^{\rm int}( {\cal G} ) }  
    - V(x_{v_p}^1,\dots x_{v_p}^{d_{\cal G}(v_p) } )
    \Big\}\;.
\end{split}
\]

Finally, for every forest ${\cal F} \prec {\cal G}   $ we sum over all the total orders of the times compatible with the tree partial order of ${\cal F} $ and we obtain:
\[
\begin{split}
&  {\cal A}({\cal G}) =  
   \sum_{   \substack{     { {\cal F} \in \pmb{\cal F}^{(r^1| t^1 ,x^1) \dots (r^n| t^n , x^n) } } \\    { {\cal F} \prec {\cal G}   } } }
    A({\cal G}, {\cal F} )  \; , \crcr
&  A({\cal G}, {\cal F} )  =   \int_{-\infty}^{+\infty}
\Big\{
\prod_{v\in   V^{\rm int}( {\cal G} )  }  dt_v
\Big\}
   \int  \Big\{ \prod_{v\in   V^{\rm int}( {\cal G} )  } \prod_{h=1}^{d_{\cal G} (v) } dx^h_v \Big\}  \;
   \Big\{
   \prod_{(v^h,w^1) \in E( {\cal F}) } \theta(t_v-t_w)  (e^{-(t_v-t_w)A})_{x^h_vx^1_w}
   \Big\}
   \nonumber \\
 & \qquad\qquad\qquad\qquad
  \Big\{
  \prod_{ \{ v^h, w^{h'} \}  \in  E({\cal G}) \setminus E( {\cal F} )}           \left( \frac{e^{- |t_v-t_w| A}}{A} \right)_{x_v^h  x_{w}^{h'} }
  \Big\}
  \;
  \Big\{
 \prod_{v\in V^{\rm int}( {\cal G} ) }  
    - V(x_v^1,\dots x_v^{d_ {\cal G} (v) } )
    \Big\}
    \; ,
\end{split}
\]
proving Eq.~\eqref{eq:sumstoch}. Some examples are worked out in App.~\ref{app:lowordex}

\section{Proof of Theorem~\ref{thm:main}  via a path integral Taylor interpolation}
\label{sec:pathint}

The Taylor interpolation we performed at the level of individual graphs in the previous section can be promoted to a Taylor interpolation at the level of the path integral.
We thus prove directly, without performing the full perturbative expansion, that the correlation functions of the model can be written as the expectation over a white noise $\xi_{(t,x)}$ of a product of stochastic fields $\Phi_{(t,x)}$ which are given by tree sums as in Eq.~\eqref{eq:SPDTsoltree}.
In fact we will show the more general result that the expectation with respect to the field $\phi$ of an arbitrary observable $F(\phi)$ is the expectation of $F(\Phi)$ with respect to the white noise.
This proves the second part of Theorem~\ref{thm:main} avoiding the full Feynman perturbative expansion. 

 In order to simplify the notation we suppress whenever possible the position variables, that is, below:
 \[
\Phi_t \equiv \Phi_{(t,x)}\; , \;\; \xi_t \equiv \xi_{(t,x)} \;, \;\; \pmb{C}_{t,s} = \theta(t-s) e^{-(t-s)A} \equiv \theta(t-s) (e^{-(t-s)A})_{xy} = \pmb{C}_{(t,x)(s,y)} \;,
 \]
so that $\int_s \pmb{C}_{t,s} \xi_s \equiv \int_s \int_y \theta(t-s) (e^{-(t-s)A})_{xy} \,\xi_{(s,y)} $ and so on.

In the following, we will use the notion of recursive trees (resp. rooted planar trees) from Sec.~\ref{sec:trees_ODEs}
with the difference that we allow the root vertex $0$ (resp. $r$) to have degree different from $1$ (including $0$).

\paragraph{Taylor interpolation.}
In order to write integrals of functionals $F(\phi)$ against the interacting measure $\nu$ in \eqref{eq:nu}, we use the properties of the Gaussian measure discussed in Sec.~\ref{sec:Gauss} and introduce two copies of  the field, one copy $\phi^0$ for the functional $F$, and a second copy $\phi^1$ 
for the interaction. 
That is, we write
\[
\nu(F) = \int \diff \nu(\phi) F(\phi)  =  \braket{F(\phi)}  =
\frac{1}{Z(0)}\left[ e^{\frac{1}{2}
  \delta_{\phi^0} C \delta_{\phi^0} +   \delta_{\phi^0} C \delta_{\phi^1}
  + \frac{1}{2}   \delta_{\phi^1} C \delta_{\phi^1} } 
 \;   F(\phi^0) \;  e^{ - V(\phi^1)}  
\right]_{\phi^0 = \phi^1=0} \; ,
\]
which is a special case of \eqref{eq:copies} and
where we used $\delta_{\phi^0} C \delta_{\phi^1}=\delta_{\phi^1} C \delta_{\phi^0}$ by symmetry of $C$.

This is the starting point of our interpolation. Taking into account that the covariance of the theory can be represented as:
\[ 
\frac{1}{A} = \frac{ e^{ - (t_i - t_j ) A } }{A} \Big{|}_{t_i=t_j}=  \int_{0}^{\infty} du \; e^{- (t_i-t_j+u)A} \Big{|}_{t_i=t_j} \;,
\]
we have:
\[
\nu(F) =  
\frac{1}{Z(0)}\left[ e^{\frac{1}{2} \delta_{\phi^0} C \delta_{\phi^0} + 
   \delta_{\phi^0} \frac{e^{ - (t_0 - t_1 ) A } }{A} \delta_{\phi^1}
+\frac{1}{2} \delta_{\phi^1} C \delta_{\phi^1} } 
 \; F(\phi^0) \; e^{ - V(\phi^1) }    \right]^{t_1 = t_0}_{\phi=0} \; ,
\]
where $\phi=0$ means all $\phi^i=0$ and we introduced two equal fictitious time variables $t_0=t_1$.
Using a Taylor formula with integral rest at first order this becomes:
\[
\begin{split}
\nu(F)  = & 
\frac{1}{Z(0)}\left[ e^{\frac{1}{2} \delta_{\phi^0} C \delta_{\phi^0} + 
  \delta_{\phi^0} \frac{e^{ - (t_0 - t_1 ) A } }{A} \delta_{\phi^1}  +\frac{1}{2} \delta_{\phi^1} C \delta_{\phi^1} } 
 F(\phi^0) \; e^{ - V(\phi^1) }   \right]^{t_1 = -\infty }_{\phi=0}
\crcr
& \qquad \qquad
+ \frac{1}{Z(0)}
\int^{t_0}_{-\infty} dt_1 \frac{d}{dt_1} \left[ e^{\frac{1}{2} \delta_{\phi^0} C \delta_{\phi^0} + 
  \delta_{\phi^0} \frac{e^{ - (t_0 - t_1 ) A } }{A} \delta_{\phi^1}
+\frac{1}{2} \delta_{\phi^1} C \delta_{\phi^1}} 
 \; F(\phi^0) \;  e^{ - V(\phi^1)}   \right]_{\phi=0} \; ,
\end{split}
\]
and in the boundary term at $t_1 = -\infty$ we observe that since the mixing term vanishes, the integral over $\phi^1$ decouples and reconstitutes $Z(0)$, that is:
\[
\begin{split}
 \nu(F)   & = \left[ e^{\frac{1}{2} \delta_{\phi^0} C \delta_{\phi^0} } 
 \; F(\phi^0)  \right]_{\phi=0} \crcr
 & \qquad+ 
 \frac{1}{Z(0)}
 \int_{-\infty}^{t_0} dt_1 \; 
 \left[ e^{\frac{1}{2} \delta_{\phi^0} C \delta_{\phi^0} + 
  \delta_{\phi^0} \frac{e^{ - (t_0 - t_1 ) A } }{A} \delta_{\phi^1}
+\frac{1}{2} \delta_{\phi^1} C \delta_{\phi^1}} \left( \delta_{\phi^0} e^{ - (t_0 - t_1 ) A } \delta_{\phi^1} \right)
 \; F(\phi^0) \; e^{ - V(\phi^1) }   \right]_{\phi=0} \; .
\end{split}
\] 
Commuting $e^{ - V(\phi^1)}$ with the non-exponentiated derivatives,
we get:
\[
\begin{split}
\nu(F) =  & \left[ e^{\frac{1}{2} \delta_{\phi^0} C \delta_{\phi^0} } 
 \; F(\phi^0)  \right]_{\phi=0} \crcr
 & + 
 \frac{1}{Z(0)}
 \int_{-\infty}^{t_0} dt_1 \; 
 \left[ e^{\frac{1}{2} \delta_{\phi^0} C \delta_{\phi^0} + 
  \delta_{\phi^0} \frac{e^{ - (t_0 - t_1 ) A } }{A} \delta_{\phi^1}
+\frac{1}{2} \delta_{\phi^1} C \delta_{\phi^1}} \; e^{ - V(\phi^1) }  \left( \delta_{\phi^0} e^{ - (t_0 - t_1 ) A } \delta_{\phi^1} \right) F(\phi^0) (- V|_{\phi^1} )  
 \right]_{\phi=0} 
 \; ,
\end{split}
\]
which is the result of our first interpolation step.

We continue with a second interpolation in the integral rest term. We begin by introducing two copies of field $\phi^1$:
\begin{itemize}
    \item[-] one copy we keep for the explicit vertex $-\delta_{\phi^1} V|_{\phi^1} $, which we have brought down from the exponential and which conserves the label $\phi^1$;
    \item[-] a second copy we use in the exponential of the interaction, which we label $\phi^2$, leading to an $e^{-V(\phi^2)}$ insertion in the path integral.
\end{itemize}
This leads to the following expression: 
\[
\begin{split}
\nu(F)  =  \left[ e^{\frac{1}{2} \delta_{\phi^0} C \delta_{\phi^0} }  \; F(\phi^0)  \right]_{\phi=0} +  & 
\frac{1}{Z(0)}
\int_{-\infty}^{t_0} dt_1 \; 
  \left[ e^{\frac{1}{2} \delta_{\phi^0} C \delta_{\phi^0} + 
  \delta_{\phi^0} \frac{e^{ - (t_0 - t_1 ) A } }{A}  ( \delta_{\phi^1}+ \delta_{\phi^2} ) 
+\frac{1}{2} ( \delta_{\phi^1} + \delta_{\phi^2} ) C ( \delta_{\phi^1} + \delta_{\phi^2} ) }   \right. \crcr
&\qquad  \left.  e^{ - V(\phi^2) }  \; ( \delta_{\phi^0} e^{ - (t_0 - t_1 ) A } \delta_{\phi^1} )  (- V|_{\phi^1} )
 \; F(\phi^0)  \right]_{\phi=0}\; .
\end{split} 
\]

Proceeding with the second interpolation step, we introduce a fictitious time $t_2=t_1$ in the Gaussian measure which we associated to every operator $\delta_{\phi^2}$ in the mixed terms, that is we rewrite the second bracket as:
\[
\begin{split}
&  \Bigg[ \exp\Bigg\{\frac{1}{2} \delta_{\phi^0} C \delta_{\phi^0} + 
 \delta_{\phi^0} \frac{e^{ - (t_0 - t_1 ) A } }{A} \delta_{\phi^1} 
 + \frac{1}{2} \delta_{\phi^1} C \delta_{\phi^1} +  
 \delta_{\phi^0} \frac{e^{ - (t_0 - t_2 ) A } }{A} \delta_{\phi^2}
 + \delta_{\phi^1} \frac{e^{ - (t_1 - t_2 ) A } }{A} \delta_{\phi^2} 
+\frac{1}{2} \delta_{\phi^2} C \delta_{\phi^2} \Bigg\}  
\\
&  \qquad \qquad  e^{ - V(\phi^2) } \; ( \delta_{\phi^0} e^{ - (t_0 - t_1 ) A } \delta_{\phi^1} )   (- V|_{\phi^1} )
 \; F(\phi^0)   \Bigg]^{t_2 = t_1}_{\phi=0}   \; .
\end{split} 
\] 
Applying the Taylor interpolation formula on $t_2$, we obtain two terms: one corresponding to the boundary value $t_2 = -\infty$ and one with an integral $\int^{t_1}_{-\infty} dt_2 \frac{d}{dt_2} [\ldots]$. In the boundary term the integral over $\phi^2$ decouples and reconstitutes $Z(0)$:
\[
\begin{split}
&
\frac{1}{Z(0)}\left[ e^{\frac{1}{2} \delta_{\phi^0} C \delta_{\phi^0} + 
 \delta_{\phi^0} \frac{e^{ - (t_0 - t_1 ) A } }{A} \delta_{\phi^1} 
 + \frac{1}{2} \delta_{\phi^1} C \delta_{\phi^1} 
+\frac{1}{2} \delta_{\phi^2} C \delta_{\phi^2}}   \; e^{ - V(\phi^2) }  \;
\left( \delta_{\phi^0} e^{ - (t_0 - t_1 ) A } \delta_{\phi^1} \right)   (- V|_{\phi^1} ) 
 \; F(\phi^0)   \right]_{\phi=0} \\
& = 
\left[ e^{\frac{1}{2} \delta_{\phi^0} C \delta_{\phi^0} + 
 \delta_{\phi^0} \frac{e^{ - (t_0 - t_1 ) A } }{A} \delta_{\phi^1} 
 + \frac{1}{2} \delta_{\phi^1} C \delta_{\phi^1} }    
\left( \delta_{\phi^0} e^{ - (t_0 - t_1 ) A } \delta_{\phi^1} \right)   (- V|_{\phi^1} ) \;
 F(\phi^0)   \right]_{\phi=0} \; ,
 \end{split}
\]
while the integral rest term writes:
\[
\begin{split}
&  \int_{-\infty}^{t_1} dt_2
  \frac{d}{dt_2}   \Bigg[ e^{\frac{1}{2} \delta_{\phi^0} C \delta_{\phi^0} + 
 \delta_{\phi^0} \frac{e^{ - (t_0 - t_1 ) A } }{A} \delta_{\phi^1} 
 + \frac{1}{2} \delta_{\phi^1} C \delta_{\phi^1} +  
 \delta_{\phi^0} \frac{e^{ - (t_0 - t_2 ) A } }{A} \delta_{\phi^2}
 + \delta_{\phi^1} \frac{e^{ - (t_1 - t_2 ) A } }{A} \delta_{\phi^2} 
+\frac{1}{2} \delta_{\phi^2} C \delta_{\phi^2} }  
\\
&  \qquad \qquad \qquad \qquad  \qquad \qquad  e^{ - V(\phi^2) } \; \left( \delta_{\phi^0} e^{ - (t_0 - t_1 ) A } \delta_{\phi^1} \right) (- V|_{\phi^1} ) 
 \; F(\phi^0)  \Bigg]_{\phi=0}   \; .
\end{split}
\]
We now evaluate the action of the derivative $d/dt_2$, which  brings down a second tree edge (as the edge must end on the $e^{ -V(\phi^2)} $ term). Putting everything together, we obtain:
\begin{equation}\label{eq:marelung}
\begin{split}
 & \nu(F) =  \left[ e^{\frac{1}{2} \delta_{\phi^0} C \delta_{\phi^0} }  \; F(\phi^0)  \right]_{\phi=0} \crcr
& +    \int_{-\infty}^{t_0} dt_1 \;  
\left[ e^{\frac{1}{2} \delta_{\phi^0} C \delta_{\phi^0} + 
 \delta_{\phi^0} \frac{e^{ - (t_0 - t_1 ) A } }{A} \delta_{\phi^1} 
 + \frac{1}{2} \delta_{\phi^1} C \delta_{\phi^1} }    
\left( \delta_{\phi^0} e^{ - (t_0 - t_1 ) A } \delta_{\phi^1} \right)  (- V|_{\phi^1} ) \;
 F(\phi^0)   \right]_{\phi=0}  + R_2(F)  \;,
\end{split} 
\end{equation}
with the Taylor integral rest term at order $2$:
\[
\begin{split}
R_2(F)  =    & \frac{1}{Z(0)} \int_{-\infty}^{t_0} dt_1  \int_{-\infty}^{t_1} dt_2
  \Bigg[ e^{\frac{1}{2} \delta_{\phi^0} C \delta_{\phi^0} + 
 \delta_{\phi^0} \frac{e^{ - (t_0 - t_1 ) A } }{A} \delta_{\phi^1} 
 + \frac{1}{2} \delta_{\phi^1} C \delta_{\phi^1} +  
 \delta_{\phi^0} \frac{e^{ - (t_0 - t_2 ) A } }{A} \delta_{\phi^2}
 + \delta_{\phi^1} \frac{e^{ - (t_1 - t_2 ) A } }{A} \delta_{\phi^2} 
+\frac{1}{2} \delta_{\phi^2} C \delta_{\phi^2} }  
\\
&  \qquad \qquad \qquad \qquad 
  e^{ - V(\phi^2) }\; \left( \delta_{\phi^0} e^{ - (t_0 - t_2 ) A } \delta_{\phi^2}\right)   (- V|_{\phi^2} ) 
\left( \delta_{\phi^0} e^{ - (t_0 - t_1 ) A } \delta_{\phi^1} \right)  (- V|_{\phi^1} ) 
 \; F(\phi^0)   \Bigg]_{\phi=0} \crcr
& +
\frac{1}{Z(0)}
\int_{-\infty}^{t_0} dt_1  \int_{-\infty}^{t_1} dt_2
    \Bigg[ e^{\frac{1}{2} \delta_{\phi^0} C \delta_{\phi^0} + 
 \delta_{\phi^0} \frac{e^{ - (t_0 - t_1 ) A } }{A} \delta_{\phi^1} 
 + \frac{1}{2} \delta_{\phi^1} C \delta_{\phi^1} +  
 \delta_{\phi^0} \frac{e^{ - (t_0 - t_2 ) A } }{A} \delta_{\phi^2}
 + \delta_{\phi^1} \frac{e^{ - (t_1 - t_2 ) A } }{A} \delta_{\phi^2} 
+\frac{1}{2} \delta_{\phi^2} C \delta_{\phi^2} }  
\\
&  \qquad \qquad \qquad \qquad 
  e^{ - V(\phi^2) } \left( \delta_{\phi^1} e^{ - (t_1 - t_2 ) A } \delta_{\phi^2}  \right)  (- V|_{\phi^2} ) 
\left( \delta_{\phi^0} e^{ - (t_0 - t_1 ) A } \delta_{\phi^1} \right)  (- V|_{\phi^1} )
 \; F(\phi^0)   \Bigg]_{\phi=0}
 \; .
\end{split}
\] 

We associate the terms in this expression to recursive trees with vertices labeled $0,1$ and $2$ as follows. The vertices of the trees correspond to the 
non-expotentiated insertions $F(\phi^0) $, $- V|_{\phi^1} $ and $- V|_{\phi^2} $ and are labeled $0,1,2$ by the label of the field copy. The tree edges correspond to the non-exponentiated derivatives: the term $ \delta_{\phi^i} e^{ - (t_i - t_j ) A } \delta_{\phi^j}$ represents the edge $\{i,j\}$. Thus, in order, the four terms in Eq.~\eqref{eq:marelung} corresponds to:
\begin{itemize}
    \item[-] the tree with one vertex labeled $0$ and no edge,
    \item[-] the tree with vertices $0$ and $1$ and an edge $(0,1)$,
    \item[-] the two rest terms correspond to the two trees with vertices $0$, $1$ and $2$ and edges $ \{ ( 0,1 ), ( 0,2 ) \}$ and $\{ ( 0,1 ),( 1,2 ) \}$ respectively. Such terms are divided by $Z(0)$ and have an additional exponentiated interaction $e^{-V(\phi^2)}$ placed at the left of the non-exponentiated tree differential operator, such that the latter acts only on the explicit insertion $- V|_{\phi^2} $ and not on $e^{-V(\phi^2)}$.
\end{itemize}

At next step, in the two integral rest terms we introduce a new copy $\phi^3$ for the field by replacing $e^{-V(\phi^2) } \to e^{-V(\phi^3)} $ and $\delta_{\phi^2} \to \delta_{\phi^2} + \delta_{\phi^3}$ and we interpolate all the mixed terms in the Gaussian measure involving one $\delta_{\phi_3}$ with a third time parameter $t_3\le t_2$. All the boundary terms at $t_3=-\infty$
contain a decoupled integral on the field $\phi^3$ which reconstitutes a $Z(0)$ factor and cancels the denominator.

After the interpolation step $p$, the boundary terms correspond to all the recursive trees with up to $p$ vertices corresponding to $F(\phi^0)$, 
$- V|_{\phi^1} $, $- V|_{\phi^2} $ up to $-V|_{\phi^{p-1}}$ and edges corresponding to the non-exponentiated tree derivative operator $ \prod_{\{i,j\}}\delta_{\phi^i} e^{ - (t_i - t_j ) A } \delta_{\phi^j}$. The rest terms have a prefactor $Z(0)^{-1}$ and are a sum over all the recursive trees with $p+1$ vertices  $F(\phi^0)$, 
$- V|_{\phi^1} $, $- V|_{\phi^2} $ up to $-V|_{\phi^{p}}$, edges the tree differential operator $ \prod_{\{i,j\}} \delta_{\phi^i} e^{ - (t_i - t_j ) A } \delta_{\phi^j}$ and having an additional exponentiated interaction $e^{-V(\phi^p)}$ at the left of the tree operator.
We introduce copies of the field $\phi^p$, that is $\delta_{\phi^p} \to \delta_{\phi^p} + \delta_{\phi^{p+1}}$, reserving the new copy $\phi^{p+1}$ for the exponentiated interaction $e^{-V(\phi^p)} \to e^{-V(\phi^{p+1})}$, we interpolate the Gaussian measure as:
\begin{equation*}\begin{aligned}
&\delta_{\phi_q} \frac{e^{ -(t_q - t_{p}) A }}{A}\delta_{\phi_{p}} &\to &  \quad
\delta_{\phi_q} \frac{e^{ -(t_q - t_{p}) A }}{A}\delta_{\phi_{p}}  + \delta_{\phi_q} \frac{e^{ -(t_q - t_{p+1}) A }}{A}\delta_{\phi_{p+1}} \Big{|}_{t_{p+1} = t_p}\;,\qquad \forall q < p  \;,\\
&\frac{1}{2} \delta_{\phi_p} \frac{1}{A}\delta_{\phi_{p}} & \to & \quad
\frac{1}{2} \delta_{\phi_p} \frac{1}{A}\delta_{\phi_{p}} 
+\frac{1}{2} \delta_{\phi_{p+1}} \frac{1}{A}\delta_{\phi_{p+1}}
+\delta_{\phi_p} \frac{e^{ -(t_p - t_{p+1}) A }}{A}\delta_{\phi_{p+1}} \Big{|}_{t_{p+1} = t_p} \; ,
\end{aligned}\end{equation*}
and we use a Taylor formula with integral rest on $t_{p+1}$. This generates both the new boundary terms and the new rest terms corresponding to recursive trees with one more vertex.
Effectively, at step $p+1$ the rest term from step $p$ gives  new boundary terms by the rule $e^{-V(\phi^p)} \to Z(0)$, and a new rest term with one more tree edge.

\paragraph{Tree expansion.}
After $p$ interpolation steps the correlation function writes as a sum over recursive trees with vertices labeled $0,1,\dots$, where each vertex $i\in\{0,1,\dots\}$ has an associated time $t_i$ and, separating the integral rest term, we get:
\[
\begin{split}
\nu(F) = \Braket{ F(\phi)  } =& \sum_{0\le q\le p-1}
\sum_{ \substack{ {T \in \pmb{T}  } \\ { |V (T)|= 1 + q  } } }
\int_{-\infty}^{t_0} dt_1 \dots \int_{-\infty}^{t_{q-1}} dt_q 
 \Bigg[ \exp\Bigg\{ \frac{1}{2} \sum_{i,j=0}^q \delta_{\phi^i} \frac{e^{- |t_i -t_j| A}}{A} \delta_{\phi^j} \Bigg\} 
\crcr
&
 \qquad \qquad  \prod_{(i,j) \in E(T) }  \left( \delta_{\phi^i} e^{- (t_i -t_j) A}  \delta_{\phi^j } \right) F(\phi^0) \prod_{i = 1}^q (- V |_{\phi^i} )
 \Bigg]_{\phi=0}  + R_{p}(F)  \;,
 \end{split}
\]
where $V(T) = \{ 0,1, \dots\}$ is the set of vertices of $T$, we recall that $\pmb{T}$ denotes the set of recursive trees, and the integral rest term is a sum over trees with one more vertex:
\[
\begin{split}
R_{p}(F)     = & \frac{1}{Z(0)}\sum_{ \substack{ {T \in \pmb{T}  } \\ { |V(T)|=p +1 } } }
\int_{-\infty}^{t_0} dt_1 \dots \int_{-\infty}^{t_{p-1}} dt_p 
 \Bigg[ \exp\Bigg\{ \frac{1}{2} \sum_{i,j=0}^p \delta_{\phi^i }  \frac{e^{- |t_i -t_j| A}}{A} \delta_{\phi^j } \bigg\}  \; e^{ - V(\phi^p)}
\crcr
& \qquad \qquad
\qquad    \prod_{(i,j) \in E(T) }  \left( \delta_{\phi^i} e^{- ( t_i -t_j ) A} \delta_{\phi^j } \right) 
F(\phi^0)  \prod_{i = 1}^{p} ( -V |_{\phi^i} )
 \Bigg]_{\phi=0} \; .
\end{split}
\]
Recall that the root of the recursive trees is labeled $0$ and the degree is not restricted to be $1$.
The time $t_0$ of the vertex $0$ corresponding to the observable $F(\phi^0)$ is not integrated. The first term in the sum consists in the tree with only the root vertex $0$, and contributes
$ \Big[ \exp\big\{ \frac{1}{2} \delta_{\phi^0} \frac{1}{A} \delta_{\phi^0} \big\} F(\phi^0) \Big]_{\phi=0}$ as expected.

\paragraph{Emergence of the stochastic field from the Hubbard--Stratonovich transformation.} 

The next step consists in reworking the Gaussian integral. First, we note that the covariance for the multiple copies of the field is positively defined:
\[
\frac{1}{2} \sum_i f_i \frac{e^{- | t_i-t_i | A}}{A} f_i
+  \sum_{i < j} f_i \frac{e^{- | t_i-t_j | A}}{A} f_j
 = \int_{u}  
   \left[ \sum_{i}  \theta(t_i - u ) (e^{ - ( t_i -u ) A}) f_i \right]^2 \ge 0 \;, 
\]
where we used:
\[
\int_{u,x}  du \;  \theta(t_i-u) (e^{-(t_i-u)A})_{x_ix} \; \theta(t_j- u) (e^{-(t_j-u)A})_{xx_j}
= \frac{1}{2} \left( \frac{e^{- |t_i-t_j| A  }}{A}  \right)_{x_ix_j} \; .
\]

We can thus apply a Hubbard--Stratonovich transformation to the Gaussian integral to get (reinstating the positions for a moment):
\[
 e^{ \frac{1}{2}\sum_i   \delta_{\phi^i} \frac{e^{-|t_i-t_i|A}}{A} \delta_{\phi^i}  +  \sum_{i<j} 
  \delta_{\phi^i} \frac{e^{-|t_i-t_j|A}}{A} \delta_{\phi^j} }= \left[ e^{ \int_{t,x} \frac{\delta}{ \delta \xi_{(t,x)} } \frac{\delta}{ \delta \xi_{(t,x)} } } e^{ \sum_{i}\int_{u;x,y}  
    \frac{\delta}{ \delta \phi^i_x }  \theta( t_i - u ) (e^{ - ( t_i -u ) A})_{xy} \xi_{(u,y)} }  \right]_{\xi=0} \;, 
\]
where $\xi_{(t,x)}$ is a white noise with covariance $2\delta_{tt'} \delta_{xx'}$. Indeed, denoting $ \int_{t,x} \frac{\delta}{ \delta \xi_{(t,x)} } \frac{\delta}{ \delta \xi_{(t,x)} } \equiv \int_t 
\frac{\delta}{ \delta \xi_t } \frac{\delta}{ \delta \xi_t }   $ and recognizing the Green function of the diffusion operator $\pmb{C}_{t,s}  = \theta(t-s) e^{-(t-s)A} $ we have:
\[
\begin{split}
& \left[ e^{\int_t 
\frac{\delta}{ \delta \xi_t } \frac{\delta}{ \delta \xi_t } } e^{\sum_i \int_u \delta_{\phi^i} \pmb{C}_{t_i,u} \xi_{u} } \right]_{\xi=0}  = 
 \sum_{n\ge 0} \frac{1}{n!} \left(\int_t 
\frac{\delta}{ \delta \xi_t } \frac{\delta}{ \delta \xi_t }  \right)^n \frac{1}{(2n)!} \left(\sum_i \int_u \delta_{\phi^i} \pmb{C}_{t_i,u} \xi_{u}  \right)^{2n} \crcr
& =\sum_n\frac{1}{ n!} \left(\sum_{i,j} \int_u ( \delta_{\phi^i} \pmb{C}_{t_i,u} ) ( \delta_{\phi^j} \pmb{C}_{t_j,u} )  \right)^n
 = \exp \left\{\frac{1}{2} \sum_{i,j} \delta_{\phi^i} \frac{e^{-|t_i-t_j|A}}{A} \delta_{\phi^j}    \right\} \; .
\end{split}
\]
Commuting the white noise integral with the sum over trees and sending $p$ to infinity, we obtain:
\[
\begin{split}
\nu(F) =  \Braket{ F(\phi) } = & \Bigg[ e^{\int_t \frac{\delta}{ \delta \xi_t } \frac{\delta}{ \delta \xi_t } }
 \sum_{q\ge 0 }
\sum_{ \substack{ {T \in \pmb{T}  } \\ { |V(T)|=q +1 } } }
\int_{-\infty}^{t_0} dt_1 \dots \int_{-\infty}^{t_{q-1}} dt_q \;
 \exp\Bigg\{
  \sum_{i=0}^q \int _u \; \frac{\delta}{\delta \phi^i} \pmb{C}_{t_i,u} \xi_u \Bigg\}
\crcr
&
 \qquad \qquad  \prod_{(i,j) \in E(T) }  \left( \delta_{\phi^i} e^{- (t_i -t_j) A}  \delta_{\phi^j } \right) F(\phi^0) \prod_{i = 1}^q \big(- V (\phi^i) \big)
 \Bigg]_{\phi=0,\xi=0} \;.
\end{split}
\]
The $\phi$ integrals can now be computed as one recognizes that they yield field translations:
\[
\Bigg[e^{ \int_{u} 
    \frac{\delta}{ \delta \phi } \pmb{C}_{t,u} \xi_u
  } H(\phi) \Bigg]_{\phi=0}
=  H\left(\phi + \int_{u}   \pmb{C}_{t,u} \xi_u  \right)\Big{|}_{\phi=0} = H(\phi)\Big{|}_{\phi= \int_{u}   \pmb{C}_{t,u} \xi_u   } \; ,
\]
and using this transformation term by term in the tree expansion, we obtain:
\begin{equation}\label{eq:recHubStrat}
 \begin{split}
 \nu(F) = \Braket{ F(\phi) } = & \Bigg[ e^{\int_t \frac{\delta}{ \delta \xi_t } \frac{\delta}{ \delta \xi_t } }
 \sum_{q\ge 0}
\sum_{ \substack{ {T \in \pmb{T}  } \\ { |V(T)|=q+1} } }
\int_{-\infty}^{t_0} dt_1 \dots \int_{-\infty}^{t_{q-1}} dt_q
\crcr
& \qquad
\left\{ \prod_{(i,j) \in E(T) }  \left( \delta_{\phi^i} e^{- (t_i -t_j) A}  \delta_{\phi^j } \right) F(\phi^0) \prod_{i = 1}^q \big(- V (\phi^i) \big) \right\}_{\phi^j = \int_u \pmb{C}_{t_j,u} \xi_u }
 \Bigg]_{\xi=0} \;.
\end{split}
\end{equation}
Note that, when sending $p$ to infinity, we have discarded the rest term.  A complete treatment would require keeping it and establishing appropriate bounds, which entails specific choices for the quadratic part $A$, the interaction $V$, the underlying space and so on.

\paragraph{The Langevin equation revisited.}
In Sec.~\ref{sec:stoch-exp} we wrote the solution of the 
non linear Langevin equation in Eq.~\eqref{eq:Langevin} and Eq.~\eqref{eq:SPDE} as a sum over trees obtained by Taylor expanding the interaction $V(\phi)$ around the zero field configuration. We present here a different solution of this equation as a tree expansion obtained by Taylor expanding around the free theory solution of the Langevin equation.

Our starting point is the observation that, as $(\partial_t +A) \pmb{C}_{t,s}  = \delta_{t,s} $, the solution of the non linear Langevin equation \eqref{eq:Langevin} and \eqref{eq:SPDE} can also be written as:
\[
\Phi_t  = \int_s \; \pmb{C}_{t,s} \xi_s +  \int_s
   \pmb{C}_{t,s} \;  (- \delta_\phi V)(\Phi_s) \;.
\]

We separate this into the the free theory part  $\phi_t =  \int_s \; \pmb{C}_{t,s} \xi_s $ and the displacement due to the presence of the interaction $\chi_t =  \int_s  \pmb{C}_{t,s} (- \delta_\phi V)(\Phi_s)
= \int_s  \pmb{C}_{t,s} (- \delta_\phi V)(\phi_s + \chi_s)$. Taylor expanding in $\chi_s$ we obtain (reinstating for a moment the position variables):
\begin{equation}\label{eq:chi}
\begin{split}
 \chi_t = \Phi_t - \phi_t & =
  \int_s \pmb{C}_{t,s} (- \delta_\phi V)(\phi_s + \chi_s) \crcr
  & = \sum_{\ell\ge 0} \left(-\frac{1}{\ell!} \right) 
   \int_{y}
  \int_s
   \pmb{C}_{(t,x) (s,y) }    \int_{ y_1,\dots y_\ell} 
   \left( \frac{ \delta^{\ell+1} V }{ \delta\phi_{y_1} \dots \delta \phi_{y_\ell} \delta \phi_y } \right) (\phi_s ) \; \chi_{(s,y_1)} \dots \chi_{(s,y_\ell)} \; .
\end{split}
\end{equation}

We solve this equation by iterated substitutions and find $\Phi_t$ as a sum over rooted plane trees with root $r$ of degree at most $1$ and root time $t_r = t$:
\begin{equation}\label{eq:Phi_chi}
\begin{split}
 \Phi_t = &
  \sum_{ {\cal T} \in \pmb{\cal T}^{(r|t)} }
 \prod_{v\in V^{\rm int}( {\cal T} ) } \frac{1}{ ( d_{\cal T}(v)-1) !} \crcr
& \qquad    \int_{-\infty}^{\infty} \prod_{v\in   V^{\rm int}( {\cal T} )  }  dt_v
\left[   \prod_{(v^q,w^1) \in E( {\cal T} ) }
 (  \delta_{\phi_{t_v}} \pmb{C}_{t_v,t_w} \delta_{\phi_{t_w}} )
   \; \; \phi_{t_r}
 \prod_{v\in   V^{\rm int}( {\cal T} )  }
 ( -  V (\phi_{t_v} ) ) \right]_{\phi_{t_v} = \int_u \pmb{C}_{t_v,u} \xi_u}\; ,
\end{split}
\end{equation}
where the first term in this sum, which corresponds to the tree with only the root vertex and no edge, reproduces $\phi_t$.\footnote{We note that in the tree expansion in Sec.~\ref{sec:stoch-exp} this term is zero and the lowest order tree which contributes there connects the root with a univalent $\xi$ vertex.}
Eq.~\eqref{eq:Phi_chi} follows from \eqref{eq:chi} in a similar fashion to how we obtained \eqref{eq:SPDTsoltree}, but
the advantage of this new expansion with respect to the one of Sec.~\ref{sec:stoch-exp} is that the white noise field $\xi$ arises now only via the evaluation of the differential operators on the free solution $\phi_{t_v} = \int_u \pmb{C}_{t_v,u} \xi_u$.

Using Lemma~\ref{lemma:trees} from Sec.~\ref{sec:trees_ODEs}, 
the sum over rooted plane trees with univalent root can be traded for a sum over rooted recursive trees. 
We denote the root vertex of the recursive trees by $0$ and note that it has degree at most $1$; the internal vertices of the recursive tree are denoted $V^{\rm int}(T) = \{ 1,2, \dots\}$ and we obtain: 
\begin{equation}\label{eq:Phi_recur}
 \Phi_{t_0} = \sum_{q\ge 0}
\sum_{ \substack{ {T \in \pmb{T}  } \\ { |V(T)|=q + 1}  \\ {d_T(0) \le 1} } }
 \int_{-\infty}^{t_0} dt_1 \dots  \int_{-\infty}^{t_{q-1}} dt_q  
 \left[ \prod_{(i,j) \in E(T), }  \left( \delta_{\phi^i} \pmb{C}_{t_i,t_j} \delta_{\phi^j } \right) \phi^0 \prod_{i = 1}^q \big(- V (\phi^i) \big) \right]_{\phi^j =
  \int du \; \pmb{C}_{t_j,u} \xi_u }\; ,
\end{equation}
where the first term $q=0$ reproduces $\phi_{t_0}$ and the sum starting at $q=1$ is $\chi_{t_0}$. A monomial writes as the product of $n$ such recursive tree expansions (where we reinstate the external positions):
\begin{align*}
&\Phi_{(t_0,x^1)} \dots \Phi_{(t_0,x^n)}  =  \sum_{q^1, \dots q^n \ge 0}
\sum_{ \substack{ {T^1, \dots T^n \in \pmb{T}  } \\ { |V(T^\nu)|=q^\nu + 1}  \\ {d_{T^\nu}(0) \le 1} } } \\
&\qquad \prod_{\nu=1}^n 
\Big\{  \int_{-\infty}^{t_{0^\nu}=t_0} dt_{1^\nu} \dots  \int_{-\infty}^{t_{q^\nu-1}} dt_{q^\nu}  \left[ \prod_{(i^\nu,j^\nu) \in E(T^\nu) }  \left( \delta_{\phi^{i^\nu}} \pmb{C}_{t_{i^\nu} ,t_{j^\nu}} \delta_{\phi^{j^\nu} } \right) \phi^{0^\nu}_{x^\nu} \prod_{i = 1}^{q^\nu} \big(- V (\phi^{i^\nu}) \big) \right]_{\phi^{i^\nu} =
  \int du \; \pmb{C}_{t_{i^\nu} ,u} \xi_u } \Big\} \; . \nonumber
\end{align*}
Joining together the trees $T^\nu$ at the root we obtain a tree $T$ with $1+q^1+\dots +q^n$ vertices and with a branch for each $T^\nu$, hence the degree of the root in $T$ is lower than or equal to $n$. While each branch $T^\nu$ is recursively ordered, $T$ itself is not, but summing over all the possible total orders among the times $\{t_{1^\nu},\dots t_{q^\nu} \}_{\nu=1,\dots n}$ we obtain a sum over all the recursively labeled $T$s (see below for a detailed proof):
\begin{align}\label{eq:onerectree}
&\Phi_{(t_0,x^1)} \dots \Phi_{(t_0,x^n)} =\\
&\qquad  \sum_{q \ge 0}
\sum_{ \substack{ {T \in \pmb{T}  } \\ { |V(T)|=q +1 } \\{d_T(0) \le n } } }  \int_{-\infty}^{t_0} dt_1 \dots  \int_{-\infty}^{t_{q-1}} dt_q 
  \left[ \prod_{(i,j) \in E(T) }  \left( \delta_{\phi^i} \pmb{C}_{t_i,t_j}  \delta_{\phi^j } \right) \prod_{\nu=1}^n\phi^0_{x^\nu} \prod_{i = 1}^q \big(- V (\phi^i) \big) \right]_{\phi^i =
  \int du \; C_{t_i,u} \xi_u
 } \;. \nonumber
\end{align}
We then have for a generic observable:
\begin{equation}\label{eq:recobsfin}
\begin{split}
 F(\Phi_{t_0}) =
  \sum_{q \ge 0}
\sum_{ \substack{ {T \in \pmb{T}  } \\ { |V(T)|=q +1 } } } & \int_{-\infty}^{t_0} dt_1 \dots  \int_{-\infty}^{t_{q-1}} dt_q \crcr
&  \left[ \prod_{(i,j) \in T }  \left( \delta_{\phi^i} \pmb{C}_{t_i,t_j}  \delta_{\phi^j } \right) F(\phi^0) \prod_{i = 1}^q \big(- V (\phi^i) \big) \right]_{\phi^i =
  \int du \; C_{t_i,u} \xi_u
 } \;,
\end{split}
\end{equation}
and comparing Eq.~\eqref{eq:recobsfin} with Eq.~\eqref{eq:recHubStrat} we conclude that
$\nu(F) =\braket{F(\phi)} = \left[ e^{\int_t \partial_{\xi_t} \partial_{\xi_t} }  F( \Phi_{t_0} )\right]$, thus 
completing the proof of Theorem~\ref{thm:main}.

\paragraph{Proof of Eq.~\eqref{eq:onerectree}.}
We first prove a slightly more precise identity, in which the $n$ root fields
$\phi^{0^1},\dots,\phi^{0^n}$ are kept distinguished.
When acting on $\phi^{0^\nu}$ with differential operators, these fields
are all evaluated at $\int du \; \pmb{C}_{t_0,u} \xi_u $ as in \eqref{eq:onerectree},
which we suppress from our notation.

We need the following minor generalization of recursive trees:
a \emph{generalized recursive tree} is a tree over $p$ vertices having $p$ strictly ordered labels $k_0<k_1<\dots <k_{p-1}$, with $k_i\geq 0$ an integer, such that if the vertex $k_i$ belongs to the path from the vertex  $k_l$ to the vertex with minimal label $k_0$, then $k_i<k_l$.
To every generalized recursive tree there is a unique recursive tree, called its \emph{reduction}, where we relabel the vertices by $k_i \mapsto i$ for $i=0,\ldots, p-1$.

For a recursive tree $T$, let $\mathrm{Br}(T)$ be the set of generalized recursive trees
given by the connected components of $T$ after removing
the root vertex $0$ and edges attached to it
and retaining the labels assigned to the vertices inside $T$.
We call elements of $\mathrm{Br}(T)$ the
\emph{branches} of $T$.
The cardinality of $\mathrm{Br}(T)$ is $d_T(0)$.
See Fig.~\ref{fig:branches_of_T_three} for an example.

\begin{figure}[ht!]
\begin{center}
\begin{tikzpicture}[scale=0.4]

\begin{scope}[every node/.style={circle, thick,draw,font=\tiny,minimum height=1em}]
    \node (R)  at (0,0)    {0};
    \node (A1) at (-3,2)   {1};
    \node (A2) at (-3,4)   {3};
    \node (B1) at (0,2)    {4};
    \node (C1) at (3,2)    {2};
    \node (C2) at (3,4)    {5};

    \node (D1) at (12,2)   {1};
    \node (D2) at (12,4)   {3};

    \node (E1) at (17,2)   {4};

    \node (F1) at (22,2)   {2};
    \node (F2) at (22,4)   {5};
\end{scope}

\begin{scope}[every edge/.style={draw=black,thick}]
    \draw (R)--(A1);
    \draw (A1)--(A2);
    \draw (R)--(B1);
    \draw (R)--(C1);
    \draw (C1)--(C2);

    \draw (D1)--(D2);
    \draw (F1)--(F2);
\end{scope}

\begin{scope}[every node/.style={draw=none,font=\small}]
    \node at (0,-1.8) {$T=\{(0,1),(1,2),(0,3),(0,4),(4,5)\}$};
    \node at (7.2,2.2) {$\Longrightarrow$};

    \node at (12,5.4) {$B_{13}$};
    \node at (17,3.4) {$B_4$};
    \node at (22,5.4) {$B_{25}$};

    \node at (17,-1.8) {$\mathrm{Br}(T)=\{B_{13},B_4,B_{25}\}$};
\end{scope}

\end{tikzpicture}
\end{center}
\caption{A recursive tree $T$ and its set of branches. The branches $B_{13}$ and $B_{25}$ are isomorphic as rooted trees, but they are distinct as generalized recursive trees and as elements of $\mathrm{Br}(T)$.}
\label{fig:branches_of_T_three}
\end{figure}

Now consider $T\in\pmb T$ a recursive tree with $d_T(0)\le n$.
Let $\mathcal I_n(T)$ be the set of injections
\[
\alpha\colon \mathrm{Br}(T) \hookrightarrow \{1,\dots,n\}
\;,
\]
see Fig.~\ref{fig:alpha_examples_three} for examples.
The cardinality of $\mathcal{I}_n(T)$ is $n!/(n-d_T(0))!$.

\begin{figure}[ht!]
\begin{center}
\begin{tikzpicture}[scale=0.38]

\begin{scope}[xshift=0cm]
    \begin{scope}[every node/.style={circle, thick,draw,font=\tiny,minimum height=1em}]
        \node (A1) at (0,2)   {1};
        \node (A2) at (0,4)   {3};
        \node (B1) at (3,2)   {4};
        \node (C1) at (6,2)   {2};
        \node (C2) at (6,4)   {5};

        \node (S1) at (0,-2)  {1};
        \node (S2) at (2,-2)  {2};
        \node (S3) at (4,-2)  {3};
        \node (S4) at (6,-2)  {4};
    \end{scope}

    \begin{scope}[every edge/.style={draw=black,thick}]
        \draw (A1)--(A2);
        \draw (C1)--(C2);

        \draw[dashed] (A1.south) -- (S1.north);
        \draw[dashed] (B1.south) -- (S2.north);
        \draw[dashed] (C1.south) -- (S4.north);
    \end{scope}

    \begin{scope}[every node/.style={draw=none,font=\small}]
        \node at (3,6.2) {$\alpha_1$};
        \node at (0,5.4) {$B_{13}$};
        \node at (3,3.4) {$B_4$};
        \node at (6,5.4) {$B_{25}$};
        \node at (3,-3.8) {$\alpha_1(B_{13})=1,\ \alpha_1(B_4)=2,\ \alpha_1(B_{25})=4$};
    \end{scope}
\end{scope}

\begin{scope}[xshift=20cm]
    \begin{scope}[every node/.style={circle, thick,draw,font=\tiny,minimum height=1em}]
        \node (A1r) at (0,2)   {1};
        \node (A2r) at (0,4)   {3};
        \node (B1r) at (3,2)   {4};
        \node (C1r) at (6,2)   {2};
        \node (C2r) at (6,4)   {5};

        \node (S1r) at (0,-2)  {1};
        \node (S2r) at (2,-2)  {2};
        \node (S3r) at (4,-2)  {3};
        \node (S4r) at (6,-2)  {4};
    \end{scope}

    \begin{scope}[every edge/.style={draw=black,thick}]
        \draw (A1r)--(A2r);
        \draw (C1r)--(C2r);

        \draw[dashed] (A1r.south) -- (S4r.north);
        \draw[dashed] (B1r.south) -- (S3r.north);
        \draw[dashed] (C1r.south) -- (S1r.north);
    \end{scope}

    \begin{scope}[every node/.style={draw=none,font=\small}]
        \node at (3,6.2) {$\alpha_2$};
        \node at (0,5.4) {$B_{13}$};
        \node at (3,3.4) {$B_4$};
        \node at (6,5.4) {$B_{25}$};
        \node at (3,-3.8) {$\alpha_2(B_{13})=4,\ \alpha_2(B_4)=3,\ \alpha_2(B_{25})=1$};
    \end{scope}
\end{scope}

\end{tikzpicture}
\end{center}
\caption{Two example injections $\alpha_1,\alpha_2 \in \mathcal I_{4}(T)$.
There are in total $4\cdot3\cdot2=\frac{4!}{1!}$ such injections.}
\label{fig:alpha_examples_three}
\end{figure}

For $\alpha\in\mathcal I_n(T)$, we denote by $\mathcal A_n(T,\alpha)$ the
differential operator obtained from the integrand in
Eq.~\eqref{eq:onerectree} by letting the root
derivative corresponding to the branch $B \in \mathrm{Br}(T)$ act on the distinguished root
field $\phi^{0^{\alpha(B)}}_{x^{\alpha(B)}}$.
If $\mathrm{Br}(T)$ is empty, we use the convention $\mathcal A_n(T,\alpha) = \mathrm{id}$.

We claim that
\begin{align}\label{eq:ordered_root_version}
\Phi_{(t_0,x^1)}\cdots \Phi_{(t_0,x^n)}
=
\sum_{q\ge 0}
\sum_{\substack{T\in\pmb T\\ |V(T)|=q+1\\ d_T(0)\le n}}
\sum_{\alpha\in\mathcal I_n(T)}
\mathcal A_n(T,\alpha)
\Bigl(
\prod_{\nu=1}^n \phi^{0^\nu}_{x^\nu}
\Bigr)
\;.
\end{align}
Before proving \eqref{eq:ordered_root_version},
we note that it implies 
Eq.~\eqref{eq:onerectree}.
Indeed, for a fixed tree $T$ with $d_T(0)=m$, summing over
$\alpha\in\mathcal I_n(T)$ corresponds to choosing which of the $m$ root derivatives acts on which factor among
$\phi^{0^1}_{x^1},\dots,\phi^{0^n}_{x^n}$.
After identifying all root fields $\phi^{0^\nu}=\phi^0$, this is exactly the
result of applying the root part of the operator associated with $T$ to $\prod_{\nu=1}^n \phi^0_{x^\nu}$.
Hence~\eqref{eq:ordered_root_version} implies~\eqref{eq:onerectree}.

It remains to prove Eq.~\eqref{eq:ordered_root_version}, which we do by induction on $n$.
For $n=1$, the possible choices for $T$ are the tree with a single vertex, in which case $\mathcal A_1(T,\alpha)\phi^{0^1}_{x^1}=\phi^{0^1}_{x^1}$,
or a tree with one branch, in which case $\mathcal I_1(T)$ contains only the unique map $\alpha\colon \{1\}\to\{1\}$.
Hence Eq.~\eqref{eq:ordered_root_version} for $n=1$ is exactly the recursive tree
expansion \eqref{eq:Phi_recur} of $\Phi_{(t_0,x^1)}$.
In particular, we can write
\begin{equation}\label{eq:alpha_S}
\Phi_{(t_0,x)} = \sum_{S}\mathcal{A}(S)\phi^{0}_{x} \;,
\end{equation}
where $\mathcal A(S)$ is the contribution from the recursive tree $S$ with $d_S(0)\leq 1$ to the sum \eqref{eq:Phi_recur}, understood as a differential operator acting on $\phi^{0}_{x}$.

Assume now that Eq.~\eqref{eq:ordered_root_version} holds for some $n\ge 1$.
Multiplying by the recursive tree expansion \eqref{eq:alpha_S} of $\Phi_{(t_0,x^{n+1})}$, 
we obtain that
\begin{equation}\label{eq:n_plus_1}
\Phi_{(t_0,x^1)}\cdots 
\Phi_{(t_0,x^{n})}
\Phi_{(t_0,x^{n+1})}
=
\sum_{(T,\alpha,S)}
\mathcal A_{n}(T,\alpha)
\Bigl(
\prod_{\nu=1}^{n} \phi^{0^\nu}_{x^\nu}
\Bigr)
\mathcal A(S)
\phi^{0^{n+1}}_{x^{n+1}} \;,
\end{equation}
where 
the sum is over triples $(T,\alpha,S)$, where $(T,\alpha)$ comes from the induction hypothesis
and $S$ comes from the last factor.

We now let $\mathcal{R}(T,\alpha,S)$ denote the set of pairs $(T',\alpha')$ where $T'$ is a recursive tree and
$\alpha' \in \mathcal{I}_{n+1}(T')$, such that
the reduction of $B = (\alpha')^{-1}(n+1)$
is equal to the reduction of the branch of $S$ (which is possibly empty)
and the reduction of the tree $T'\setminus B$ (understood as the generalized recursive tree obtained by removing the branch $B$ and retaining the rest of its vertex labelings)
is equal to $T$,
and that $\alpha'$ restricted to $\mathrm{Br}(T')\setminus \{B\}$ is equal to $\alpha$.
See Fig.~\ref{fig:R_example_three} for an example.
Because $S$ is a recursive tree, the cardinality of $\mathcal{R}(T,\alpha,S)$  is $\binom{|V^{\rm int}(T)|+|V^{\rm int}(S)|}{|V^{\rm int}(S)|}$.

\begin{figure}[ht!]
\begin{center}
\begin{tikzpicture}[scale=0.34]

\begin{scope}[every node/.style={circle, thick,draw,font=\tiny,minimum height=1em}]
    \node (R)  at (0,0)    {0};
    \node (A1) at (-3,2)   {1};
    \node (A2) at (-3,4)   {3};
    \node (B1) at (0,2)    {4};
    \node (C1) at (3,2)    {2};
    \node (C2) at (3,4)    {5};

    \node (RS) at (8,0) {0};
    \node (S1) at (8,2) {1};
    \node (S2) at (8,4) {2};
\end{scope}

\begin{scope}[every edge/.style={draw=black,thick}]
    \draw (R)--(A1);
    \draw (A1)--(A2);
    \draw (R)--(B1);
    \draw (R)--(C1);
    \draw (C1)--(C2);

    \draw (RS)--(S1);
    \draw (S1)--(S2);
\end{scope}

\begin{scope}[every node/.style={draw=none,font=\small}]
    \node at (0,-1.8) {$T$};
    \node at (0,-3.5) {$\alpha(B_{13})=1,\ \alpha(B_4)=2,\ \alpha(B_{25})=4$};
    \node at (8,-1.8) {$S$};
\end{scope}
\end{tikzpicture}

\vspace{0.5cm}

\begin{tikzpicture}[scale=0.34]
\begin{scope}[xshift=0cm]
\begin{scope}[every node/.style={circle, thick,draw,font=\tiny,minimum height=1em}]
    \node (R1)  at (0,0)    {0};
    \node (X11) at (-3,2)   {3};
    \node (X12) at (-3,4)   {5};
    \node (X13) at (0,2)    {6};
    \node (X14) at (3,2)    {4};
    \node (X15) at (3,4)    {7};
    \node (X16) at (6,2)    {1};
    \node (X17) at (6,4)    {2};
\end{scope}
\begin{scope}[every edge/.style={draw=black,thick}]
    \draw (R1)--(X11);
    \draw (X11)--(X12);
    \draw (R1)--(X13);
    \draw (R1)--(X14);
    \draw (X14)--(X15);
    \draw[dashed] (R1)--(X16.south);
    \draw (X16)--(X17);
\end{scope}
\end{scope}
\begin{scope}[xshift=15cm]
\begin{scope}[every node/.style={circle, thick,draw,font=\tiny,minimum height=1em}]
    \node (R2)  at (0,0)    {0};
    \node (Y21) at (-3,2)   {1};
    \node (Y22) at (-3,4)   {4};
    \node (Y23) at (0,2)    {6};
    \node (Y24) at (3,2)    {2};
    \node (Y25) at (3,4)    {7};
    \node (Y26) at (6,2)    {3};
    \node (Y27) at (6,4)    {5};
\end{scope}
\begin{scope}[every edge/.style={draw=black,thick}]
    \draw (R2)--(Y21);
    \draw (Y21)--(Y22);
    \draw (R2)--(Y23);
    \draw (R2)--(Y24);
    \draw (Y24)--(Y25);
    \draw[dashed] (R2)--(Y26.south);
    \draw (Y26)--(Y27);
\end{scope}
\end{scope}

\begin{scope}[xshift=29cm]
\begin{scope}[every node/.style={circle, thick,draw,font=\tiny,minimum height=1em}]
    \node (R3)  at (0,0)    {0};
    \node (Z31) at (-3,2)   {1};
    \node (Z32) at (-3,4)   {3};
    \node (Z33) at (0,2)    {5};
    \node (Z34) at (3,2)    {2};
    \node (Z35) at (3,4)    {6};
    \node (Z36) at (6,2)    {4};
    \node (Z37) at (6,4)    {7};
\end{scope}
\begin{scope}[every edge/.style={draw=black,thick}]
    \draw (R3)--(Z31);
    \draw (Z31)--(Z32);
    \draw (R3)--(Z33);
    \draw (R3)--(Z34);
    \draw (Z34)--(Z35);
    \draw[dashed] (R3)--(Z36.south);
    \draw (Z36)--(Z37);
\end{scope}
\end{scope}

\end{tikzpicture}

\end{center}
\caption{The top diagram shows an example of $(T,\alpha,S)$ for $n=4$ with $(T,\alpha)$ taken from the left diagram in Fig~\ref{fig:alpha_examples_three}.
The bottom diagram shows 3 examples of pairs $(T',\alpha')\in\mathcal R(T,\alpha,S)$.
There are in total $\binom{7}{2}$ such pairs in $\mathcal R(T,\alpha,S)$.
The corresponding $\alpha'$ coincides on the left three branches with $\alpha$
and maps the `new' rightmost branch, connected to the root with a dotted line,
to $n+1=5$.}
\label{fig:R_example_three}
\end{figure}

Remark that the triple $(T,\alpha, S)$ is uniquely determined by any element of $\mathcal{R}(T,\alpha,S)$.
Conversely, every pair $(T',\alpha')$, where $T'$ is a recursive trees and $\alpha'\in\mathcal I_{n+1}(T')$,
belongs to $\mathcal{R}(T,\alpha,S)$ for a unique triple $(T,\alpha,S)$.
Consequently $\mathcal{R}(T,\alpha,S)$ with $(T,\alpha,S)$ running over all triples as in \eqref{eq:n_plus_1}
partitions the set of pairs $(T',\alpha')$.

We claim that, for fixed $(T,\alpha,S)$ as above,
\[
\mathcal A_{n}(T,\alpha)
\Bigl(
\prod_{\nu=1}^{n} \phi^{0^\nu}_{x^\nu}
\Bigr)
\mathcal A(S)
\phi^{0^{n+1}}_{x^{n+1}}
=
\sum_{(T',\alpha')\in\mathcal{R}(T,\alpha,S)}
\mathcal A_{n+1}(T',\alpha')
\Bigl(
\prod_{\nu=1}^{n+1} \phi^{0^\nu}_{x^\nu}
\Bigr)\;.
\]
Indeed, this follows from interlacing the times of $S$ into those of $T$ and remarking that every possible order of times corresponding to a unique recursive tree in $\mathcal{R}(T,\alpha,S)$.

Moreover, because $\mathcal{R}(T,\alpha,S)$ forms a partition of the set of pairs $(T',\alpha')$,
we have
\[
\sum_{(T,\alpha,S)}
\sum_{(T',\alpha')\in\mathcal{R}(T,\alpha,S)}
\mathcal A_{n+1}(T',\alpha')
\Bigl(
\prod_{\nu=1}^{n+1} \phi^{0^\nu}_{x^\nu}
\Bigr)
=
\sum_{(T',\alpha')}
\mathcal A_{n+1}(T',\alpha')
\Bigl(
\prod_{\nu=1}^{n+1} \phi^{0^\nu}_{x^\nu}
\Bigr)
\;.
\]
By \eqref{eq:n_plus_1}, this completes the induction step and proves \eqref{eq:ordered_root_version}.

\appendix

\section{Low orders of the Feynman expansion}
\label{app:loworderFeynm}

In order to get a concrete handle on the relation between the stochastic and path integral expansions, we provide in this and the next appendix some low-order example calculations, starting here from the Feynman expansion.

We consider the expansion \eqref{eq:Feynmanampli} for the particular case of a local $\phi^4$ model, for which the vertex kernels are:
\[ V(x^1,x^2,x^3,x^4) =  g \, \delta_{x^1x^2}\delta_{x^1x^3}\delta_{x^1x^4} \;, \] 
After using the delta functions, only one integral over a half-edge position $x^h_v$ survives for each vertex in  \eqref{eq:Feynmanampli}.
Denoting the remaining position $x_v$, we obtain:
\[
\Braket{\phi_{x^1} \dots \phi_{x^n} }  = \sum_{ {\cal G}   \in \pmb{\cal G}^{(r^1| x^1) \dots (r^n | x^n)}  } {\cal A}({\cal G}) \; , \qquad {\cal A}({\cal G})  = (-g)^{ V^{\rm int}({\cal G})}  \int \prod_{v\in V^{\rm int}({\cal G} )}d x_v \prod_{\{ v^h,w^{h'}\} \in E({\cal G})}C_{x_vx_w} \; ,
\]
where all the internal vertices have degree $4$. 

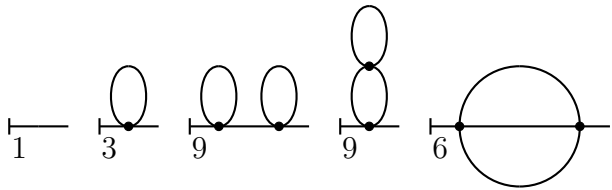
\begin{figure}[ht!]
\begin{center}
\begin{tikzpicture}[scale=0.4]

\draw[black, thick,|-] (-3,0) -- (-2,0) node[below left] {1};
\draw[black, thick] (-2,0) -- (-1,0);

\draw[black, thick,|-] (0,0) -- (1,0) node[below left] {3};
\filldraw[black] (1,0) circle (4pt);
\draw[black, thick] (1,0) -- (2,0);
\draw[black, thick] (1,0) to[out=0,in=0] (1,2);
\draw[black, thick] (1,0) to[out=180,in=180] (1,2);

\draw[black, thick,|-] (3,0) -- (4,0) node[below left] {9};
\filldraw[black] (4,0) circle (4pt);
\draw[black, thick] (4,0) to[out=0,in=0] (4,2);
\draw[black, thick] (4,0) to[out=180,in=180] (4,2);
\draw[black, thick] (4,0) -- (6,0);
\filldraw[black] (6,0) circle (4pt);
\draw[black, thick] (6,0) to[out=0,in=0] (6,2);
\draw[black, thick] (6,0) to[out=180,in=180] (6,2);
\draw[black, thick] (6,0) -- (7,0);

\draw[black, thick,|-] (8,0) -- (9,0) node[below left] {9};
\filldraw[black] (9,0) circle (4pt);
\draw[black, thick] (9,0) to[out=0,in=0] (9,2);
\draw[black, thick] (9,0) to[out=180,in=180] (9,2);
\draw[black, thick] (9,0) -- (10,0);
\filldraw[black] (9,2) circle (4pt);
\draw[black, thick] (9,2) to[out=0,in=0] (9,4);
\draw[black, thick] (9,2) to[out=180,in=180] (9,4);

\draw[black, thick,|-] (11,0) -- (12,0) node[below left] {6};
\filldraw[black] (12,0) circle (4pt);
\draw[black, thick] (12,0) to[out=90,in=180] (14,2);
\draw[black, thick] (14,2) to[out=0,in=90] (16,0);
\draw[black, thick] (12,0) to[out=-90,in=180] (14,-2);
\draw[black, thick] (14,-2) to[out=0,in=-90] (16,0);
\draw[black, thick] (12,0) -- (16,0);
\filldraw[black] (16,0) circle (4pt);
\draw[black, thick] (16,0) -- (17,0);

\end{tikzpicture}
\end{center}
    \caption{Feynman graphs at zero, first and second order contributing to the two point function in the $\phi^4$ model with their multiplicity counting the number of distinct embeddings in two dimensional surfaces. From left to right, we have the one edge graph, the tadpole graph, two double tadpole graphs (obtained by inserting a tadpole into a tadpole) and the sunset graph.
    These are precisely the same graphs and coefficients as in Fig.~\ref{fig:examplegraphs}.}
    \label{fig:firstord}
\end{figure}

Let us focus on the two point function $\braket{\phi_{x^1} \phi_{x^2} }  $.
Due to invariance under the $\mathbb{Z}_2$ transformation $\phi\to-\phi$, only connected maps with two external vertices contribute. We observe that the amplitude 
is insensitive to the embedding\footnote{This is not the case for theories with non-local interactions or theories in which the field is a matrix, in which case the amplitudes detect the genus of the embedding surface \cite{'tHooft:1973jz,DiFrancesco:1993nw}.} and it is customary to add together all the embedded graphs that correspond to the same abstract graph, leading to the so called symmetry factors in the Feynman expansion (note that we have normalized the interaction by $1/4$ instead of the $1/4!$ more common in the physics literature). 
The terms up to second order in the coupling constant are depicted in Fig.~\ref{fig:firstord}, where the root vertex $r^1$ corresponds to the external $\phi_{x^1}$ and they contribute (all variables except $x^1$ and $x^2$ are integrated):
\[
\begin{split}
& C_{x^1x^2} + (-g)  \Big[ 3 \; C_{x^1z } C_{zz} C_{zx^2}  \Big]  \crcr
& + (-g)^2
 \Big[ 9 \;  C_{x^1z_1} C_{z_1z_1}C_{z_1z_2} C_{z_2z_2}C_{z_2x^2}  + 9 \; C_{x^1z_1} C_{z_1z_2}C_{z_2z_2}C_{z_2z_1}C_{z_1x^2}  + 6 \; C_{x^1z_1} C_{z_1z_2}^3 C_{z_2 x^2} \Big] \; .
 \end{split}
\]
From left to right the terms correspond to the one edge graph, the tadpole graph, two double tadpoles and the sunset graph.

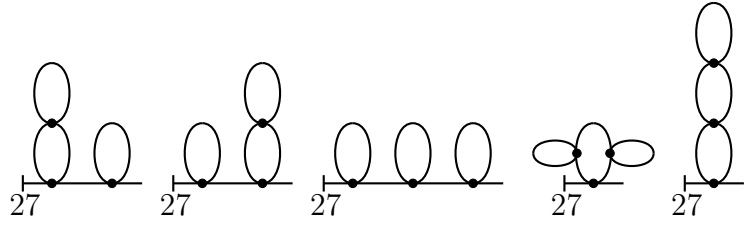
\begin{figure}[ht!]
\begin{center}
    
\begin{tikzpicture}[scale=0.4]

\draw[black, thick,|-] (0,0) -- (1,0) node[below left] {27};
\filldraw[black] (1,0) circle (4pt);
\draw[black, thick] (1,0) to[out=0,in=0] (1,2);
\draw[black, thick] (1,0) to[out=180,in=180] (1,2);

\filldraw[black] (1,2) circle (4pt);
\draw[black, thick] (1,2) to[out=0,in=0] (1,4);
\draw[black, thick] (1,2) to[out=180,in=180] (1,4);

\draw[black, thick] (1,0) -- (3,0);
\filldraw[black] (3,0) circle (4pt);
\draw[black, thick] (3,0) to[out=0,in=0] (3,2);
\draw[black, thick] (3,0) to[out=180,in=180] (3,2);

\draw[black, thick] (3,0) -- (4,0);

\draw[black, thick,|-] (5,0) -- (6,0) node[below left] {27};
\filldraw[black] (6,0) circle (4pt);
\draw[black, thick] (6,0) to[out=0,in=0] (6,2);
\draw[black, thick] (6,0) to[out=180,in=180] (6,2);

\draw[black, thick] (6,0) -- (8,0);
\filldraw[black] (8,0) circle (4pt);
\draw[black, thick] (8,0) to[out=0,in=0] (8,2);
\draw[black, thick] (8,0) to[out=180,in=180] (8,2);

\draw[black, thick] (8,2) to[out=0,in=0] (8,4);
\draw[black, thick] (8,2) to[out=180,in=180] (8,4);
\filldraw[black] (8,2) circle (4pt);

\draw[black, thick] (8,0) -- (9,0);

\draw[black, thick,|-] (10,0) -- (11,0) node[below left] {27};
\filldraw[black] (11,0) circle (4pt);
\draw[black, thick] (11,0) to[out=0,in=0] (11,2);
\draw[black, thick] (11,0) to[out=180,in=180] (11,2);

\draw[black, thick] (11,0) -- (13,0);
\filldraw[black] (13,0) circle (4pt);
\draw[black, thick] (13,0) to[out=0,in=0] (13,2);
\draw[black, thick] (13,0) to[out=180,in=180] (13,2);

\draw[black, thick] (13,0) -- (15,0);

\filldraw[black] (15,0) circle (4pt);
\draw[black, thick] (15,0) to[out=0,in=0] (15,2);
\draw[black, thick] (15,0) to[out=180,in=180] (15,2);

\draw[black, thick] (15,0) -- (16,0);

\draw[black, thick,|-] (18,0) -- (19,0) node[below left] {27};
\filldraw[black] (19,0) circle (4pt);
\draw[black, thick] (19,0) to[out=0,in=0] (19,2);
\draw[black, thick] (19,0) to[out=180,in=180] (19,2);
\filldraw[black] (18.45,1) circle (4pt);
\draw[black, thick] (18.45,1) to[out=90,in=90] (17,1);
\draw[black, thick] (18.45,1) to[out=-90,in=-90] (17,1);

\filldraw[black] (19.55,1) circle (4pt);
\draw[black, thick] (19.55,1) to[out=90,in=90] (21,1);
\draw[black, thick] (19.55,1) to[out=-90,in=-90] (21,1);

\draw[black, thick] (19,0) -- (20,0);

\draw[black, thick,|-] (22,0) -- (23,0) node[below left] {27};
\filldraw[black] (23,0) circle (4pt);
\draw[black, thick] (23,0) to[out=0,in=0] (23,2);
\draw[black, thick] (23,0) to[out=180,in=180] (23,2);

\filldraw[black] (23,2) circle (4pt);
\draw[black, thick] (23,2) to[out=0,in=0] (23,4);
\draw[black, thick] (23,2) to[out=180,in=180] (23,4);

\filldraw[black] (23,4) circle (4pt);
\draw[black, thick] (23,4) to[out=0,in=0] (23,6);
\draw[black, thick] (23,4) to[out=180,in=180] (23,6);

\draw[black, thick] (23,0) -- (24,0);

\end{tikzpicture}

\end{center}

   \caption{Triple tadpoles.}
    \label{fig:tripletadpoles}
\end{figure}

At next order we have triple tadpoles (represented in Fig.~\ref{fig:tripletadpoles}) which contribute:
\[
\begin{split}
 (-g)^3 \Big[ & 27 \;  C_{x^1z_1} C_{z_1z_3}C_{z_3z_3}C_{z_3z_1}C_{z_1z_2} C_{z_2z_2}C_{z_2x^2} + 27 \; C_{x^1z_1} C_{z_1z_1}C_{z_1z_2} C_{z_2z_3}C_{z_3z_3}C_{z_3z_2}C_{z_2x^2} \crcr
 & + 27  \; C_{x^1z_1} C_{z_1z_1}C_{z_1z_3}C_{z_3z_3}C_{z_3z_2} C_{z_2z_2}C_{z_2x^2}   + 27  \; C_{x^1z_1} C_{z_1z_2}C_{z_2z_2} C_{z_2 z_3 } C_{z_3z_3} C_{z_3z_1}C_{z_1x^2} \crcr
 & + 27 \; C_{x^1z_1} C_{z_1z_2}C_{z_2z_3}C_{z_3z_3} C_{z_3z_2} C_{z_2z_1}C_{z_1x^2}  \Big] \;,
\end{split}
\]
as well as tadpole-sunset graphs (obtained by inserting tadpole graphs into the sunset or the sunset graph into the tadpole) and a new primitive graph which can not be obtained by inserting previous graphs into each other, see Fig.~\ref{fig:remaining}, yielding:
\[
\begin{split}
  (-g)^3 \Big[ & 54 \; C_{x^1z_1} C_{z_1z_3} C_{z_3z_3}C_{z_3z_2}C_{z_1z_2} C_{z_2x^2}
  + 18  \;  C_{x^1z_3} C_{z_3z_3} C_{z_3z_1}C_{z_1z_2}^3 C_{z_2x^2}
  + 18 \; C_{x^1z_1} C_{z_1z_2}^3 C_{z_2z_3}C_{z_3z_3}C_{z_3x^2} \crcr
 & + 54 \; 
 C_{x^1z_1}  C_{z_1z_3}^2 C_{z_2z_3}^2 C_{z_1z_2} C_{z_2x^2}  +
 18 \; C_{x^1z_1} C_{z_1z_2} C_{z_2z_3}^3 C_{z_3z_1} C_{z_1 x^2} \Big] \; .
\end{split}
\]

\begin{figure}[ht!]
\begin{center}
\begin{tikzpicture}[scale=0.4]

\draw[black, thick,|-] (0,0) -- (1,0) node[below left] {54};

\filldraw[black] (1,0) circle (4pt);
\draw[black, thick] (1,0) to[out=90,in=180] (3,2);

\filldraw[black] (3,2) circle (4pt);
\draw[black, thick] (3,2) to[out=0,in=0] (3,4);
 \draw[black, thick] (3,2) to[out=180,in=180] (3,4);

\draw[black, thick] (3,2) to[out=0,in=90] (5,0);
\draw[black, thick] (1,0) to[out=-90,in=180] (3,-2);
\draw[black, thick] (3,-2) to[out=0,in=-90] (5,0);
\draw[black, thick] (1,0) -- (5,0);
\filldraw[black] (5,0) circle (4pt);
\draw[black, thick] (5,0) -- (6,0);

\draw[black, thick,|-] (7,0) -- (8,0) node[below left] {18};
\filldraw[black] (8,0) circle (4pt);
 \draw[black, thick] (8,0) to[out=0,in=0] (8,2);
 \draw[black, thick] (8,0) to[out=180,in=180] (8,2);

 \draw[black, thick] (8,0) -- (10,0);

\filldraw[black] (10,0) circle (4pt);
\draw[black, thick] (10,0) to[out=90,in=180] (12,2);
\draw[black, thick] (12,2) to[out=0,in=90] (14,0);
\draw[black, thick] (10,0) to[out=-90,in=180] (12,-2);
\draw[black, thick] (12,-2) to[out=0,in=-90] (14,0);
\draw[black, thick] (10,0) -- (14,0);
\filldraw[black] (14,0) circle (4pt);
\draw[black, thick] (14,0) -- (15,0);

 \draw[black, thick,|-] (16,0) -- (17,0) node[below left] {18};

\filldraw[black] (17,0) circle (4pt);
\draw[black, thick] (17,0) to[out=90,in=180] (19,2);
\draw[black, thick] (19,2) to[out=0,in=90] (21,0);
\draw[black, thick] (17,0) to[out=-90,in=180] (19,-2);
\draw[black, thick] (19,-2) to[out=0,in=-90] (21,0);
\draw[black, thick] (17,0) -- (21,0);
\filldraw[black] (21,0) circle (4pt);
\draw[black, thick] (21,0) -- (23,0);

\filldraw[black] (23,0) circle (4pt);

 \draw[black, thick] (23,0) to[out=0,in=0] (23,2);
 \draw[black, thick] (23,0) to[out=180,in=180] (23,2);

 \draw[black, thick] (23,0) -- (24,0);

\end{tikzpicture}

\begin{tikzpicture}[scale=0.4]

\draw[black, thick,|-] (0,0) -- (1,0) node[below left] {54};

\filldraw[black] (1,0) circle (4pt);
\draw[black, thick] (1,0) to[out=90,in=180] (3,2);

\filldraw[black] (3,2) circle (4pt);

\draw[black, thick] (3,2) to[out=0,in=90] (5,0);

\draw[black, thick] (1,0) to[out=0,in=-90] (3,2);

\draw[black, thick] (3,2) to[out=-90,in=180] (5,0);

\draw[black, thick] (1,0) to[out=-90,in=180] (3,-2);
\draw[black, thick] (3,-2) to[out=0,in=-90] (5,0);

\filldraw[black] (5,0) circle (4pt);
\draw[black, thick] (5,0) -- (6,0);

\draw[black, thick,|-] (8,0) -- (9,0) node[below left] {18};

\filldraw[black] (9,0) circle (4pt);

\draw[black, thick] (9,0) -- (10,0);

\draw[black, thick] (9,0) to[out=150,in=-90] (7,2);

\filldraw[black] (7,2) circle (4pt);

\draw[black, thick] (9,0) to[out=30,in=-90] (11,2);

\filldraw[black] (11,2) circle (4pt);

\draw[black, thick] (7,2) to[out=90,in=180] (9,3);
\draw[black, thick] (9,3) to[out=0,in=90] (11,2);

\draw[black, thick] (7,2) to[out=-90,in=180] (9,1);
\draw[black, thick] (9,1) to[out=0,in=-90] (11,2);

\draw[black, thick] (7,2) -- (11,2);

\end{tikzpicture}

\end{center}
    \caption{Tadpole-sunset graphs and a new primitive graph.}
    \label{fig:remaining}
\end{figure}
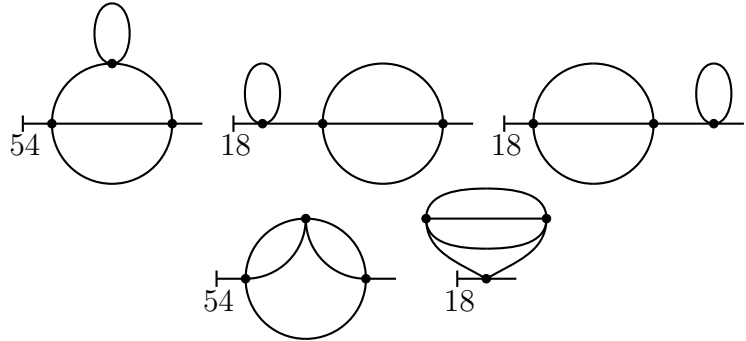

\begin{remark}
\textbf{Divergences and renormalization.}
\label{rem:renorm}
The perturbative procedure faces several issues, as a function of the particular model and the dimension $d$. 
One first problem, present even in zero dimensions with $A=1$, is that the series is divergent because the number of terms (i.e.\ the number of combinatorial maps) grows super exponentially and in order to rigorously make sense of the correlation functions one needs to use constructive field theory techniques which replace the perturbative expansion by convergent constructive expansions \cite{GlimmJaffe,Rivasseau:1991ub,salmhofer2007renormalization}.\footnote{This problem is also related to \emph{resurgence} \cite{Aniceto:2018bis}. For a link between resurgence and constructive methods in zero dimensions, see \cite{Benedetti:2022twd}.}

A second problem is that the integrals over the internal positions might be divergent, in which case one needs to renormalize the perturbative expansion. This depends not only on the interaction, here chosen to be a local $\phi^4$, but also on the covariance and the dimension $d$.
In the familiar case $A=C^{-1} = -\Delta +  m^2$, all the integrals over the internal positions are convergent for $d<2$. For $d=2$ the tadpoles diverge, but the divergences are subtracted to zero by Wick ordering, that is using as bare covariance:
\[ C^{-1} = -\Delta + m^2 +
 3 (-g) C^0_{xx} \;, \qquad C^0 = (-\Delta + m^2)^{-1} \; .\] 
 This renders all the diagrams finite for 
 $d<3$. In $d=3$ one needs to furthermore renormalize the sunset insertions which, in the  BPHZ prescription \cite{bogoliubow1957multiplikation,Hepp:1966eg,Zimmermann:1969jj}, 
 is achieved by using as bare covariance:
 \[ 
 C^{-1} = (C^0)^{-1} +
 3 (- g) C^0_{xx} + 6 (-g)^2 \int_y (C^0)^3_{xy} \;. \]
 The terms proportional to $g$ and $g^2$ are particular instances of counterterms, and they are treated perturbatively.
 In detail, taking the inverse in the sense of formal power series up to order $g^3$ we get:
 \[
 \begin{split}
  C_{xy} = & C^0_{xy} + C^0_{xz} \Big( - 3 (- g) C^0_{zz} - 6(-g)^2 \int_{z_1} (C^0)^3_{zz_1} \Big) C^0_{zy} \crcr
  & + C^0_{xz_1} (3 (-g) C^0_{z_1z_1}) C^0_{z_1z_2} (3(-g) C^0_{z_2z_2}) C^0_{z_2 y}
+ 2 C^0_{xz_1} (3 (-g) C^0_{z_1z_1}) C^0_{z_1z_2} \Big(6 (-g)^2 \int_{z_3} (C^0)^3_{z_2z_3} \Big) C^0_{z_2 y} \\
& - C^0_{xz_1} (3 (-g) C^0_{z_1z_1}) C^0_{z_1z_2} (3 (-g)C^0_{z_2z_2}) C^0_{z_2 z_3} (3 (-g) C^0_{z_3z_3}) C^0_{z_3y} + O(g^4)\;,
 \end{split}
 \]
and substituting this in the Feynman expansion we get up to order $g^3$ only finite terms for $0\le d<4$. The same structure of divergences arises for $C^{-1} = -\Delta +  x^2$ and they are subtracted by the same technique. 

The renormalization is more delicate in dimension $d=4$ where one needs to add an infinity of mass and coupling constant counterterms in order to subtract the divergences. The precise counterterms can be defined recursively and, in the BPHZ prescription, the recursion can be solved in terms of sums over forests of non overlapping divergent sub-graphs of products of Taylor localization operators \cite{Zimmermann:1969jj,Rivasseau:1991ub}. Rewriting the Feynman expansion in terms of renormalized couplings leads to finite $n$-point functions order by order at all orders in perturbation theory.
In $d>4$ instead the model is non-renormalizable, in the sense that at each perturbative order one needs to introduce counterterms also for new higher-order interactions.

Lastly, a more subtle problem is that, if in $d\geq 4$ one attempts to take the ultraviolet cutoff to infinity with finite bare coupling, then the renormalized coupling goes to zero \cite{Frohlich:1982tw,Aizenman:2019yuo}. This is the so-called triviality problem. 

These remarks are made here for the purpose of providing a quick overview, but we do not consider such problems in the rest of the paper.
\end{remark}

\section{Low orders of the stochastic expansion}
\label{app:loworderstoc}

We now turn to low-order examples for the stochastic expansion.
For concreteness, we consider again a single local interaction of order $q+1$.
Starting from Eq.~\eqref{eq:SPDTsoltree} and summing together the terms which are equal up to re-embedding, denoting collectively $\pmb{x} = (t,x)$ and introducing the Green function $\pmb{C}$, at low orders in the $\phi^{q+1}$ model we obtain:
\[
\begin{split}
  \Phi_{\pmb{x}}
 = & \pmb{C}_{\pmb{x} \pmb{x}_1} \xi_{\pmb{x}_1}
 + (- g) \int_{\pmb{x}_1} \pmb{C}_{ \pmb{x} \pmb{x}_1}
 \left[ \int_{\pmb{x}_2}\pmb{C}_{\pmb{x}_1 \pmb{x}_2}
 \xi_{\pmb{x}_2} \right]^{q} \crcr
&  +(-g)^2 q \int_{\pmb{x}_1} \pmb{C}_{\pmb{x} \pmb{x}_1}
   \left[ \int_{\pmb{x}'_2}\pmb{C}_{\pmb{x}_1 \pmb{x}'_2}
   \left[ \int_{\pmb{x}_3}\pmb{C}_{\pmb{x}'_2\pmb{x}_3}  \xi_{\pmb{x}_3} \right]^{q} \right]
  \left[ \int_{\pmb{x}_2} \pmb{C}_{\pmb{x}_1 \pmb{x}_2}
 \xi_{\pmb{x}_2} \right]^{q-1} \crcr
& + (-g)^3 q^2  \int_{\pmb{x}_1} \pmb{C}_{\pmb{x} \pmb{x}_1}
   \left[ \int_{\pmb{x}'_2}\pmb{C}_{\pmb{x}_1 \pmb{x}'_2}
   \left[ \int_{\pmb{x}'_3}\pmb{C}_{\pmb{x}'_2\pmb{x}'_3}
   \left[ \int_{\pmb{x}_4}\pmb{C}_{\pmb{x}'_3\pmb{x}_4} \xi_{\pmb{x}_4} \right]^{q} \right]
   \left[\int_{\pmb{x}_3}\pmb{C}_{\pmb{x}'_2\pmb{x}_3}  \xi_{\pmb{x}_3}\right]^{q-1} \right]
  \left[ \int_{\pmb{x}_2}\pmb{C}_{\pmb{x}_1 \pmb{x}_2}
 \xi_{\pmb{x}_2} \right]^{q-1} \crcr
&  + (-g)^3    \binom{q}{2} \int_{\pmb{x_1}} \pmb{C}_{\pmb{x} \pmb{x}_1}
   \left[ \pmb{C}_{\pmb{x}_1 \pmb{x}'_2}
   \left[ \int_{\pmb{x}_3}\pmb{C}_{\pmb{x}'_2\pmb{x}_3}  \xi_{\pmb{x}_3} \right]^{q} \right]^2
  \left[ \int_{\pmb{x}_2}\pmb{C}_{\pmb{x}_1 \pmb{x}_2}
 \xi_{\pmb{x}_2} \right]^{q-2} + O(g^4)\;,
\end{split}
\]
which is depicted in the case $q=3$ in Fig.~\ref{fig:tertree}.
We now evaluate some of the contributions to the stochastic two-point function in the local $\phi^4$ model at low orders. 

\paragraph{Connected contribution to $\Braket{\Phi_{(t,x)}\Phi_{(s,y)} }_{\xi}$ at order zero.} 
\begin{figure}[ht!]
    \begin{center}
\begin{tikzpicture}[scale=0.4]
\draw[black, thick] (0,0) -- (0,1);
\filldraw[black] (0,1) circle (4pt);

\draw[black, thick] (3,0) -- (3,1);
\filldraw[black] (3,1) circle (4pt);

\draw[red,thick] (0,1) edge [bend left=45](3,1);

\end{tikzpicture}
    \end{center}
    \caption{Zero order contribution from the mating of the two trees with no internal vertex. Each black dot represents a leafs decorated by a field $\xi$. The red line represents a Gaussian contraction of two such fields, $\braket{\xi_{(t,x)} \; \xi_{(t',x')}}_{\xi}=  2  \; \delta_{tt'} \delta_{xx'}$.}
    \label{fig:zeroord}
\end{figure}
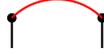

At zero order the connected part of the stochastic 2-point expectation  consists in two trees with no internal vertex mated, as depicted in Fig.~\ref{fig:zeroord}:
\[ \int_{\pmb {z}} \pmb{C}_{\pmb{x} \pmb{z}}
  \; 2 \; \pmb{C}_{\pmb{y} \pmb{z}}=  \left( \frac{e^{- |t-s| A  }}{A}  \right)_{xy} \; ,
\]
which is exactly the stochastic two-point function in the free case and at equal times reproduces the path integral two-point function.

\paragraph{Connected contribution to $\Braket{\Phi_{(t,x)}\Phi_{(t,y)} }_{\xi}$ at order $g$.}
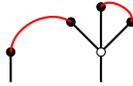
\begin{figure}[ht!]
    \begin{center}

\begin{tikzpicture}[scale=0.4]
\draw[black, thick] (0,0) -- (0,1);
\filldraw[black] (0,1) circle (4pt);

\draw[black, thick] (3,0) -- (3,1);

\draw[black, thick] (3,1) -- (2,2);
\filldraw[black] (2,2) circle (4pt);
\draw[black, thick] (3,1) -- (3,2.5);
\filldraw[black] (3,2.5) circle (4pt);
\draw[black, thick] (3,1) -- (4,2);
\filldraw[black] (4,2) circle (4pt);

\filldraw[fill=white] (3,1) circle (4pt);

\draw[red,thick] (0,1) edge [bend left=60](2,2);
\draw[red,thick] (4,2) edge [bend right=90](3,2.5);

\end{tikzpicture}
    \end{center}
    \caption{The connected first order contributions.}
        \label{fig:SPDEtad}
\end{figure}

At first order in the coupling we get two contributions, as depicted in Fig.~\ref{fig:SPDEtad},
corresponding to the two trees, one with one internal vertex $g$ and one with no internal vertex $g$, mated via all possible contractions of the four $\xi$ fields on the leaves:
 \[
 \begin{split}
  & 3 \int_{\pmb{x}_1}\pmb{C}_{\pmb{x} \pmb{x}_1}
    \left[ \int_{\pmb{x}_2}\pmb{C}_{\pmb{x}_1 \pmb{x}_2}
    2 \pmb{C}_{\pmb{x}_1\pmb{x}_2} \right]
    \left[ \int_{\pmb{x}_3 }\pmb{C}_{\pmb{x}_1\pmb{x}_3 }
     2 \pmb{C}_{\pmb{y} \pmb{x}_3 } \right]
  +   3 \int_{\pmb{x}_2} \left[ \int_{\pmb{x}_1}\pmb{C}_{\pmb{x} \pmb{x}_1}
   2\pmb{C}_{\pmb{x}_2 \pmb{x}_1 } \right]
    \left[ \int_{\pmb{z}} \pmb{C}_{\pmb{x}_2 \pmb{z} }
     2 \pmb{C}_{\pmb{x}_2\pmb{z}} \right]
    \pmb{C}_{\pmb{y}  \pmb{x}_2 } \crcr
& \qquad = 3 \int_{\pmb{x}_1}\left( \frac{1}{A}\right)_{x_1x_1} \pmb{C}_{\pmb{x} \pmb{x}_1}
  \left( \frac{e^{- |t_1-t| A  }}{A}  \right)_{x_1y} +
  3  \int_{\pmb{x}_2} \left( \frac{1}{A}\right)_{x_2x_2}
  \left( \frac{e^{- |t-t_2| A  }}{A}  \right)_{xx_2}  \pmb{C}_{\pmb{y}\pmb{x_2} } \; ,
  \end{split}
 \]
and the two terms combine into ($x_1$ and $x_2$ are integrated):
 \[
 \begin{split}
& 3 \int_{-\infty}^t dt_1 \; \left( \frac{1}{A} \right)_{x_1x_1} \left[  ( e^{-(t-t_1)A})_{xx_1} \left( \frac{e^{- (t-t_1 ) A  }}{A}  \right)_{x_1y} 
 +  \left( \frac{e^{- (t-t_1) A  }}{A}  \right)_{xx_1}
  ( e^{-(t-t_1)A})_{yx_1}   \right] \crcr
& =3 \int_{-\infty}^t dt_1 \; \left( \frac{1}{A} \right)_{x_1x_1} \frac{d}{dt_1} \left[  \left( \frac{e^{- (t-t_1) A  }}{A}  \right)_{xx_1}  \left( \frac{e^{- (t-t_1 ) A  }}{A}  \right)_{x_1y}  \right]  = 3 \left( \frac{1}{A} \right)_{xx_1}\left( \frac{1}{A} \right)_{x_1x_1}\left( \frac{1}{A} \right)_{yx_1} \;.
 \end{split}
 \]
reproducing the tadpole contribution to the path integral 2-point function, including the combinatorial factor 3 (see Fig.~\ref{fig:firstord}).

\paragraph{Sunset contributions to $\Braket{\Phi_{(t,x)}\Phi_{(t,y)} }_{\xi}$ at order $g^2$.}
\begin{figure}[ht!]
    \begin{center}
    
\begin{tikzpicture}[scale=0.4]
\draw[black, thick] (0,0) -- (0,1);
\filldraw[black] (0,1) circle (4pt);

\draw[black, thick] (4,0) -- (4,1);

\draw[black, thick] (4,1) -- (3,2);
\filldraw[black] (3,2) circle (4pt);
\draw[black, thick] (4,1) -- (4,2.5);
\filldraw[black] (4,2.5) circle (4pt);
\draw[black, thick] (4,1) -- (5,2);
\filldraw[black] (5,2) circle (4pt);

\draw[black, thick] (3,2) -- (2,3);
\filldraw[black] (2,3) circle (4pt);
\draw[black, thick] (3,2) -- (3,3.5);
\filldraw[black] (3,3.5) circle (4pt);
\draw[black, thick] (3,2) -- (4,3);
\filldraw[black] (4,3) circle (4pt);

\filldraw[fill=white] (4,1) circle (4pt);
\filldraw[fill=white] (3,2) circle (4pt);

\draw[red,thick] (0,1) edge [bend left=60](2,3);
\draw[red,thick] (4,2.5) arc (-115:115:0.3);
\draw[red,thick] (5,2) arc (-45:150:1.25);

\draw[black, thick] (8,0) -- (8,1);

\draw[black, thick] (8,1) -- (7,2);
\filldraw[black] (7,2) circle (4pt);
\draw[black, thick] (8,1) -- (8,2.5);
\filldraw[black] (8,2.5) circle (4pt);
\draw[black, thick] (8,1) -- (9,2);
\filldraw[black] (9,2) circle (4pt);

\filldraw[fill=white] (8,1) circle (4pt);

\draw[black, thick] (11,0) -- (11,1);

\draw[black, thick] (11,1) -- (10,2);
\filldraw[black] (10,2) circle (4pt);
\draw[black, thick] (11,1) -- (11,2.5);
\filldraw[black] (11,2.5) circle (4pt);
\draw[black, thick] (11,1) -- (12,2);
\filldraw[black] (12,2) circle (4pt);

\filldraw[fill=white] (11,1) circle (4pt);

\draw[red] (10,2) arc (0:180:0.5);
\draw[red] (11,2.5) arc (0:180:1.5);
\draw[red] (12,2) arc (0:180:2.5);

\end{tikzpicture}
    \end{center}
    \caption{Mated trees leading to sunset graphs.}
    \label{fig:SPDEsunsets}
\end{figure}
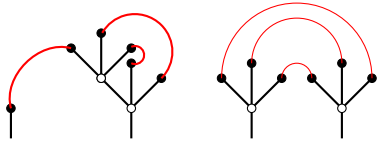

There are multiple contributions to the stochastic 2-point function at second order in the coupling. We focus here on the contributions to the sunset graph depicted in Fig.~\ref{fig:SPDEsunsets}. There are three mated trees contributing a total of: 
\[
\begin{split}
 & 3\cdot 6 \int_{\pmb{x}_1,\pmb{x}_2}\pmb{C}_{\pmb{x}\pmb{x}_1}
  \pmb{C}_{\pmb{x}_1\pmb{x}_2}
\left[\int_{\pmb{x}_3} \pmb{C}_{\pmb{x}_1\pmb{x}_3} 2  \pmb{C}_{\pmb{x}_2\pmb{x}_3} \right]^2
 \left[\int_{\pmb{z}} \pmb{C}_{\pmb{x}_2\pmb{z} } 2  \pmb{C}_{\pmb{y}\pmb{z}} \right] \crcr
& + 3\cdot 6 \int_{\pmb{y}_1 \pmb{y}_2}
\left[\int_{\pmb{z}} \pmb{C}_{\pmb{x}\pmb{z} }  2  \pmb{C}_{\pmb{y}_2\pmb{z}} \right]
\pmb{C}_{\pmb{y}_1\pmb{y}_2}
\left[ \int_{\pmb{z}} \pmb{C}_{\pmb{y}_2\pmb{z} }  2 \pmb{C}_{\pmb{y}_1\pmb{z}}  \right]^2
\pmb{C}_{\pmb{y}\pmb{y}_1} +
6 \int_{\pmb{x_1}\pmb{y}_1} \pmb{C}_{\pmb{x}\pmb{x}_1}
\left[\int_{\pmb{z}} \pmb{C}_{\pmb{x}_1\pmb{z} } 2 \pmb{C}_{\pmb{y}_1 \pmb{z}}  \right]^3
\pmb{C}_{\pmb{y}\pmb{y}_1} \;,
\end{split}
\]
and setting $s=t$ this writes in detail ($x_1$ and $x_2$ are integrated):
\[
\begin{split}
& 18 \int_{-\infty}^t dt_1 ( e^{-(t-t_1)A})_{xx_1} \int_{-\infty}^{t_1} dt_2 ( e^{-(t_1-t_2)A})_{x_1x_2}
 \left( \frac{e^{- (t_1-t_2) A  }}{A}  \right)^2_{x_1x_2}
 \left( \frac{e^{- (t-t_2) A  }}{A}  \right)_{x_2y} +18 \;(x\leftrightarrow y)\crcr
& + 6
 \int_{-\infty}^t dt_1
 ( e^{-(t-t_1)A})_{xx_1}
  \left( \frac{e^{- |t_1-t_2| A  }}{A}  \right)^3_{x_1y_1}
  \int_{-\infty}^{t} dt_2
   ( e^{-(t-t_2)A})_{y_1y} \; .
\end{split}
\]
The first term reduces to:
\[
 6 \int_{-\infty}^t dt_1 ( e^{-(t-t_1)A})_{xx_1} \int_{-\infty}^{t_1} dt_2  \frac{d}{dt_2}\left[  \left( \frac{e^{- (t_1-t_2) A  }}{A}  \right)^3_{x_1x_2} \right]
 \left( \frac{e^{- (t-t_2) A  }}{A}  \right)_{x_2y}  \;.
\]
Writing the second term similarly, splitting the last term according to 
$t_2<t_1$ and $t_1<t_2$ and changing variables $t_1\leftrightarrow t_2$ in this second case we get:
\[
6 \int_{-\infty}^t dt_1 ( e^{-(t-t_1)A})_{xx_1} \int_{-\infty}^{t_1} dt_2   \frac{d}{dt_2}\left[  \left( \frac{e^{- (t_1-t_2) A  }}{A}  \right)^3_{x_1x_2}   \left( \frac{e^{- (t-t_2) A  }}{A}  \right)_{x_2y}  \right]  + (x \leftrightarrow y) \; . 
\]
Observing that the boundary terms at $t_2=-\infty$ are $0$ we obtain:
\[
\begin{split}
& 6 \int_{-\infty}^t dt_1 
\left(  \frac{1}{A}  \right)^3_{x_1x_2} 
\Bigg[ 
 ( e^{-(t-t_1)A})_{xx_1} \left( \frac{e^{- (t-t_1) A  }}{A}  \right)_{x_2y}  + \left( \frac{e^{- (t-t_1) A  }}{A}  \right)_{xx_1}  ( e^{-(t-t_1)A})_{x_2y } 
 \Bigg] \crcr
& = 6 \left(  \frac{1}{A}  \right)^3_{x_1x_2}  
\int_{-\infty}^t dt_1  \frac{d}{dt_1}
\Bigg[ \left( \frac{e^{- (t-t_1) A  }}{A}  \right)_{xx_1} 
\left( \frac{e^{- (t-t_1) A  }}{A}  \right)_{x_2y} 
 \Bigg]  =6 \left(  \frac{1}{A}  \right)_{xx_1}  \left(  \frac{1}{A}  \right)^3_{x_1x_2}  \left(  \frac{1}{A}  \right)_{x_2y} \;,
\end{split}
\]
reproducing the sunset Feynman amplitude, including the combinatorial factor 6 (see Fig.~\ref{fig:firstord}).

\paragraph{Connected contribution to $\Braket{\prod_{i=1}^{q+1} \Phi_{(t,x_i)}  }_{\xi}$ in the $\phi^{q+1}$ model at order $g$.}
As a final example we consider the first order contribution to the connected $(q+1)$-point function in the $\phi^{q+1}$ models. In the Feynman expansion this is:
\[
 q! \int_x \prod_{i=1}^{q+1} C_{x_ix} \;.
\]
In the stochastic approach we have $q$ trees contributing, each with a factor $q!$ from the $\xi$ contractions yielding ($x$ is integrated):
\[
\begin{split}
& q!  \sum_{i=1}^{q+1} \int_{\pmb{x}} \pmb{C}_{\pmb{x}_i\pmb{x}}
  \prod_{j\neq i} \left[ \int_{\pmb{z}_j} \pmb{C}_{\pmb{x} \pmb{z}_j} 2  \pmb{C}_{\pmb{x}_j \pmb{z}_j} \right]
  = q!  \sum_{i=1}^{q+1} \int_{-\infty}^t
    dt' \;\int_x \; (e^{-(t-t') A})_{x_i x} \prod_{j\neq i}
    \left( \frac{e^{-(t-t') A}}{A} \right)_{x_j x}  \crcr
&  \qquad =  q! \int_x \int_{-\infty}^t
    dt'  \frac{d}{dt'} \left[ \prod_{j}
    \left( \frac{e^{-(t-t') A}}{A} \right)_{x_j x} \right]  = q! \int_x \prod_{j} \left(\frac{1}{A} \right)_{x_j x} \; .
\end{split}
\]

\section{Low order examples of Taylor interpolation}
\label{app:lowordex}

We present several explicit examples of the Taylor interpolation of the Feynman amplitude of a graph. At each step, we draw the edges of the graph that have not yet been assigned into tree or noise edges as dashed edges, the candidate tree edges as blue edges, the tree edges as black edges and the noise edges as black edges with a red cross.

\paragraph{Order $0$.} At this order there is no interpolation to be made and we get
$ (\frac{1}{A})_{xy}$ which turns out to be equal to the stochastic amplitude of an edge with a $\xi$ insertion:
\[
  \left( \frac{1}{A} \right)_{xy} =  \left( \frac{e^{- |t-s| A  }}{A}  \right)_{xy} \bigg|_{t=s} 
\]

\begin{figure}[ht!]
    \centering
\begin{tikzpicture}[scale=0.4]
\draw[black, dashed,thick] (0,0) -- (2,0); 

\draw (3,0) node {=};

\draw[black,thick] (4.5,0) -- (6.5,0);

\draw (4.5,0) node[left] {$t$};
\draw (5.5,0) node[cross=3pt,red] {};
\draw (6.5,0) node[right] {$t$};

\end{tikzpicture}
    \caption{The one-edge graph.}
    \label{fig:placeholder}
\end{figure}

\paragraph{Order $g$.} We have $3$ tadpoles and we perform a first level expansion for every tadpole:
\[
\begin{split}
& \left( \frac{1}{A} \right)_{x x_1} 
\left( \frac{1}{A} \right)_{x_1 x_1}  
\left( \frac{1}{A} \right)_{x_1 y}  
 = \left( \frac{1}{A} \right)_{x_1 x_1} \int_{-\infty}^{t} dt_1
  \frac{d}{dt_1} \left[ \left( \frac{e^{-(t-t_1)A}}{A}\right)_{xx_1} 
  \left(  \frac{e^{-(t-t_1)A}}{A} \right)_{x_1y}  \right] \crcr
&  = \int_{-\infty}^{t} dt_1 \left( e^{-(t-t_1)A} \right)_{xx_1} 
  \left( \frac{1}{A} \right)_{x_1 x_1} \left(  \frac{e^{-(t-t_1)A}}{A} \right)_{x_1y} + 
  \int_{-\infty}^{t} dt_1 \left(  \frac{e^{-(t-t_1)A}}{A} \right)_{x x_1}  
  \left( \frac{1}{A} \right)_{x_1 x_1} \left( e^{-(t-t_1)A} \right)_{x_1y}  \;.
\end{split}
\]

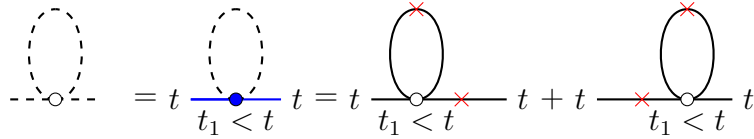
\begin{figure}[ht!]
    \centering
\begin{tikzpicture}[scale=0.6]

\draw[black,dashed, thick] (0,0) -- (1,0);
\draw[black,dashed, thick] (1,0) -- (2,0);
\draw[black,dashed, thick] (1,0) to[out=0,in=0] (1,2);
\draw[black, dashed, thick] (1,0) to[out=180,in=180] (1,2);
\filldraw[fill=white] (1,0) circle (4pt);
\draw (3,0) node {=};

\draw[black,blue, thick] (4,0) -- (5,0);
\draw[black,blue, thick] (4,0) -- (6,0);
\draw[black,dashed, thick] (5,0) to[out=0,in=0] (5,2);
\draw[black, dashed, thick] (5,0) to[out=180,in=180] (5,2);
\filldraw[fill=blue] (5,0) circle (4pt)  node[below] {$t_1<t$};
\draw (4,0) node [left] {$t$};
\draw (6,0) node [right] {$t$};

\draw (7,0) node {=};

\draw[black, thick] (8,0) -- (9,0);
\draw (8,0) node [left] {$t$};

\draw[black,thick] (9,0) -- (11,0);
\draw[black,thick] (9,0) to[out=0,in=0] (9,2);
\draw[black,thick] (9,0) to[out=180,in=180] (9,2);

\draw (9,2) node[cross=3pt,red] {};
\draw (10,0) node[cross=3pt,red] {};
\draw (11,0) node [right] {$t$};

\filldraw[fill=white] (9,0) circle (4pt)  node[below] {$t_1<t$};

\draw (11.5,0) node[right] {+};

\draw[black, thick] (13,0) -- (15,0)  node[below] {$t_1<t$};
\draw (13,0) node [left] {$t$};

\draw[black,thick] (15,0) -- (16,0);
\draw (16,0) node [right] {$t$};

\draw[black,thick] (15,0) to[out=0,in=0] (15,2);
\draw[black,thick] (15,0) to[out=180,in=180] (15,2);

\draw (15,2) node[cross=3pt,red] {};
\draw (14,0) node[cross=3pt,red] {};

\filldraw[fill=white] (15,0) circle (4pt);

\end{tikzpicture}
    \caption{Taylor interpolation of tadpoles. This is the inverse of what depicted in Fig.~\ref{fig:SPDEtad}.}
    \label{fig:taylortadpole}
\end{figure}

\paragraph{Sunset at $g^2$.} The successive interpolations for the sunset are depicted in Fig.~\ref{fig:candidatesunset}.

\begin{figure}[ht!]

\begin{center}
    
\begin{tikzpicture}[scale=0.4]

\draw[black,dashed, thick] (0,0) -- (1,0);
\draw[black,dashed, thick] (1,0) to[out=90,in=180] (3,2);
\draw[black,dashed, thick] (3,2) to[out=0,in=90] (5,0);
\draw[black,dashed, thick] (1,0) to[out=-90,in=180] (3,-2);
\draw[black,dashed, thick] (3,-2) to[out=0,in=-90] (5,0);
\draw[black,dashed, thick] (1,0) -- (5,0);
\draw[black,dashed, thick] (5,0) -- (6,0) ;
\filldraw[fill=white]  (1,0) circle (4pt);
\filldraw[fill=white]  (5,0) circle (4pt);
\draw (6,0) node[right] {=};

\draw[black,blue, thick] (8,0) -- (9,0);
\draw (8,0) node[left] {$t$};
\draw[black,dashed, thick] (9,0) to[out=90,in=180] (11,2);
\draw[black,dashed, thick] (11,2) to[out=0,in=90] (13,0);
\draw[black,dashed, thick] (9,0) to[out=-90,in=180] (11,-2);
\draw[black,dashed, thick] (11,-2) to[out=0,in=-90] (13,0);
\draw[black,dashed, thick] (9,0) -- (13,0);
\draw[black,blue, thick] (13,0) -- (14,0);
\draw (14,0) node[right] {$t$};

\filldraw[fill=blue]  (9,0) circle (4pt);
\draw (9,0) node[below] {$t_1<t$};

\filldraw[fill=blue]  (13,0) circle (4pt);
\draw (13,0) node[below] {$t_1<t$};

\draw (14.5,0) node[right] {=};

\draw[black, thick] (17,0) -- (18,0);
\draw (17,0) node[left] {$t$};
\draw[black,blue, thick] (18,0) to[out=90,in=180] (20,2);
\draw[black,blue, thick] (20,2) to[out=0,in=90] (22,0);
\draw[black,blue, thick] (18,0) to[out=-90,in=180] (20,-2);
\draw[black,blue, thick] (20,-2) to[out=0,in=-90] (22,0);
\draw[black,blue, thick] (18,0) -- (22,0);
\draw[black,blue, thick] (22,0) -- (23,0);
\draw (23,0) node[right] {$t$};

\filldraw[fill=black]  (18,0) circle (6pt);
\draw (18,0) node[below] {$t_1<t$};

\filldraw[fill=blue]  (22,0) circle (4pt);
\draw (22,0) node[below] {$t_2<t_1$};

\draw (24,0) node[right] {+};

\draw[black,blue, thick] (26.5,0) -- (27.5,0);
\draw (26.5,0) node[left] {$t$};
\draw[black,blue, thick] (27.5,0) to[out=90,in=180] (29.5,2);
\draw[black,blue, thick] (29.5,2) to[out=0,in=90] (31.5,0);
\draw[black,blue, thick] (27.5,0) to[out=-90,in=180] (29.5,-2);
\draw[black,blue, thick] (29.5,-2) to[out=0,in=-90] (31.5,0);
\draw[black,blue, thick] (27.5,0) -- (31.5,0);
\draw[black,black, thick] (31.5,0) -- (32.5,0);
\draw (32.5,0) node[right] {$t$};

\filldraw[fill=blue]  (27.5,0) circle (4pt);
\draw (27.5,0) node[below] {$t_2<t_1$};

\filldraw[fill=black]  (31.5,0) circle (6pt);
\draw (31.5,0) node[below] {$t_1<t$};

\end{tikzpicture}

\end{center}
    \caption{Taylor interpolation for the sunset graph.}
    \label{fig:candidatesunset}
\end{figure}
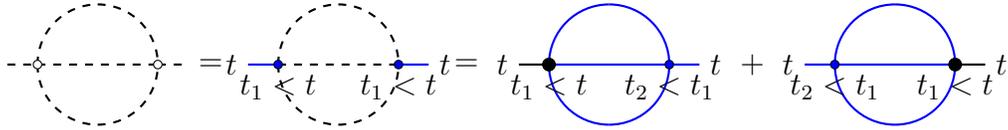

The first interpolation yields two possible tree edges: the left external edge or the right external edge. Once one of the two external external edges is picked, the second interpolation has 4 choices for choosing a new tree edge: either the second external edge, or any of the three internal edges. This is displayed in Fig.~\ref{fig:susnshiefinal}.
The factors $3 \times$ count which internal edge is a tree edge (hence does not have a cross). The terms with crosses on all the internal edges consist in two recursive forests, which we combine into two forests which respect the tree partial order (that $t_1$ and $t_2$ are not ordered as they belong to different trees).

\begin{figure}
\centering
\begin{subfigure}
        \centering
 
\begin{tikzpicture}[scale=0.4]

\draw (-1,0) node {$3 \times $};

\draw[black, thick] (1,0) -- (2,0);
\draw (1,0) node[left] {$t$};

\draw[black, thick] (2,0) to[out=90,in=180] (4,2);
\draw (4,2) node[cross=3pt,red] {};
\draw[black,thick] (4,2) to[out=0,in=90] (6,0);
\draw[black,thick] (2,0) to[out=-90,in=180] (4,-2);
\draw[black, thick] (4,-2) to[out=0,in=-90] (6,0);
\draw (4,-2) node[cross=3pt,red] {};

\draw[black, thick] (2,0) -- (6,0);
\draw[black, thick] (6,0) -- (8,0);
\draw (7,0) node[cross=3pt,red] {};
\draw (8,0) node[right] {$t$};

\filldraw[fill=black]  (2,0) circle (6pt);
\draw (2,0) node[below] {$t_1<t$};

\filldraw[fill=black]  (6,0) circle (6pt);
\draw (6,0) node[below] {$t_2<t_1$};

\draw (9,0) node[right] {+};

\draw[black, thick] (11.5,0) -- (12.5,0);
\draw (11.5,0) node[left] {$t$};

\draw[black, thick] (12.5,0) to[out=90,in=180] (14.5,2);
\draw (14.5,2) node[cross=3pt,red] {};
\draw[black,thick] (14.5,2) to[out=0,in=90] (16.5,0);
\draw[black,thick] (12.5,0) to[out=-90,in=180] (14.5,-2);
\draw[black, thick] (14.5,-2) to[out=0,in=-90] (16.5,0);
\draw (14.5,-2) node[cross=3pt,red] {};

\draw[black, thick] (12.5,0) -- (16.5,0);
\draw (14.5,0) node[cross=3pt,red] {};

\draw[black, thick] (16.5,0) -- (17.5,0);
\draw (17.5,0) node[right] {$t$};

\filldraw[fill=black]  (12.5,0) circle (6pt);
\draw (12.5,0) node[below] {$t_1<t$};

\filldraw[fill=black]  (16.5,0) circle (6pt);
\draw (16.5,0) node[below] {$t_2<t_1$};

\end{tikzpicture}
    \end{subfigure}\hspace{0.05\textwidth}%
\begin{subfigure} 
        \centering
\begin{tikzpicture}[scale=0.4]

\draw (-1,0) node {$3 \times $};

\draw[black, thick] (1,0) -- (3,0);
\draw (1,0) node[left] {$t$};

\draw (2,0) node[cross=3pt,red] {};

\draw[black, thick] (3,0) to[out=90,in=180] (5,2);
\draw (5,2) node[cross=3pt,red] {};
\draw[black,thick] (5,2) to[out=0,in=90] (7,0);
\draw[black,thick] (3,0) to[out=-90,in=180] (5,-2);
\draw[black, thick] (5,-2) to[out=0,in=-90] (7,0);
\draw (5,-2) node[cross=3pt,red] {};

\draw[black, thick] (3,0) -- (7,0);
\draw[black, thick] (7,0) -- (8,0);
\draw (8,0) node[right] {$t$};

\filldraw[fill=black]  (3,0) circle (6pt);
\draw (3,0) node[below] {$t_2<t_1$};

\filldraw[fill=black]  (7,0) circle (6pt);
\draw (7,0) node[below] {$t_1<t$};

\draw (9,0) node[right] {+};

\draw[black, thick] (11.5,0) -- (12.5,0);
\draw (11.5,0) node[left] {$t$};

\draw[black, thick] (12.5,0) to[out=90,in=180] (14.5,2);
\draw (14.5,2) node[cross=3pt,red] {};
\draw[black,thick] (14.5,2) to[out=0,in=90] (16.5,0);
\draw[black,thick] (12.5,0) to[out=-90,in=180] (14.5,-2);
\draw[black, thick] (14.5,-2) to[out=0,in=-90] (16.5,0);
\draw (14.5,-2) node[cross=3pt,red] {};

\draw[black, thick] (12.5,0) -- (16.5,0);
\draw (14.5,0) node[cross=3pt,red] {};

\draw[black, thick] (16.5,0) -- (17.5,0);
\draw (17.5,0) node[right] {$t$};

\filldraw[fill=black]  (12.5,0) circle (6pt);
\draw (12.5,0) node[below] {$t_2<t_1$};

\filldraw[fill=black]  (16.5,0) circle (6pt);
\draw (16.5,0) node[below] {$t_1<t$};

\end{tikzpicture}
    \end{subfigure}
    
    \caption{The four forests in the Taylor expansion of the sunset. The two trees in each forest are rooted at the external vertices. To be compared with Fig.~\ref{fig:SPDEsunsets}.
    }
    \label{fig:susnshiefinal}
\end{figure}
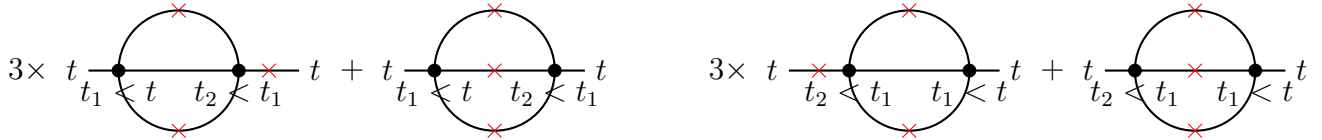

In equations this writes as:
\[
\begin{split}
& \left( \frac{1}{A} \right)_{xx_1}
\left( \frac{1}{A} \right)^3_{x_1y_1}
\left( \frac{1}{A} \right)_{y_1y} = 
\left( \frac{1}{A} \right)^3_{x_1y_1} \int_{-\infty}^{t} dt_1  \;
 \frac{d}{dt_1} \left[  \left( \frac{e^{-(t-t_1)A}}{A}\right)_{xx_1} 
  \left(  \frac{e^{-(t-t_1)A}}{A} \right)_{y_1y}  \right] \crcr
& =\int_{-\infty}^{t} dt_1  \; 
   \left( e^{-(t-t_1)A} \right)_{xx_1} \left( \frac{1}{A} \right)^3_{x_1y_1}
  \left(  \frac{e^{-(t-t_1)A}}{A} \right)_{y_1y}  +  
 \int_{-\infty}^{t} dt_1  \; 
   \left(  \frac{e^{-(t-t_1)A}}{A} \right)_{xx_1}   
  \left( \frac{1}{A} \right)^3_{x_1y_1}   \left( e^{-(t-t_1)A} \right)_{y_1y}  \crcr
& = \int_{-\infty}^{t} dt_1  \; 
   \left( e^{-(t-t_1)A} \right)_{xx_1} \int_{-\infty}^{t_1} dt_2 
   \frac{d}{dt_2} \left[ 
   \left( \frac{ e^{-(t_1-t_2)A} }{A} \right)^3_{x_1y_1}
  \left(  \frac{e^{-(t-t_2)A}}{A} \right)_{y_1y}  \right] \crcr
  & +  
 \int_{-\infty}^{t} dt_1  \;  \int_{-\infty}^{t_1} dt_2 
   \frac{d}{dt_2} \left[ 
   \left(  \frac{e^{-(t-t_2)A}}{A} \right)_{xx_1}   
  \left( \frac{ e^{-(t_1-t_2)A} }{A} \right)^3_{x_1y_1} \right]   \left( e^{-(t-t_1)A} \right)_{y_1y}  \;.
\end{split}
\]
Computing the second order interpolation we obtain the sum over four terms:
\[
\begin{split}
& \int_{-\infty}^{t} dt_1  \; 
   \left( e^{-(t-t_1)A} \right)_{xx_1} \int_{-\infty}^{t_1} dt_2 \crcr
& \qquad \left[ 3    \left( e^{-(t_1-t_2)A}  \right)_{x_1y_1}
   \left( \frac{ e^{-(t_1-t_2)A} }{A} \right)^2_{x_1y_1}
  \left(  \frac{e^{-(t-t_2)A}}{A} \right)_{y_1y} +     \left( \frac{ e^{-(t_1-t_2)A} }{A} \right)^3_{x_1y_1} \left( e^{-(t-t_2)A} \right)_{y_1y} 
 \right] 
 \crcr
& + \int_{-\infty}^{t} dt_1  \;  \int_{-\infty}^{t_1} dt_2  \crcr
& \qquad \left[ 
 \left(  \frac{e^{-(t-t_2)A}}{A} \right)_{xx_1}  
3    \left( e^{-(t_1-t_2)A}  \right)_{x_1y_1}
   \left( \frac{ e^{-(t_1-t_2)A} }{A} \right)^2_{x_1y_1}
 +   
  \left( e^{-(t-t_2)A} \right)_{xx_1} 
  \left( \frac{ e^{-(t_1-t_2)A} }{A} \right)^3_{x_1y_1} 
 \right]   \left( e^{-(t-t_1)A} \right)_{y_1y}      \; .
\end{split}
\]
The last terms in the two parentheses add up and relabeling $t_1\leftrightarrow t_2$ they reconstitute:
\[
\int_{-\infty}^{t} dt_1  \; \int_{-\infty}^{t} dt_2
 \;  \left( e^{-(t-t_1)A} \right)_{xx_1}  
   \left( \frac{ e^{- | t_1-t_2| A} }{A} \right)^3_{x_1y_1} \left( e^{-(t-t_2)A} \right)_{y_1y}  \; .
\]



\providecommand{\href}[2]{#2}\begingroup\raggedright\endgroup


\end{document}